\newcommand{\psat}{\textsc{Planar3SAT}\xspace}
\newcommand{\Delivery}{\textsc{Delivery}\xspace}
\newcommand{\TDelivery}{\textsc{Fast\-Delivery}\xspace}
\newcommand{\TWDelivery}{\textsc{Fast\-Efficient\-Delivery}\xspace}
\newcommand{\CDelivery}{\textsc{Combined\-Delivery}\xspace}
\newcommand{\pw}{\omega}
\newcommand{\pv}{\upsilon}
\newcommand{\E}{\mathcal{E}}
\newcommand{\T}{\mathcal{T}}
\newcommand{\NP}{$\mathrm{NP}$}
\newcommand{\APSP}{\mathrm{APSP}}
\newcommand{\bigO}{\mathcal{O}}
\DeclareMathOperator*{\argmin}{arg\,min}
\DeclareMathOperator*{\argmax}{arg\,max}
\providecommand{\prel}{\sim_{\mathit{P}}}
\theoremstyle{plain}
\newtheorem{theorem}{Theorem}
\newtheorem{lemma}[theorem]{Lemma}
\newtheorem{corollary}[theorem]{Corollary}
\newtheoremstyle{app}% name of the style to be used
  {}% measure of space to leave above the theorem. E.g.: 3pt
  {}% measure of space to leave below the theorem. E.g.: 3pt
  {\itshape}% name of font to use in the body of the theorem
  {}% measure of space to indent
  {\bfseries}% name of head font
  {}% punctuation between head and body
  { }% space after theorem head; " " = normal interword space
  {\thmname{#1}\thmnote{ #3}.}% Manually specify head
\theoremstyle{app}
\newtheorem*{theorem*}{Theorem}
\newtheorem*{lemma*}{Lemma}
\begin{document}

\title{Collective fast delivery by energy-efficient agents\footnote{An extended abstract of this paper appeared at MFCS 2018~\cite{mfcs18}.
It erroneously claimed the single agent approach for variant (iii) to have approximation ratio 2.}}

\author[1]{Andreas Bärtschi}
\author[1]{Daniel Graf}
\author[2]{Mat\'{u}\v{s} Mihal\'{a}k}
\affil[1]{
	ETH Zürich, Department of Computer Science, Switzerland\\
	\texttt{$\left\{\right.$andreas.baertschi,\,daniel.graf$\left.\right\}$@inf.ethz.ch}
}
\affil[2]{
	Department of Data Science and Knowledge Engineering, Maastricht University, Netherlands\\
	\texttt{matus.mihalak@maastrichtuniversity.nl}
}

\maketitle

%%%% Abstract
%%%% --------
\begin{abstract}
	We consider $k$ mobile agents initially located at distinct nodes of an
	undirected graph (on $n$ nodes, with edge lengths).
	The agents have to deliver a single item from a given source node $s$ to a given target
	node $t$.  
	The agents can move along the edges of the graph, starting at time $0$,
	with respect to the following: 
	Each agent $i$ has a \emph{weight}~$\pw_i$ that defines the rate of
	energy consumption while travelling a distance in the graph, and a
	\emph{velocity}~$\pv_i$ with which it can move.
		
	We are interested in schedules (operating the $k$ agents) that result in
	a small \emph{delivery time} $\T$ (time when the item arrives at
	$t$), and small \emph{total energy consumption} $\E$.
	Concretely, we ask for a schedule that: either (i) Minimizes $\T$, (ii)
	Minimizes lexicographically $(\T,\E)$ (prioritizing fast delivery), or
	(iii) Minimizes  $\epsilon\cdot \T + (1-\epsilon)\cdot \E$, for a given
	$\epsilon \in \left( 0,1 \right)$.
		
	We show that $(i)$ is solvable in polynomial time, and show that (ii) is
	polynomial-time solvable for uniform velocities and solvable in time
	$\bigO(n+k\log k)$ for arbitrary velocities on paths, but in general is
	NP-hard even on planar graphs. As a corollary of our hardness result,
	(iii) is \NP-hard, too. We show that there is a $3$-approximation
	algorithm for (iii) using a single agent.
\end{abstract}

%%%% Introduction of the problem
%%%% ---------------------------
\section{Introduction}
\label{sec:introduction}

%% Drone etc. motivation
Technological development has allowed for low-cost mass production of small and simple mobile robots. 
Autonomous vacuum cleaners, mowers, or drones are some of the best known examples.
There are attempts to deploy such autonomous \emph{agents} to deliver physical goods -- \emph{packages} \cite{PostRobots2,USAToday}.
In the future, for delivering over longer distances, a \emph{swarm} of such autonomous agents is a likely option to be adapted, since the energy supply of the agents is limited, or the agents are simply required to operate locally, or simply because the usage of some agents is more costly than others.
A careful cooperation and planning of the agents is thus necessary to provide energy, time, and cost efficient delivery.
This leads to plentiful optimization problems regarding the operation of the agents.

%% Budget and Weight
Here we consider the problem of delivering a single package as quickly as possible from a source node $s$ to a target node $t$ in a graph $G=(V,E)$ with edge lengths by a team of $k$ agents.
The agents have individual velocities, with which they can move along the edges of the graph, and also an energy-consumption rate for a travelled unit distance. 
The goal is to design \emph{centralized algorithms} to coordinate the agents such that the package is delivered from $s$ to $t$ in an efficient way. 
In the literature, delivery problems focusing solely on energy efficiency have been studied. 
One research direction considers every agent to have an initial amount of energy (battery) that restricts the agents' movements \cite{AnayaCCLPV16,DDalgosensors13}. 
The decision problem of whether the agents can deliver the package has been shown to be strongly \NP-hard on planar graphs~\cite{sirocco16,TCS17} and weakly \NP-hard on paths~\cite{DDicalp14}, and it remains \NP-hard on general graphs even if the agents can exchange energy~\cite{EnergyExchange15}.
The second research direction considers every agent to have unlimited energy supply, and an individual energy-consumption rate per travelled distance~\cite{STACS17,ATMOS17}. 
The problem of delivering the package and minimizing the total energy consumption can be solved in time $\bigO(k + n^3)$~\cite{STACS17}. 

In this paper, we primarily focus on delivering the package in a quickest possible way, and only secondarily on the total energy that is consumed by the agents. This has not been, to the best of our knowledge, studied before. Specifically, we consider the algorithmic problem of finding a delivery schedule that: (i) minimizes the delivery time, (ii) minimizes the delivery time using the least amount of energy, and (iii) minimizes a linear combination of delivery time and energy consumption.

%% Graph and Agents
\paragraph{Our model.} 
We are given an undirected graph $G=(V,E)$ on $n=|V|$ nodes. Each edge $e \in E$ has a positive \emph{length} $l_e$. The length of a path is the sum of the lengths of its edges. We consider every edge $e=\{u,v\}$ to consist of infinitely many points, where every point is uniquely characterized by its distance from $u$, which is between $0$ and $l_e$. We consider every such point to subdivide the edge $\{u,v\}$ into two edges of lengths proportional to the position of the point on the edge.
The distance $d_G(p,q)$ between two points $p$ and $q$ (nodes or points inside edges) of the graph is the length of a shortest path from $p$ to $q$ in $G$.
There are $k$ mobile agents initially placed on nodes $p_1,\ldots,p_k$ of $G$. Every agent $i=1,\ldots,k$ has a \emph{weight} $0\leq \pw_i < \infty$ and a \emph{velocity} $0 < \pv_i \leq \infty$.
Agents can traverse the edges of the graph. To traverse an edge $e$ (in either direction), agent $i$ needs time $l_e/\pv_i$ and $\pw_i\cdot l_e$ units of energy.

%% Package
Furthermore there is a single package, initially (at time 0) placed on a source node $s$, 
which has to be delivered to a given target node $t$. 
Each agent can walk from its current location to the current location of the package (along a path in the graph), 
pick the package up, carry it to another location (a point of the graph), and drop it there.
From this moment, another agent can pick up the package again.
Only the moving in the graph takes time -- picking up the package and dropping it off is done instantaneously.
(The time spent by the package being dropped at a point until picked up again is, however, taken into account.)

%%  Goals
We call a schedule that operates the agents such that the package is delivered a \emph{solution}. In such a schedule $S$, we denote by 
$d_i(S)$ the total distance travelled by agent $i$, and by $d_i^*(S)$ the distance travelled by agent $i$ while carrying the package.
The total energy consumption of the solution is thus $\E(S) = \sum_{i=1}^k \pw_i\cdot d_i(S)$ and 
the time needed to deliver the package is given by $\T(S)=\sum_{i=1}^k d_i^*(S)/\pv_i + (\text{the overall time the package is not carried}).$
\emph{Fast and energy-efficient} \Delivery is the optimization problem of finding a solution that has small delivery time $\T$ as well as total energy consumption $\E$.
We study the following three objectives (see Figure~\ref{fig:delivery} for illustration):
\vspace{-1ex}
\begin{description}
	\setlength{\itemsep}{0ex}
	\item[(i)]	Minimize the delivery time $\T$.
	\item[(ii)]	Lexicographically minimize the tuple $(\T,\E)$, i.e.\ among all solutions with minimum $\T$,\newline
		\hspace*{-1ex}find a solution that has minimum energy consumption $\E$.
	\item[(iii)]	Minimize a convex combination $\epsilon\cdot \T + (1-\epsilon)\cdot \E$, for some given value $\epsilon \in \left( 0,1 \right)$.
\end{description}
Recent parallel work studied the following complementary -- energy focused -- variants:
\vspace{-1ex}
\begin{description}
	\setlength{\itemsep}{0ex}		
	\item[\textnormal{(iv)}]	Lexicographically minimize the tuple $(\E,\T)$, i.e.\ prioritize the minimization of $\E$~\cite{FCT17}.
	\item[\textnormal{(v)}]	Minimize the energy consumption $\E$~\cite{STACS17,ATMOS17}.
\end{description}

\begin{figure}[t!]
	\includegraphics[width=\linewidth]{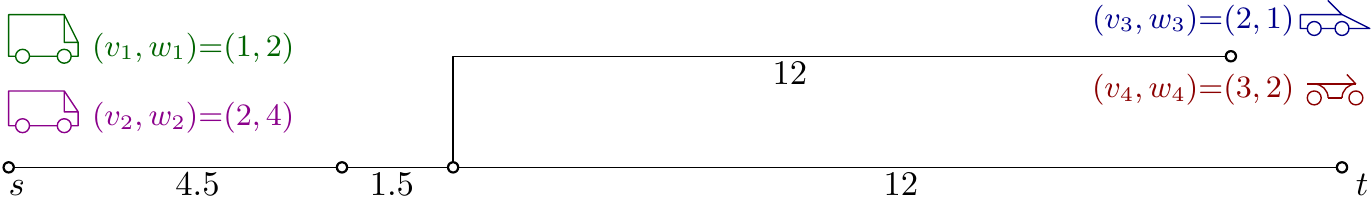}
	\caption{Example for optima of variants of fast \emph{and} energy-efficient \Delivery:\newline
	(ii) Using agents $2$ and $4$, we get 
	$(\T,\E) = (\max\left\{6/2,12/3\right\} + 12/3,\ 4\cdot 6+2\cdot 12 + 2\cdot 12) = (8,72)$.\newline
	(iii) For $\epsilon=\tfrac{4}{5}$, using agents $1$ and $4$, we get 
	$\tfrac{4}{5}\T = \tfrac{4}{5}\left(\max\left\{ 4.5/1, (12+1.5)/3 \right\}+(1.5+12)/3\right)$ 
	and $\tfrac{1}{5}\E = \tfrac{1}{5}(4.5\cdot 2+(12+1.5)\cdot 2+(1.5+12)\cdot 2)$ for a combined total of 
	%$\tfrac{4}{5}\T + \tfrac{1}{5}\E = $
	$\tfrac{4}{5}\left(4.5+4.5\right)+\tfrac{1}{5}(9+27+27) = 19.8$.\newline
	(iv) Using agents $1$ and $3$, we get 
	$(\E,\T) = (2\cdot 6+1\cdot 12 + 1\cdot 12,\ \max\left\{6/1,12/2\right\} + 12/2) = (36,12)$.\label{fig:delivery}}
\end{figure}

\noindent In all variants it is natural to (without loss of generality) only consider simple paths as the trajectory of the package, i.e., if at times $t_1, t_2$ ($0 \leq t_1 \leq t_2 \leq \mathcal{T}$) the package is at the same position $p$, then it remains at position $p$ for the time in-between ($\forall t\in \left[ t_1, t_2 \right]$). 
We will make this assumption throughout this paper.

%% Contribution
\paragraph{Our contribution.}
First, in Section~\ref{sec:preliminaries}, we prove for the first time that optimum solutions exist for all mentioned variants of \Delivery
(while previous work on (iv) and (v) implicitly assumed this).  
Then, in Section~\ref{sec:time-only}, we investigate the problem of minimizing the delivery time $\T$ only.
We call this optimization problem \TDelivery and show that there is a polynomial-time 
dynamic program of time complexity $\bigO(k^2|E|+ k|V|^2 + \APSP) \subseteq \bigO(k^2n^2 + n^3)$, where $\bigO(\APSP)$ is the running time of an all-pair shortest path algorithm for undirected graphs. 

In Section~\ref{sec:time-first}, we study \TWDelivery, prioritizing the delivery time $\T$ over the energy consumption $\E$. 
We first show that the problem can be solved in polynomial time for uniform velocities. However, we prove the problem to be \NP-hard for general velocities even on planar graphs. We therefore consider the restricted graph class of \emph{paths}, in which we can decompose the problem into uniform velocity instances. 
For each such instance, we establish a characterization of handover points. 
% after establishing a characterization of handover points, we give an $\bigO(n+k^2)$ algorithm for paths. 
Using geometric point-line duality~\cite{edelsbrunner1987algorithms} and dynamic planar convex hull techniques~\cite{RikoConvexHull02},
we give an $\bigO(n+k\log k)$ algorithm for paths.

In Section~\ref{sec:combination}, we show that for arbitrary given weights $\epsilon \in (0,1)$, 
the minimum convex combination $\epsilon\cdot \T+ (1-\epsilon)\cdot \E$ can be $3$-approximated by a single agent,
while \NP-hardness follows from an adaptation of the hardness proof in the preceding section.
We call the task of minimizing the convex combination \CDelivery.
Finally, in Section~\ref{sec:discussion} we discuss several extended models to which our approach can be generalized.
Some technical proofs are omitted in the main text and instead provided in the appendix, 
or also as inlined proofs in a thesis on several variants of \Delivery~\cite{BaertschiPhD}.

%% Related work
\paragraph{Comparison to related work.} Among the earliest problems related to \Delivery are the Chinese Postman Problem~\cite{Edmonds73} and the Traveling Salesman Problem~\cite{ApplegateTSP}, 
in which a single agent has to visit multiple destinations located in edges or nodes of the graph, respectively. 
The latter has given rise to a class of problems known as Vehicle Routing Problems~\cite{TothVRP}, which are concerned with the distribution of goods by a fleet of (homogeneous) vehicles 
under additional hard constraints such as time windows.  
Minimizing the total or the maximum travel distance of a group of agents for several tasks such as the formation of configurations~\cite{Demaine2009} or the visit of designated arcs~\cite{Frederickson76}
have been studied for identical agents as well. 
Energy-efficient \Delivery (without optimization of delivery time) has been recently introduced~\cite{STACS17} for an arbitrary number of packages,
with handovers restricted to take place at nodes of the graph only. 
This setting turns out to be \NP-hard, but can be solved in polynomial-time for a single package, 
in which case the restriction of handovers to nodes becomes irrelevant (there is always an optimal solution which does not use any in-edge handovers). 
To the best of our knowledge, this present paper and a parallel work~\cite{FCT17} on variant (iv) 
are the only ones studying the \Delivery problem with agents which have different velocities.
Similar to our approach, the latter studies a uniform weight setting first. The uniform weight result is then used as a subroutine in a dynamic program for general weights. Our hardness result shows that such an approach (combination of uniform velocities) 
is not possible for \TWDelivery, even on planar graphs.
Finally, mobile agents with distinct maximal velocities have been getting attention in areas such as searching~\cite{BampasSirocco16}, walking~\cite{Czyzowicz15} and patrolling~\cite{Czyzowicz11}.

%%%% PRELIMINARIES (EXISTENCE)
%%%% -------------------------
\section{Preliminaries}
\label{sec:preliminaries}

We first formally establish that optimum solutions for all variants of efficient \Delivery exist. 
To this end, we assume without loss of generality that each agent carries the package at most once.
(If an agent carries the package twice, we replace all agents acting in-between this agent by the agent itself. 
By the triangle inequality, neither the delivery time nor the energy consumption increases.)
Each solution which operates agents $i_1,i_2,\ldots,i_{\ell}$ in this order can be represented by the
drop-off locations of these agents only (note that for two consecutive agents $i,j$, 
the drop-off location of agent~$i$, denoted by $q_i^-$, corresponds to the pick-up location of agent $j$, denoted by $q_j^+$). 
Since we allow in-edge handovers, there are infinitely many solutions -- however, these can be divided into 
\emph{finitely} many topologically compact sets.
As $\E,\T$ act as continuous functions on these sets, we have in each set a minimum solution. 

\begin{theorem}[Existence of optimum solutions]
	There exists an optimum solution minimizing the delivery time $\T$ 
	(the energy consumption $\E$, or $\epsilon\cdot \T +(1-\epsilon)\cdot \E$, $(\T,\E)$, $(\E,\T)$, respectively).%
	\label{thm:existence}
\end{theorem}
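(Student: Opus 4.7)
The plan is to reduce the existence of an optimum to the extreme-value theorem applied on finitely many compact parameter spaces. The paper already notes that we may assume each agent carries the package at most once; I would also argue that an agent never takes a detour, since replacing its trajectory by the shortest paths from start to pick-up to drop-off can only reduce both $\E$ and $\T$ (a shorter walk saves energy, and the agent can simply wait at the pick-up or drop-off to preserve every handover time). With these reductions, a solution is completely specified by (a) the ordered sequence of agents $(i_1,\dots,i_\ell)$ that actually carry the package, and (b) the $\ell-1$ interior handover points $q^-_{i_j}=q^+_{i_{j+1}}$ (together with the fixed endpoints $s,t$).

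Next I would partition the set of all solutions according to their combinatorial type: the agent sequence together with, for each interior handover, the closed edge $\bar e_j \cong [0,l_{e_j}]$ on which the handover occurs (nodes are captured as the endpoints of adjacent edges). There are at most $\sum_{\ell\le k}\binom{k}{\ell}\ell!\cdot |E|^{\ell-1}$ combinatorial types, which is finite. For each fixed type, the remaining free parameters lie in the compact box $\prod_j [0,l_{e_j}]$.

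On each such compact box I would verify that both $\T$ and $\E$ are continuous. The key subroutine is that the shortest-path length $d_G(p,q)$ in an undirected graph with edge lengths is continuous jointly in its two arguments when they range over closed edges: $d_G(p,q)$ is the minimum of finitely many affine functions of the positions of $p$ and $q$ (one per combinatorial shortest-path skeleton), hence continuous. From this, the travel times $d_G(p_{i_j},q^+_{i_j})/\pv_{i_j}$ and $d_G(q^+_{i_j},q^-_{i_j})/\pv_{i_j}$ are continuous; waiting times, being of the form $\max(0,\cdot)$ applied to differences of such continuous expressions, are continuous; and $\E,\T$ are finite sums and maxima of continuous functions. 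The extreme-value theorem then delivers a minimizer on each compact piece, and taking the best among finitely many pieces yields a global minimum for objective (i), (iii), or (v).

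For the lexicographic variants (ii) and (iv), I would note that within each combinatorial type the level set of solutions achieving the minimum of the primary objective is the preimage of a single value under a continuous map, hence closed in the compact parameter box and therefore itself compact. The secondary objective, still continuous, attains its minimum on this compact set; optimizing over the finitely many combinatorial types finishes the argument. The only genuinely delicate point is the joint continuity of $d_G$ when the combinatorial structure of the shortest path flips as handover positions cross certain thresholds, and this is exactly what the ``minimum of finitely many affine functions'' observation resolves.
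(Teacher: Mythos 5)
Your proposal is correct and follows essentially the same route as the paper's proof: partition the solution space into finitely many classes determined by the agent order and the edge containing each handover, identify each class with a compact box, establish continuity of $\T$ and $\E$, and apply the extreme value theorem. The only cosmetic differences are that you justify continuity via ``minimum of finitely many affine functions'' where the paper gives an explicit Lipschitz-style $\epsilon$--$\delta$ bound, and you spell out the lexicographic case via compactness of the primary objective's level sets where the paper defers to a reference.
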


%%%% TIME ONLY
%%%% ---------
\section{Optimizing delivery time only}
\label{sec:time-only}

%% Setting
Throughout this section, we assume that all agents have weight $\pw_i = 0$. 
Hence in all three variants of fast energy-efficient \Delivery, $\E=0$ 
and we are after a solution for delivery with earliest-possible delivery time. 
We show that \TDelivery is polynomial-time solvable, due to the following characterization 
of optimum solutions (which exist by Theorem~\ref{thm:existence}): 

\begin{lemma}
	For every instance of \TDelivery, there is an optimum solution in which
	(i) the velocities of the involved agents are strictly increasing, 
	(ii) no involved agent arrives at its pick-up location earlier than the package (carried by the preceding agent), and
	(iii) if more than one agent is involved in transporting the package over an edge $\left\{ u,v \right\}$ in direction from $u$ to $v$, 
	then only the first involved agent will ever visit $u$.
	\label{lem:increasing-speeds}
\end{lemma}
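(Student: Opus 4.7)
The plan is to establish the three properties by successive exchange arguments starting from any optimum solution, whose existence is guaranteed by Theorem~\ref{thm:existence}.

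For (i), the idea is to iteratively eliminate ``slow'' successors from the handover chain. If two consecutive involved agents $i_j, i_{j+1}$ satisfy $\pv_{i_j} \geq \pv_{i_{j+1}}$, I would remove $i_{j+1}$ and extend $i_j$'s carrying all the way from $q_{i_j}^+$ to $q_{i_{j+1}}^-$. Since the appended sub-segment is now traversed at speed $\pv_{i_j} \geq \pv_{i_{j+1}}$, the package arrives at $q_{i_{j+1}}^-$ no later than before, and all subsequent handovers are unaffected. Iterating over all violations yields an optimum in which the chain has strictly increasing velocities.

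For (ii), I would work with an optimum additionally satisfying (i). Suppose $i_{j+1}$ arrives at its pickup $q_{i_{j+1}}^+$ a time $\tau>0$ before the package (carried there by $i_j$). Instead of waiting, I have $i_{j+1}$ walk backward along $i_j$'s approach path; by (i) the two meet at some interior point $z$, the handover happens at $z$, and $i_{j+1}$ then carries the package forward through $q_{i_{j+1}}^+$ to $q_{i_{j+1}}^-$. A short calculation using $\pv_{i_{j+1}}>\pv_{i_j}$ yields a strict time gain of $\tau\,(\pv_{i_{j+1}}-\pv_{i_j})/(\pv_{i_j}+\pv_{i_{j+1}})$ at $q_{i_{j+1}}^-$, contradicting optimality.

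For (iii), I would start with an optimum satisfying both (i) and (ii). Fix an edge $\{u,v\}$ with $m\geq 2$ carriers $i_1,\ldots,i_m$ in direction $u\to v$, and suppose $i_l$ with $l\geq 2$ visits $u$ (at time $t^u_l$; let $t^0$ be the time the package leaves $u$). The exchange is to have $i_l$ take over the package at $u$, bypassing $i_2,\ldots,i_{l-1}$ entirely and truncating $i_1$'s carry at $u$; one side waits for the other depending on the sign of $t^u_l - t^0$. A case analysis using the inequality from (ii) between arrival times at $q_{i_l}^+$ together with the strict velocity ordering from (i) shows that the new delivery time at $q_{i_l}^-$ is never larger than the original, and is strictly smaller except in the borderline case $t^u_l \geq t^0$ in which $i_l$ follows the direct edge path from $u$ to $q_{i_l}^+$. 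The strict case contradicts optimality; in the borderline case the number of carriers on $\{u,v\}$ drops by at least one. Choosing $l$ maximal ensures no later carrier $i_{l'}$ with $l'>l$ still visits $u$, so a single exchange restores (iii) for the edge.

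The main obstacle lies in part (iii), where the exchange is only weakly improving. Verification therefore rests on a strictly decreasing integer measure (the carrier count on the edge) rather than on a direct strict-optimality contradiction; in particular, one must check that properties (i) and (ii) are preserved by the modification and that choosing $l$ maximal eliminates all leftover $u$-visits by the surviving carriers.
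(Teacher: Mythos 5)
Your overall strategy---successive exchange arguments for (i), (ii), (iii) in that order---is the same as the paper's, and parts (i) and (iii) are fine (your (iii) is a more laborious route than the paper's one-liner, which simply observes that by (i) the \emph{last} carrier visiting $u$ has the highest velocity and can therefore take the package at $u$ alone, but your carrier-count measure and your care about preserving (i) and (ii) are legitimate and if anything more thorough).

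There is, however, a genuine error in your argument for (ii). You compute a strict time gain of $\tau\,(\pv_{i_{j+1}}-\pv_{i_j})/(\pv_{i_j}+\pv_{i_{j+1}})$ at $q_{i_{j+1}}^-$ and conclude ``contradicting optimality.'' That conclusion is false: a strict improvement in the arrival time at the intermediate drop-off $q_{i_{j+1}}^-$ does not imply a strict improvement in the overall delivery time $\T$. If a \emph{later} agent in the chain is itself waiting at its pick-up location, the package arriving earlier at $q_{i_{j+1}}^-$ merely sits (or is carried to a meeting point that doesn't change the downstream schedule's bottleneck), and $\T$ is unchanged. Hence there do exist optimum solutions violating (ii), and one cannot derive a contradiction from a single violation. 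The paper is explicit about this: the exchange ``cannot increase the delivery time,'' but ``$\T$ might remain constant if this increase in velocity is countered by a longer waiting time of the package at the handover to agent $i+1$.'' The correct formulation is constructive rather than by contradiction: perform the exchange, note that it preserves optimality, and iterate (working on the \emph{first} violating agent, as the paper does, so that the process terminates). This also matters for your part (iii), which invokes ``the inequality from (ii)''---you should make sure you are invoking the property of the \emph{modified} optimum you constructed, not a property that every optimum is claimed to satisfy.
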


\begin{proof}
	All three properties can be shown by exchange arguments. Taking any optimum solution,
	we turn it into an optimum solution that adheres to the three properties as follows:
	
	(i) Label the agents $1,2,\ldots,i,\ldots$ in the order in which they transport the package. 
	Let $i$ be the first agent such that $\pv_i \geq \pv_{i+1}$.
	Now we can simply replace agent $i+1$ by letting agent $i$ travel on the same trajectory 
	on which $i+1$ transported the package; and by doing so, we don't increase the delivery time.
	
	(ii) Let $i$ be the first agent that has to wait at its pick-up location for the package to arrive. 
	Instead of waiting, we let $i$ proceed on the original trajectory of the package towards $s$ 
	until it meets the preceding agent $i-1$. Handing over the package at this new spot 
	cannot increase the delivery time $\T$, as $\pv_{i-1} < \pv_i$ (we only increase velocities along the trajectory). 
	However, $\T$ might remain constant if this increase in velocity is countered by a longer waiting time of the package 
	at the handover to agent $i+1$.

	(iii) Assume that multiple agents bring the package from $u$ to $v$ over the edge $\left\{ u,v \right\}$, by visiting $u$ first. 
	By assumption (i) the last such agent $i$ has the highest velocity and thus agent $i$ can just as well pick up the package at 
	$u$ without the help of the other agents.
\end{proof}

% Using the same replacement arguments, we get as an immediate consequence that at every node $u$ we may only keep 
% an agent $i$ of highest velocity ($i \in \smash{\argmin_{p_j = u}} \ \pv_j$).

\begin{corollary}
	After a preprocessing step of time $\bigO(k+|V|)$ -- in which we remove in each node 
	all but the agent with maximum velocity $\pv_i$ --
	we may assume that $k\leq |V|$.
	\label{cor:time:ksmallerV}
\end{corollary}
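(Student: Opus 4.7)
The plan is to show that every optimum solution of the original instance can be realized using only the fastest agent at each starting node, so that discarding the slower agents during preprocessing preserves the optimum value. The argument will rest on two exchange observations, both powered by Lemma~\ref{lem:increasing-speeds}(i).

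First, if at some node $p$ an agent $i$ is involved in the optimum but the fastest agent $j$ at $p$ is not, I would substitute $j$ for $i$: since both start at $p$ and $\pv_j \geq \pv_i$, agent $j$ can follow the exact trajectory of $i$ in no more time, so $\T$ does not increase.

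Second, if two agents $i$ and $j$ with $p_i = p_j = p$ and $\pv_i < \pv_j$ are both involved in the optimum, by Lemma~\ref{lem:increasing-speeds}(i) they appear in the sequence in the order $i, a_1, \ldots, a_r, j$. I would then eliminate $i$ together with all intermediate agents $a_1, \ldots, a_r$ and let $j$ alone handle this whole subsequence: $j$ walks from $p$ to $q_i^+$, arriving no later than $i$ would have (since $\pv_j > \pv_i$), then carries the package all the way to $q_j^-$. The new carry time $d_G(q_i^+, q_j^-)/\pv_j$ is no larger than the sum of the original carry times $\sum_\ell d_G(q_\ell^+, q_\ell^-)/\pv_\ell$, since by property~(i) each $\pv_\ell \leq \pv_j$, while the distance terms sum to exactly $d_G(q_i^+, q_j^-)$ because the trajectory is a simple path. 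Hence the package arrives at $q_j^-$ no later than before, so $\T$ does not increase.

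Iterating these two exchanges yields an optimum in which every starting node contributes at most one agent, and that agent is the fastest at its node — this is exactly a schedule of the preprocessed instance, proving that preprocessing preserves the optimum value. The preprocessing itself is one linear pass over the $k$ agents that, using an array indexed by the $|V|$ nodes, records the fastest agent seen so far at each node; this runs in $\bigO(k + |V|)$ time and leaves at most one agent per node, so $k \leq |V|$.

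The main obstacle I foresee is verifying that the non-adjacent case of the second exchange is truly benign: one has to check that shortening the package's journey to $q_j^-$ does not break any handover scheduled to happen there or later. This holds because the package arrives at $q_j^-$ no later than in the original, so the next agent in the sequence either finds the package waiting or arrives in time to pick it up, exactly as it did before the exchange.
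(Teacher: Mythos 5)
Your proof is correct and matches the paper's intended argument: the corollary is stated there without an explicit proof, as a direct consequence of the exchange argument behind Lemma~\ref{lem:increasing-speeds}(i), which is precisely the substitution you carry out (the fastest agent at a starting node can follow any slower co-located agent's trajectory without increasing $\T$, and if both are involved the faster one can absorb the whole intermediate subsequence). One tiny slip: the segment lengths of a simple trajectory sum to \emph{at least} $d_G(q_i^+,q_j^-)$, not exactly, since a simple path need not be a shortest path --- but this only strengthens your inequality.
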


\paragraph{Towards a dynamic program.}
Making use of characterization (i) of Lemma~\ref{lem:increasing-speeds}, 
we relabel the agents such that $\pv_1\leq \pv_2\leq \ldots\leq \pv_k$. 
We can then look at subproblems where we only use the first $i-1$ among all $k$ agents. 
Assume node $v^*$ is the first node that the new agent $i$ (starting at $p_{i}$) passes while actually carrying the package. 
According to characterizations (ii) and (iii), when defining the recursion, 
we have to take care of these two cases, see Figure~\ref{fig:cases-a-b}:
\begin{description}
	\item[a)]	Agent $i$ might arrive at node $v^*$ `late', 
			the package has already been dropped off there before by one of the agents $1,2,\ldots,i-1$ and had been waiting.
	\item[b)]	Agent $i$ might arrive at node $v^*$ `early', 
			in which case it should walk towards the package to receive it earlier 
			and bring it back to $v^*$ faster (having larger velocity than the currently carrying agent, after all). 
			In this case, agent $i$ picks up the package at a point $p$ 
			which is strictly in the interior of the edge $\left\{ u,v^* \right\}$ and 
			which is as close to node $v^*$ as possible,
			i.e., $p$ must be reachable by both agent $i$ \emph{and} the package 
			-- carried by only the first $i-1$ agents -- at the \emph{earliest} possible time: 
			$(d(p_{i},v^*) + d(v^*,p))/\pv_{i}$.
\end{description}

\begin{figure}[t!]
	\centering
	\includegraphics[width=\linewidth]{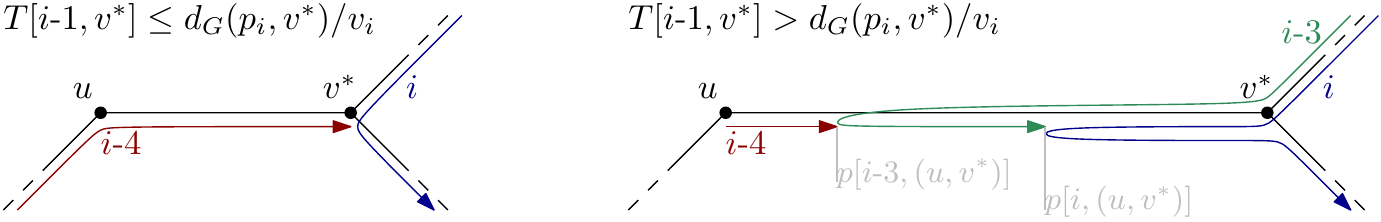}
	\caption{Examples for cases a) and b): (left) Agent $i$ picks up the package at node $v^*$.\newline 
	(right) Agent $i$ picks up the package inside the edge $(u,v^*)$ at the earliest possible time.}
	\label{fig:cases-a-b}
\end{figure}

%% DP
\paragraph{Dynamic Program.}
First we are interested in the distance between any two nodes in the graph, 
which we can find with an \emph{all-pair-shortest-paths} algorithm $\APSP$. 
We denote the time needed for this precomputation by $\bigO(\APSP)$.
Then, given the agents in ascending order of their velocities $\pv_i$, 
for each prefix $1,2,\ldots i$ of the agent order and each node $v$ we define the following subproblem:
\begin{align*}
	S\left[ i,v \right]	=	& \text{ A fastest schedule to bring the package to node $v$ 
						 using agents $\left\{ 1,\ldots,i \right\}$.}\\
	\T\left[ i,v \right]	=	& \text{ The time needed in $S\left[ i,v \right]$ to deliver the package to $v$. }\\
	A\left[ i,v \right]	=	& \text{ Index of the last agent to carry the package in $S\left[ i, v \right]$.}\\	
	p\left[ i, (u,v) \right]=	& \text{ The pick-up point $p$ strictly inside edge $\left\{ u,v \right\}$ 
						 and closest to $v$, reachable }\\
					& \text{ by \emph{both} the package (coming from $u$, 
						 delivered by agents $1,\ldots,i-1$) \emph{and} }\\ 
					& \text{ agent $i$ (coming via $v$)
						 in time $(d(p_{i},v) + d(v,p))/\pv_{i}$ (if applicable).}
\end{align*}
Note that although our graph only has undirected edges, $p[i,(u,v)]$ considers an \emph{ordered} tuple of nodes $(u,v)$, 
denoting that the package is transported from $u$ to $v$. Thus $p[i,(v,u)]$ has the analogous meaning of the package 
crossing edge $\left\{ u,v \right\}$ from $v$ towards $u$. Both $p[i,(u,v)]$ and $p[i,(v,u)]$ 
might be undefined, as can be seen below.

We compute the optimum delivery times $\T[i,v]$ (together with $A[i,v]$) without explicitly maintaining the schedules $S[i,v]$. 
A concrete final schedule $S$ can then be retraced from $A[,]$, see Theorem~\ref{thm:speedy-delivery}. 
For computing $\T[i,v]$ and $A[i,v]$ we `guess' the first node $v^*$ of cases a) and b) above by trying each node $v$ as a candidate. 
We then can compute $\T[i,v]$ and $A[i,v]$ for all other nodes using the pre-computed distances between all pair of nodes: 
\begin{enumerate}
	\item	\emph{Initialization:}
		For all nodes $v$, we initialize $S[i,v] := S[i-1,v],A[i,v] := A[i-1,v]$ and $\T[i,v] := \T[i-1,v]$. 
		This automatically takes care of case a), where the package arrives at $v$ before agent $i$ can reach $v$.
	\item	\emph{In-edge pick-ups:}
		We go over all node pairs $(u,v)$ such that $\left\{ u,v \right\} \in E$ and check whether agent $i$ 
		can pick up the package inside $\left\{ u,v \right\}$ to advance it to node $v$ faster than in schedule $S[i-1,v]$. 
		We do so by checking whether we have $d(p_i,v)/\pv_i < \T[i-1,v]$ \emph{and} $d(p_i,u)/\pv_i > \T[i-1,u]$. 
		In this case, agent $i$ receives the package from a previous agent $j$ that brought it from $u$ or from $p[j,(u,v)]$. 
		Thus we get a set $P$ of candidates for $p[i,(u,v)] := \smash{\arg \min_{p \in P}} \left\{ d(p,v) \right\}$.
		The candidate set $P$ consists of all points $p$ strictly inside the edge $\left\{u,v \right\}$ 
		such that there exists an agent of index $j$, $A[i-1,u] \leq j < i$, for which we have
		\begin{align*}
			\max\left\{ \T[i-1,u], \frac{d(p_j,u)}{\pv_j} \right\} + \frac{d(u,p)}{\pv_j}
					& = \frac{d(p_i,v)\ +\ d(v,p)}{\pv_i}	\\
		\intertext{if $j$ is coming from $u$, \emph{or} -- if $p[j,(u,v)]$ is defined -- }
			\frac{d(p_j,v)\ +\ d(v,p[j,(u,v)])\ +\ d(p[j,(u,v)], p)}{\pv_j} 
					& = \frac{d(p_i,v)\ +\ d(v,p)}{\pv_i}.
		\end{align*}
		Having computed $p[i,(u,v)]$ as the point in $P$ closest to $v$, we update node $v$ accordingly: 
		Set $\T[i,v] := \min\left\{ \T[i,v], \tfrac{d(p_i,v) + 2d(p[i,(u,v)],v)}{\pv_i} \right\}$, where using `$\min$' takes 
		care of cases in which we have multiple incident edges to $v$ that all potentially have in-edge pick-ups by $i$, 
		and set $A[i,v] = i$ (valid since we consider the case where $d(p_i,v) < \T[i-1,v]$).
	\item	\emph{Updates:} So far we have computed the subproblems $S[i,v]$ correctly,
		if node $v$ corresponds to the first node $v^*$ of cases a) and b) 
		(in particular we checked whether the faster agent $i$ can help to advance the package over only one edge).
		Now we also consider all cases where agent $i$ transports the package over arbitrary distances, by updating
		all other schedules $S[i,u]$ accordingly:
		For each node $v$, for each node $u$, if $\T[i,u] > \max\left\{ \T[i,v], d(p_i,v)/\pv_i \right\} + d(v,u)/\pv_i$ we set 
		$A[i,u] := i$ and $\T[i,u] := \max\left\{ \T[i,v], d(p_i,v)/\pv_i \right\} + d(v,u)/\pv_i$.
\end{enumerate}

%% Proof
\begin{theorem}
	An optimum schedule for \TDelivery of a single package can be computed in time 
	$\bigO(k^2|E| + k|V|^2 + \APSP) \subseteq \bigO(k^2n^2 + n^3)$.
	\label{thm:speedy-delivery}
\end{theorem}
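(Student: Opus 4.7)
The plan is to prove correctness and running time separately, treating the DP as a recursive specification over the index $i$ of the agent currently being processed. Before invoking the DP, I would apply the preprocessing from Corollary~\ref{cor:time:ksmallerV} to guarantee $k \leq |V|$ and relabel agents in non-decreasing order of velocity, which is permissible by Lemma~\ref{lem:increasing-speeds}(i). I would also precompute all pairwise node distances in time $\bigO(\APSP)$ with a standard all-pairs shortest-paths algorithm on the undirected graph.

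For correctness I would induct on $i$, with invariant: after the three steps for agent $i$, $\T[i,v]$ equals the minimum delivery time to $v$ using only agents $\{1,\ldots,i\}$, and $A[i,v]$ records the last carrier of some optimum schedule attaining this value. The base case $\T[0,s]=0$, $\T[0,v]=\infty$ otherwise is immediate. For the step, I would fix an arbitrary optimum schedule $S^*$ for the prefix $\{1,\ldots,i\}$. If agent $i$ is unused, the initialization $\T[i,v] := \T[i-1,v]$ is correct. Otherwise, by Lemma~\ref{lem:increasing-speeds}(ii) and (iii), the first node $v^*$ that the package passes while being carried by $i$ falls into exactly one of case~(a), where $i$ picks up at $v^*$ after the package has been waiting there, or case~(b), where $i$ picks up at some point $p$ strictly inside an incident edge $\{u,v^*\}$ and then carries back to $v^*$. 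In case~(a) the initialized value $\T[i,v^*]=\T[i-1,v^*]$ is already correct, and the update step subsequently relaxes $\T[i,u]$ to $\max\{\T[i,v^*], d(p_i,v^*)/\pv_i\} + d(v^*,u)/\pv_i$ for every downstream $u$; in case~(b) the in-edge pickup step writes the correct $\T[i,v^*]$ before the update step propagates further.

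The main obstacle is to verify that the in-edge pickup step computes $p[i,(u,v^*)]$ correctly. I would argue that any feasible meeting point $p$ strictly inside $\{u,v^*\}$ between the package (carried by some earlier agent $j$) and agent $i$ coming via $v^*$ arises in exactly one of two ways: either $j$ picks up the package at $u$ after the preceding agents have delivered it there in time $\T[i-1,u]$ and then walks into the edge, or $j$ itself performs an in-edge pickup at $p[j,(u,v^*)]$ and carries the package back out toward $v^*$. In both alternatives the arrival time of the package at $p$ is an affine function of the distance $d(v^*,p)$, and so is the arrival time of agent $i$ via $v^*$; equating these two affine functions yields the unique candidate for the given $j$, which is exactly the displayed equation. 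Taking the minimum-$d(p,v^*)$ candidate over all admissible $j$ with $A[i-1,u] \leq j < i$ and over both subcases produces $p[i,(u,v^*)]$, and setting $\T[i,v^*] := (d(p_i,v^*) + 2d(p[i,(u,v^*)],v^*))/\pv_i$ records the correct earliest arrival time via case~(b). A small additional subtlety I would spell out is that chained relaxations in the update step do not help, since the triangle inequality on the precomputed shortest distances ensures that a single pass attains the minimum $\min_{v}\,(\max\{\T[i,v],d(p_i,v)/\pv_i\} + d(v,u)/\pv_i)$.

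For the running time, summing over the $k$ iterations of the outer loop: initialization costs $\bigO(|V|)$ per iteration; the in-edge pickup step iterates over the $\bigO(|E|)$ ordered edges and examines $\bigO(k)$ candidate earlier agents per edge, contributing $\bigO(k|E|)$ per iteration; the update step performs $\bigO(1)$ work per ordered pair of nodes, contributing $\bigO(|V|^2)$ per iteration. Together with the $\bigO(\APSP)$ precomputation this yields $\bigO(k^2|E| + k|V|^2 + \APSP) \subseteq \bigO(k^2 n^2 + n^3)$, using $|E| \leq n^2$ and Corollary~\ref{cor:time:ksmallerV}. A concrete optimum schedule can be retraced from the stored arrays $A$ and $p$ in $\bigO(k)$ additional time.
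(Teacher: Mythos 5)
Your proposal follows the paper's proof essentially verbatim: the same per-iteration accounting ($\bigO(|V|)$ initialization, $\bigO(k|E|)$ for in-edge pick-ups, $\bigO(|V|^2)$ for updates, plus $\bigO(\APSP)$ precomputation), and correctness via the case distinction of Lemma~\ref{lem:increasing-speeds}, which you in fact spell out in more detail than the paper does. The only discrepancy is your claim that the schedule can be retraced in $\bigO(k)$ time from $A$ and $p$ alone: since $A[i,v]$ records only the last carrier and not where it took over, the paper instead searches over nodes for the ``first node'' $v^*$ of each used agent (costing $\bigO(|V|)$ per agent plus $\bigO(k\deg(v^*))$ for in-edge handovers), but this is a minor bookkeeping point that does not affect the stated bound.
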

\begin{proof}
	For each $i$ from $1$ to $k$ we can compute all values $A[i,v],\ \T[i,v]$ in time $\bigO(|V|)$ for the initialization, 
	$\bigO(|E|k)$ to check for in-edge pick-ups and $\bigO(|V|^2)$ for the updates (for which we need precomputed all-pair shortest paths).
	Overall we get a running time of $\bigO(\APSP + k^2|E| + k|V|^2)$.
	The delivery time is then given in $\T[k,t]$. Correctness of the algorithm follows from the definition of the subproblems 
	and the case distinction stemming from Lemma~\ref{lem:increasing-speeds}.
	Since we did not explicitly maintain the schedules $S[i,v]$, we retrace the concrete schedule $S$ from $A[,]$ by backtracking: 
	Let $i$ denote the last used agent $A[k,t]$.
	We can find $i$'s `first node' $v^*$ in time $\bigO(|V|)$ by searching for the smallest value $\T[i,u]$ such that  
	\[	\max\left\{ \T[i,u], d(p_i,u)/\pv_i \right\} = \T[k,t]-d(u,t)/\pv_i.	\]
	If $A[i,v^*] \neq i$, we recurse, otherwise we find the correct adjacent node and all in-edge handovers 
	by looking -- for each of the $\bigO(\deg(v^*))$ many neighbors $u$ of $v^*$ -- at the overall $\bigO(k\deg(v^*))$ many values
	$p[j,(u,v^*)]$ (where $j\leq i$) and $\T[j,u]$ (where $j<i$). 
\end{proof}

%%%% TIME FIRST
%%%%----------
\section{Prioritizing delivery time over energy consumption}
\label{sec:time-first}

%% Time First Introduction
In this Section, we want to find the most efficient among all fastest delivery schedules.
We call this problem \TWDelivery and will first show that it can be solved in polynomial time 
for uniform velocities ($\forall i,j\colon \pv_i = \pv_j$), due to a characterization of 
optimum schedules.
In contrast, we prove \NP-hardness for arbitrary speeds, even on planar graphs. 
However, for paths we show how one can subdivide general instances into phases of concecutive agents having the same velocity, 
and achieve an efficient $\bigO(n+k\log k)$-time algorithm.

%% Time First: UNIFORM VELOCITIES
%% ------------------------------
\subsection{A polynomial-time algorithm for uniform velocities}
\label{sec:time-first-uniform}

\begin{lemma}
	Consider \TWDelivery on instances with uniform agent velocities and let 
	$\delta$ denote the offset of the closest agent's starting position to $s$. Then there exists an optimum
	schedule such that the pick-up position $q_i^+$ of each involved agent $i$ satisfies:
	\begin{itemize}
		\item	$d(s,q_i^+) + d(q_i^+,t) = d(s,t)$, i.e., $q_i^+$ lies on a shortest $s$-$t$-path, and
		\item	$d(p_i, q_i^+) \leq \delta + d(s, q_i^+)$, with equality if $q_i^+$ lies strictly inside an edge.
	\end{itemize}
	\label{lem:uniform-characterization}
\end{lemma}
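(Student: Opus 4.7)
The plan is to first pin down the minimum delivery time. For uniform velocity $\pv$, any schedule satisfies $\T \geq (\delta+d(s,t))/\pv$: the first carrier needs time at least $\delta/\pv$ to reach $s$, and once the package is picked up, its trajectory to $t$ has length at least $d(s,t)$ and is traversed at speed at most $\pv$. Equality is attained by a closest agent walking directly to $s$ and then carrying the package along a shortest $s$-$t$-path. Consequently, in any time-optimal schedule the first carrier must start at distance exactly $\delta$ from $s$, the package must follow some shortest $s$-$t$-path $\pi$, and it must never remain stationary after time $\delta/\pv$.

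The first bullet then follows immediately: every involved pickup point $q_i^+$ lies on $\pi$, and the package reaches $q_i^+$ exactly at time $(\delta+d(s,q_i^+))/\pv$. Since the handover is instantaneous, agent $i$ must itself be at $q_i^+$ no later than this time, yielding the inequality $d(p_i,q_i^+)\leq \delta + d(s,q_i^+)$.

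For the equality clause, assume $q_i^+$ lies strictly inside an edge $\{u,v\}$ of $\pi$, with $u$ the endpoint on the $s$-side, and parametrize $q_i^+$ by its arc-length position $x$ along $\pi$. I would shift $x$ by a small $\pm\epsilon$ and compute the derivative of the total energy w.r.t.\ $x$. When agent $i$'s shortest walk to $q_i^+$ enters the edge via $u$, the walk contributes $+1$, agent $i$'s carrying contributes $-1$, and agent $i{-}1$'s carrying contributes $+1$ per unit shift, for a net derivative of $+\pw_{i-1}$; when the walk enters via $v$, the analogous derivative equals $\pw_{i-1}-2\pw_i$. In the via-$u$ case, a backward shift (toward $u$) strictly decreases energy, contradicting optimality unless equality is already tight. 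In the via-$v$ case, the sign of $\pw_{i-1}-2\pw_i$ dictates a strictly improving direction, except in the degenerate balance $\pw_{i-1}=2\pw_i$, in which both directions have zero cost and one shifts at zero energy cost until either equality is reached or $q_i^+$ moves onto a node. Strict inequality at $q_i^+$ persists under infinitesimal shifts, so the time-optimality constraint is preserved throughout.

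The main obstacle is bookkeeping the regime changes during the shift: decreasing $x$ in the via-$v$ case can at some point turn into the via-$u$ case once agent $i$'s two walking options equalize, and the exchange argument then has to be continued in the new regime. This remains tractable because the total energy is piecewise-linear in $x$ with only finitely many breakpoints (the regime switch and the edge endpoints), so the local moves compose into a finite transformation producing an optimum satisfying both claims of the lemma.
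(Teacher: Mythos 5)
Your proposal follows essentially the same route as the paper: the time lower bound $(\delta+d(s,t))/\overline{\pv}$ forces the package onto a shortest $s$-$t$-path with no waiting, which yields the first bullet and the inequality in the second, and the equality clause is obtained by the same local perturbation of $q_i^+$ inside its edge, with the same derivatives ($+\pw_{i-1}$ when agent $i$ walks in via $u$, $\pw_{i-1}-2\pw_i$ when it walks in via $v$) and the same case split on the sign of $2\pw_i-\pw_{i-1}$.

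One stopping condition is missing from your bookkeeping, and the paper treats it explicitly: when you shift $q_i^+$ toward $u$ (the case $\pw_{i-1}\geq 2\pw_i$), the shift may collide with the preceding agent's pickup point $q_{i-1}^+$ before either equality or a node is reached. Your list of breakpoints (``the regime switch and the edge endpoints'') omits this event, and the piecewise-linearity argument does not cover it, because past $q_{i-1}^+$ agent $i-1$'s carrying distance would become negative and the schedule is no longer well-defined. At that point one has to discard agent $i-1$ (which costs no time and saves its walking energy $\pw_{i-1}\cdot d(p_{i-1},q_{i-1}^+)$) and continue the argument with the new predecessor; this terminates since each collision removes an agent. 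A second, smaller imprecision: in the via-$u$ regime the slack $\delta+d(s,q_i^+)-d(p_i,q_i^+)$ is invariant under the shift, so ``unless equality is already tight'' is not the relevant terminus there --- the move runs all the way to the node $u$ (and is only weakly improving when $\pw_{i-1}=0$), after which the equality clause is vacuous. Neither issue changes the structure of the proof.
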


\begin{proof}
	Since all agents have the same velocity $\overline{\pv}$, any fastest delivery of the package must follow a shortest path from $s$ to $t$.
	Furthermore, since the closest agent could deliver the package on its own in time $(\delta + d(s,t))/\overline{\pv}$, 
	each involved agent $i$ has to arrive at $q_i^+$ no later than the package itself, giving $d(p_i, q_i^+) \leq \delta + d(s, q_i^+)$.
	It remains to show that we can modify every optimum solution into an optimum solution in which we have 
	$d(p_i, q_i^+) = \delta + d(s, q_i^+)$ whenever $q_i^+$ lies strictly inside an edge $e=\{u,v\}$.
	Denote by $i$ the first agent for which this is not the case and by $i-1$ its preceding agent. 
	Assume that the package enters $e$ via $u$ (i.e.~$d(s,u)<d(s,v)$). 
	Note that $i$ must have entered $e$ via $v$, 
	since otherwise the energy consumption could be improved by letting $i$ pick up the package already at $u$ 
	(without increasing the delivery time), contradicting the optimality of our solution.
	Now we distinguish two cases relating the weights $\pw_i$ and $\pw_{i-1}$, 
	yielding either a decrease of the energy consumption, 
	or a possibility to move $q_i^+$ to a position satisfying the characterization.
	\begin{itemize}
		\item	$2\pw_i > \pw_{i-1}$: Moving $q_i^+$ by $\epsilon>0$ towards $v$ decreases $\E$ by 
			an amount of $(2\pw_i-\pw_{i-1})\cdot \epsilon>0$.
		\item	$2\pw_i \leq \pw_{i-1}$: We move $q_i^+$ towards $u$ (without increasing neither delivery time nor 
			energy consumption) until we reach $q_i^+ = u$, or $q_i^+$ inside the edge $\left\{ u,v \right\}$ 
			such that $d(p_i, q_i^+) = \delta + d(s, q_i^+)$,
			or $q_i^+ = q_{i-1}^+$. In the last case, discarding agent $i-1$ from our solution 
			results in an energy consumption decrease of at least $\pw_{i-1}\cdot d(p_{i-1},q_{i-1}^+)>0$. \qedhere
	\end{itemize}	
\end{proof}

\paragraph{Polynomial-time algorithm}
We use the characterization in Lemma~\ref{lem:uniform-characterization} to find an optimum solution for \TWDelivery
of delivery time $\T = (\delta + d(s,t))/\overline{\pv}$: 
For each agent~$i$, we compute the set $Q_i$ of all potential pick-up locations, i.e.,
the set of points $q_i$ that satisfy Lemma~\ref{lem:uniform-characterization}. 
The number of potential locations is $|Q_i| \in \bigO(|V|+|E|) \subseteq \bigO(n^2)$. 
Then we build an auxiliary directed acyclic multi-graph $H$ on a node set $V(H) = \bigcup_{i=1}^k Q_i$, 
of size $|V(H)| \in \bigO(|V|+k|E|) \subseteq \bigO(kn^2)$. 
Each directed edge in $E(H)$ describes how agent $i$ can contribute to the delivery by bringing the package from its starting position $q_i$ 
to another agent's starting position $q_j$ along a shortest $s$-$t$-path: 
For each pair of nodes $q_i \in Q_i$ and $q_j \in V(H)$ such that $q_i \neq q_j$ and 
$d(s,q_i)+d(q_i,q_j)+d(q_j,t) = d(s,t)$, we add an arc $(q_i,q_j)$ of weight $\pw_i\cdot (d(p_i,q_i)+d(q_i,q_j))$ to $E(H)$.
Overall, we have at most $|E(H)| \in \bigO(k\cdot n^2 \cdot kn^2)$ many arcs.
By construction of $H$, running Dijkstra's shortest path algorithm on the multi-graph $H$ 
finds a shortest path from $s$ to $t$ corresponding to an optimal solution.
\begin{theorem}
	An optimum solution for \TWDelivery can be found in time $\bigO(k^2n^4)$,
	assuming all agents have the same velocity.
	\label{thm:uniform-velocities}
\end{theorem}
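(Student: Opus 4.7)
The plan is to verify correctness and running time of the algorithm described above, which reduces uniform-velocity \TWDelivery to a minimum-weight $s$-$t$-path computation on the auxiliary DAG $H$. By Lemma~\ref{lem:uniform-characterization}, the fastest delivery time is $(\delta + d(s,t))/\overline{\pv}$ (achieved already by the closest agent alone), so the task is to find, among all schedules meeting this time, one of minimum total energy $\E$.

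First I would verify the size bounds. For each agent $i$, $Q_i$ consists of the pick-up positions permitted by Lemma~\ref{lem:uniform-characterization}: nodes $v$ on a shortest $s$-$t$-path satisfying $d(p_i, v) \leq \delta + d(s,v)$, and interior points $q$ of edges on such a path satisfying $d(p_i, q) = \delta + d(s, q)$. The node candidates number at most $|V|$; and for each edge $\{u,v\}$ on a shortest $s$-$t$-path (say with $d(s,u) < d(s,v)$), the equality $d(p_i, v) + l_e - x = \delta + d(s, u) + x$ (for $i$ approaching from $v$) pins down a unique interior position $x \in (0, l_e)$, while the analogous approach from $u$ reduces to a degenerate case dominated by the node candidate $u$ itself. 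Hence $|Q_i| \in \bigO(|V|+|E|)$, $|V(H)| \in \bigO(|V|+k|E|)$, and $|E(H)| \in \bigO(|V(H)|^2) \subseteq \bigO(k^2 n^4)$.

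Second I would establish the correspondence between $s$-$t$-paths in $H$ and fastest-delivery schedules, with matching weights. In the forward direction, a path $s \to q_{j_1} \to \cdots \to q_{j_\ell} \to t$ with $q_{j_r} \in Q_{i_r}$ describes the schedule in which agent $i_r$ walks from $p_{i_r}$ to $q_{j_r}$ and then carries the package to $q_{j_{r+1}}$ along a common shortest $s$-$t$-path; by construction of $Q_{i_r}$, agent $i_r$ arrives at $q_{j_r}$ no later than the package, so the package is never delayed, total delivery time is $(\delta + d(s,t))/\overline{\pv}$, and total energy $\sum_r \pw_{i_r}(d(p_{i_r}, q_{j_r}) + d(q_{j_r}, q_{j_{r+1}}))$ equals the path weight. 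In the reverse direction, Lemma~\ref{lem:uniform-characterization} together with Theorem~\ref{thm:existence} yields an optimum schedule all of whose pick-up locations lie in their respective $Q_i$, and this canonically induces an $s$-$t$-path in $H$ of the same weight. Since $H$ is a DAG (topologically ordered by $d(s,\cdot)$ along the shortest $s$-$t$-path), Dijkstra -- or a standard DAG shortest-path computation -- finds a minimum-weight $s$-$t$-path, hence an optimum schedule, in time $\bigO(|V(H)|+|E(H)|) \subseteq \bigO(k^2 n^4)$.

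The main subtlety will be the reverse-correspondence direction: I must carefully argue that Lemma~\ref{lem:uniform-characterization} really allows every optimum schedule to be rewritten so that each pick-up point lies in the finite candidate set $Q_i$ -- in particular ruling out the 1-parameter family of interior positions that can arise when an agent happens to arrive at an edge endpoint exactly simultaneously with the package -- and to confirm that the uniform-velocity assumption is precisely what lets each agent's energy contribution decompose additively along the schedule, so that the weight of the corresponding $s$-$t$-path in $H$ truly equals the total energy $\E$ of the underlying schedule.
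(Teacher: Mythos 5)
Your proposal is correct and follows essentially the same route as the paper: the paper's argument for Theorem~\ref{thm:uniform-velocities} is exactly the construction of the candidate sets $Q_i$ from Lemma~\ref{lem:uniform-characterization}, the auxiliary multi-graph $H$ with the stated size bounds, and a shortest-path computation on $H$, with the correspondence between $s$-$t$-paths in $H$ and time-optimal schedules left implicit. The details you supply (the per-edge equality pinning down a unique interior candidate, the degenerate simultaneous-arrival case, and the additive decomposition of $\E$ along arcs) are the right ones to check and are consistent with the paper's intended argument.
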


%% Time First: NP-HARDNESS
%% -----------------------
\subsection{\NP-hardness on planar graphs}

Contrary to \TDelivery (where we had non-decreasing velocities $\pv_i$), 
when prioritizing delivery time \emph{but still regarding energy consumption}, 
we can't characterize the order of the agents by their coefficients $(\pv_i, \pw_i)$:
Consider an instance in which both the starting position $p_a$ of the absolutely fastest agent $a$ 
as well as the package destination $t$ are separated from the rest of the graph by two very long edges $q_a^+$---$p_a$, $q_a^+$---$t$. 
Then in \emph{every} fastest solution, agent~$a$ (with $\pv_a$ large, e.g.~$8$)
must deliver the package from $q_a^+$ to $t$, see Figure~\ref{fig:reduction-idea}~(right).

In \TWDelivery, the task is thus to balance slow but efficient agents (with, e.g., $\pv=1,\pw=0$)
and fast inefficient agents (with, e.g., $\pv=2,\pw=1$) 
to collectively deliver the package to $a$'s pick-up location $q_a^+$ 
\emph{just-in-time} -- i.e., in time $d(p_a,q_a^+)/\pv_a$ -- \emph{without using too much energy}.
We can construct suitable instances by a reduction from \psat~\cite{planar3sat82} \emph{(Sketch)}: Starting from a planar formula $F$
in three-conjunctive normal form, as in Figure~\ref{fig:reduction-idea}~(left), we build a delivery graph $G(F)$.
This can be done such that the instance is guaranteed to have schedules with minimum delivery time, i.e.~with $\T = d(p_a,t)/\pv_a$. 
However, there should only be such a minimum-time schedule which simultaneously has low energy consumption $\E$
\emph{if and only if} the formula $F$ has a satisfiable variable assignment.

To this end, we place the fast agents on nodes corresponding to variables and literals. Intuitively, these agents decide on the 
routing of the package, thus setting the assignment of each variable.
The slow agents, on the other hand, are placed on clause nodes, each clause receiving just one agent short of the number of its literals.
Intuitively, for a just-in-time delivery to $q_a^+$ with small energy consumption, 
each clause has to spend one of its agents for each of its unsatisfied literals. 
By construction, this is only possible if each clause is satisfied:
%Hence finding a fastest delivery with small energy consumption 
%corresponds to finding a satisfiable variable assignment to a planar formulae:

\begin{theorem}
	\TWDelivery is \NP-hard, even on planar graphs.	
	\label{thm:fast-efficient-hardness}
\end{theorem}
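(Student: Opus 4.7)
The plan is a polynomial reduction from \psat that fleshes out the sketch given in the excerpt. Given a planar 3-CNF formula $F$ with clauses $c_1,\ldots,c_m$ and variables $x_1,\ldots,x_n$, I would construct a planar \TWDelivery instance $G(F)$ together with an energy bound $K$ such that $F$ is satisfiable if and only if $G(F)$ admits a minimum-delivery-time schedule with $\E \leq K$. The construction uses three agent types: a single super-fast agent $a$ with $\pv_a = 8$ starting at $p_a$; fast-inefficient agents with $(\pv,\pw) = (2,1)$ placed at variable and literal nodes; and slow-efficient agents with $(\pv,\pw) = (1,0)$ placed at clause nodes, with clause $c_j$ of width $w_j$ receiving exactly $w_j - 1$ of these.

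The global layout of $G(F)$ consists of two long ``dummy'' edges of length $L$, one from $p_a$ to a handover node $q_a^+$ and one from $q_a^+$ to $t$, with a variable--clause gadget assembly inserted between $s$ and $q_a^+$. Choosing $L$ much larger than the diameter of the gadget part forces agent $a$ in every minimum-time schedule to carry the package along $q_a^+ \to t$, so the package must be delivered from $s$ to $q_a^+$ \emph{just-in-time}, that is, by time $L/\pv_a$. Following the planar variable--clause embedding of $F$ from \cite{planar3sat82}, each variable gadget offers a ``true'' and a ``false'' branch; the fast variable agent picks one of them to carry the package, thereby committing to a truth value and positioning the corresponding fast literal agents so that they can subsequently assist the appropriate clause gadgets. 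Each clause gadget of width $w_j$ is designed so that meeting the deadline requires $w_j$ concurrent deliveries at speed $\geq 1$, of which the $w_j - 1$ slow clause agents can supply at most $w_j - 1$; the missing one must be served by a fast literal agent belonging to a true literal in $c_j$.

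For the forward direction, any satisfying assignment yields a schedule in which the package follows the selected branch in every variable gadget, each clause has its missing segment covered by a fast literal agent of a satisfied literal, and agent $a$ completes the delivery at the minimum possible time $d(p_a,t)/\pv_a$; the resulting energy cost depends only on the gadget structure and equals a threshold $K = K(|F|)$. Conversely, any minimum-time schedule with $\E \leq K$ must meet the just-in-time deadline at $q_a^+$; in each clause gadget this rules out covering all $w_j$ segments with only the $w_j - 1$ slow agents, so at least one segment must be served by a fast literal agent whose variable has committed to setting that literal to true. Reading off the selected branches yields a truth assignment satisfying every clause, so $F$ is satisfiable.

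The main obstacle, as is typical for ``just-in-time'' \NP-hardness reductions, will be calibrating the edge lengths and the energy bound $K$ so that the correspondence between schedules and assignments is tight. In particular the construction must rule out alternative strategies such as (a) a slow clause agent covering two of its parallel segments by exploiting slack in the deadline, (b) a fast agent from the unused branch of a variable gadget still managing to help a clause without breaking the deadline elsewhere, and (c) in-edge handovers (which the model allows) being used to trade energy against delivery time in ways the gadget was not designed to accommodate. Making the deadline at $q_a^+$ simultaneously tight for every valid routing, while keeping $K$ polynomial in $|F|$ and precisely on the boundary between satisfiable and unsatisfiable instances, is the central technical step of the proof.
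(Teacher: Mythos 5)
Your outline reproduces the paper's high-level strategy (reduction from \psat, a very fast agent hidden behind two long edges that forces a just-in-time delivery to $q_a^+$, fast inefficient agents at variable/literal positions, and $w_j-1$ slow agents per clause), but it stops exactly where the proof begins. Everything you list under ``the central technical step'' --- the edge lengths, the energy threshold $K$, and the arguments ruling out your alternative strategies (a)--(c) --- \emph{is} the proof; without it there is no reduction to verify. In the paper this calibration is explicit: each variable branch alternates short edges of length $2$ (to be covered by fast agents) with long edges of length $4xy-1$ (coverable within the deadline only by slow agents), the entry edge $\left\{s,u_1\right\}$ has length $12x^2y^2$, the two edges meeting at $q_a^+=u_{x+1}$ have length $128x^2y^2$, the literal/dummy edges get staggered lengths $12x^2y^2+(j-1)\cdot 4xy^2+(l-1)\cdot 4xy+1$ so that \emph{only the adjacent} slow agent can reach a handover node by the exact time the package arrives there, and the energy threshold is $K=2xy$. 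These numbers are what make the deadline simultaneously tight on every admissible routing and what exclude in-edge handovers and cross-gadget help; they cannot be waved at.

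Moreover, your clause mechanism is described in a way that does not work as stated. A single package admits no ``$w_j$ concurrent deliveries,'' and the missing segment is \emph{not} ``served by a fast literal agent belonging to a true literal'': the bound $K=2xy$ pins the fast agents to carrying the package over the short edges only (total carried length exactly $2xy$, moving only while carrying and only forward), so a fast agent can never substitute for a slow one on a long edge without blowing the energy budget. The actual mechanism is dual to yours: the package traverses exactly one branch per variable; every long edge on a traversed branch must be covered by a slow agent arriving just in time, and the only candidates are the agent on the adjacent dummy node or one dispatched from the adjacent clause node. A clause must therefore send one slow agent for each of its literals whose branch is traversed, i.e.\ for each \emph{unsatisfied} literal, and having only $\deg(c_j)-1$ of them forces at least one literal per clause to be satisfied. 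Reformulating your clause gadget along these lines and supplying the quantitative calibration above are both necessary before this counts as a proof.
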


\begin{figure}[t!]
	\centering
	\includegraphics[width=\linewidth]{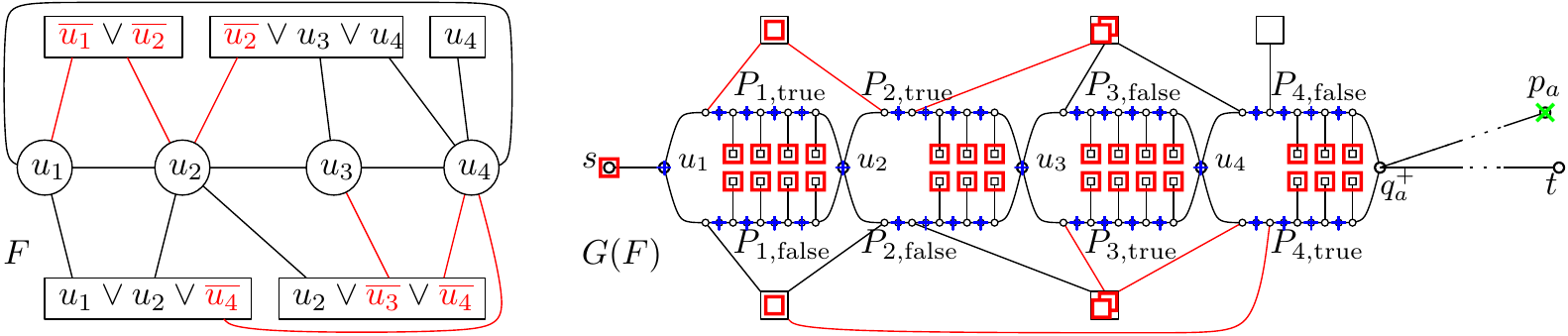}
	\caption{(left) A planar 3CNF formula $F$, satisfiable by $(u_1,u_2,u_3,u_4) = (\text{true, false, false, true}).$ 
	\hfill (right) Its transformation into a corresponding delivery graph $G(F)$. 
	The satisfiable assignment~of~$F$ corresponds to a low-cost delivery in $G(F)$ via paths 
	$P_{1,\mathrm{true}}, P_{2,\mathrm{false}}, P_{3,\mathrm{false}}, P_{4,\mathrm{true}}$, and vice versa.
	We have slow agents for clauses (\textcolor{red}{$\square$}), fast agents for variables/literals (\textcolor{blue}{$+$})
	and a very fast agent (\textcolor{green}{$\times$}).}
	\label{fig:reduction-idea}
\end{figure}

%% Time First: PATHS
%% -----------------
\subsection{An efficient algorithm for paths}

%% Motivation
The preceding hardness result raises the question for which restricted graph classes we can expect an efficient algorithm 
for arbitrary velocity instances. To contribute to this question it is natural to study paths -- on paths, 
the V-shaped $p_a$---$q_a$---$t$ component attached to the rest of the graph, as used in the hardness proof, `collapses' to a line.
%% Main idea
We show that this allows us to decompose the problem into linearly many uniform velocity instances in time $\bigO(n+k\log k)$.  
Theorem~\ref{thm:uniform-velocities} then implies that \TWDelivery can be solved in polynomial-time.
Improving on this by a careful analysis of paths, we show how to solve each uniform velocity instance 
in time $\bigO(n+k\log k)$ as well, and that these instances can be combined in time $\bigO(k)$, 
giving an overall $\bigO(n+k\log k)$-time algorithm.

%% Splitting into uniform instances
\paragraph*{Decomposition into uniform velocity instances}
%% Setting
In the following, we look at the path graph $G$ as the real line, and assume 
(after performing a depth-first search from $s$ and ordering the starting positions in time $\bigO(n+k\log k)$) 
without loss of generality that $s=0<t$, that $p_1 \leq p_2 \leq \ldots \leq p_k$ and that $n=k+2$, 
as the only relevant nodes on the line are $s$, $t$ and the starting positions $p_i$.
Note that in an optimum solution of \TWDelivery, no agent $i$ will ever take over the package from another agent $j$ 
which $i$ overtakes from the left. In particular, this means that we will need at most one agent with starting position $p_i < s$,
and that after the package is picked up at $s$, it will never have to wait between a drop-off by an agent $j$ and a pick-up by the 
next agent $i$, since $j$ could continue carrying the package towards $i$, thus decreasing the overall delivery time. 
Hence in an optimum schedule we also have for consecutive agents $i,j$ with $s<p_j<p_i$, that $\pv_j\leq \pv_i$ 
(otherwise we can discard $i$, by this decreasing the delivery time).

%% Decomposition
\paragraph{Decomposition.} 
Assume that agent $i$ is the agent that delivers the package to $t$. 
We represent the trajectory of the package while being carried by $i$ as a ray giving the position $y$ on the real line as a function $f_i(x)$
of the time $x$ passed so far, see Figure~\ref{fig:time-decomposition} (right). 
We now inductively compute a set containing all functions $f_0, f_1, \ldots, f_k$, where $f_0(x) = s = 0$.

If we have $p_i < 0$, then by the reasoning above, $i$ is the only involved agent, and the function is simply $f_i\colon y = \pv_i\cdot x + p_i$. 
For $p_i > s$, the slope $\pv_i$ of the ray is set, but not its pick-up position. In order to minimize the earliest possible delivery time $x$ 
(i.e.~$f_i(x) = t$), by the non-decreasing velocity property $i$ must pick up the package as early as possible
-- e.g.~in Figure~\ref{fig:time-decomposition} (left), the fastest agent $6$ would not get the package from agent $4$, 
but from agent $5$ who is able to speed up the transport between agents $4$ and $6$, thus advancing the last handover position 
and allowing agent $6$ to pick up the package earlier.

Formally, the pick-up position is given by the time-wise first (or in other words leftmost) intersection of a query line $y=p_i-\pv_i\cdot x$ 
(modelling the agent moving towards $s$) with any preceding ray $f_0, \ldots, f_{i-1}$. 
Let $q_i := (x_i,y_i)$ denote the intersection point of the query line with the upper envelope of the preceding rays,
and denote by $f_j$ a ray of steepest slope $\pv_j$ among all rays $f_0, \ldots, f_{i-1}$ that contain $q_i$
-- e.g.~the query line ``$6?$'' in Figure~\ref{fig:time-decomposition} (right) intersects both $f_1$ and $f_5$ on the upper envelope,
and since both have the same slope, we can consider either.

In case $\pv_j>\pv_i$, agent $i$ will not be used in an optimal schedule and we set $f_i = 0$. If, however, $\pv_j \leq \pv_i$, 
then $f_i$ is given by the line equality $f_i\colon y = \pv_i\cdot x + (y_i-\pv_i\cdot x_i)$.
After completion, an optimum schedule corresponds to a path along the rays of our diagram from $(0,0)$ 
to the ray reaching $y=t$ at the earliest possible time.

\begin{figure}[t!]
	\includegraphics[width=\linewidth]{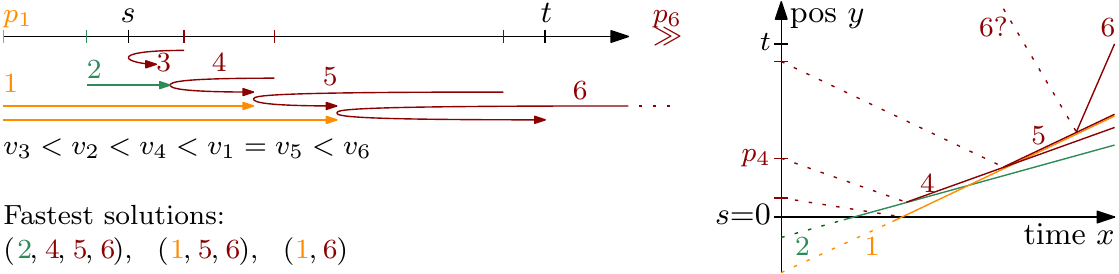}
	\caption[Time first: Line -- decomposition]{(left) Possible optima: 
	If agent $4$ is involved, it must take over the package from agent $2$, since agent $3$ is too slow. 
	Using agents $2$ and $4$ to bring the package to agent 5's pick-up position takes the same time as using agent $1$ on its own. 
	Agents $1$ and $5$ have the same velocity, so in terms of delivery time we could use either or even both of them, 
	but agent $1$ only if agents $2$ and $4$ are both not used (otherwise they all consume energy).
	(right) Fastest solutions correspond to at most $1$ agent with $p_i < s$ and a number of agents
	corresponding to a suffix of the upper envelope.
	}
	\label{fig:time-decomposition}
\end{figure}

%% Recombination
\paragraph{Fast computation and recombination}
To quickly compute the equation of each ray $f_i$, we need to find the intersection of a query line with the 
upper envelope of $\bigO(k)$ many rays.
Precomputing this envelope as an ordered list of its segments would allow us to speed up the intersection queries
from a linear to a binary search (\emph{convex hull trick for dynamic programming}~\cite{wcipeg}).
However, the set of functions that we query here is not known up front. 
Instead, we apply the classic geometric point-line duality~\cite{edelsbrunner1987algorithms}.
In this dual setting, the task of finding the leftmost intersection point of a query line with a set of lines turns into 
finding a right tangent from a query point (the dual of the query line) onto the convex hull of a point set (the dual of the rays $f_i$). 
The dynamic planar convex hull data structure by Brodal and Jacob~\cite{RikoConvexHull02,jacob2002dynamic} 
allows point insertions and tangent queries all in $\bigO(\log k)$ amortized time, giving an overall running time of $\bigO(k\log k)$.
Assuming that we know the optimum schedule for each of the uniform velocity intervals,
it remains to recombine these subschedules:

\begin{lemma}
	Arbitrary velocity instances of \TWDelivery on paths can be decomposed into and recombined from 
	uniform velocity instances in time $\bigO(n+k\log k)$.
	\label{lem:decomposition}
\end{lemma}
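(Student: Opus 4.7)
The plan is to use the ray-diagram construction of this subsection both to identify the uniform-velocity sub-instances and to glue their optimal schedules back together.

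\textbf{Decomposition.} After a DFS from $s$ and a sort of the $k$ starting positions along the line (total cost $\bigO(n+k\log k)$), I would build the rays $f_1,\ldots,f_k$ incrementally. Each new ray $f_i$ is defined by the leftmost intersection of its backward query line $y=p_i-\pv_i x$ with the upper envelope of $\{f_0,\ldots,f_{i-1}\}$. Applying the standard point-line duality $y=ax+b\mapsto (a,-b)$, this query reduces to a right-tangent query from a dual query point onto the convex hull of the dual image of the already-processed rays. The Brodal--Jacob dynamic planar convex hull data structure supports insertion and tangent queries in amortized $\bigO(\log k)$ time, so all $k$ rays are produced in $\bigO(k\log k)$ time. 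A backward walk starting from the ray that first reaches $y=t$ then reads off, in time $\bigO(k)$, the ordered sequence $i_1<\cdots<i_r$ of agents used in a fastest schedule together with their handover coordinates. By the monotonicity property noted just above the lemma, the velocities satisfy $\pv_{i_1}\leq\cdots\leq\pv_{i_r}$; grouping maximal runs of equal velocity yields $m\leq r$ uniform-velocity sub-instances, each inheriting a sub-path (the segment between the two bounding handover coordinates) together with a sub-source and sub-target.

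\textbf{Recombination.} Given an energy-minimum schedule for every sub-instance (supplied by the uniform-velocity path algorithm discussed separately), I would concatenate them in phase order. By construction the drop-off of phase $\ell$ coincides, both in position and in time, with the pick-up of phase $\ell+1$, so the concatenation is a valid schedule realising the global optimum delivery time $\mathcal{T}^{*}$. Since there are at most $m\leq k$ phases, this stage costs $\bigO(k)$, comfortably within the claimed budget of $\bigO(n+k\log k)$.

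\textbf{Main obstacle.} The subtle step is showing that solving each phase independently for minimum $\mathcal{E}$ does yield a globally optimum $(\mathcal{T},\mathcal{E})$-lexicographic solution. The key observation is that the upper envelope of the rays describes, at each time $x$, the farthest position the package can occupy in \emph{any} schedule, so every schedule realising $\mathcal{T}^{*}$ must trace this envelope and its handover points between successive velocity classes coincide with the vertices of the envelope. Shifting such a handover to either side would either delay the package past $\mathcal{T}^{*}$ or force the faster agent to wait idle, which by the exchange arguments underlying Lemma~\ref{lem:uniform-characterization} is dominated by letting the slower agent carry further. With all inter-phase handover coordinates thus pinned, the energy contributions split additively into one summand per phase over disjoint sets of agents, so per-phase optimality of $\mathcal{E}$ implies global optimality.
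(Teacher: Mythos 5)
Your decomposition step (building the rays $f_i$ incrementally via point-line duality and the Brodal--Jacob structure, then extracting the upper envelope) matches the paper. The gap is in your recombination argument, specifically in the ``main obstacle'' paragraph: the claim that every schedule realising $\T^{*}$ must trace the upper envelope, with the inter-phase handover points pinned to its vertices, is false, and the paper's own example (Figure~\ref{fig:time-decomposition}) is a counterexample. There, agent $1$ alone brings the package to agent $5$'s pick-up position in exactly the same time as agents $2$ and $4$ combined: its ray $f_1$ runs \emph{below} the envelope and only touches it tangentially at the handover to agent $5$, yet the resulting schedule is still time-optimal. So the set of participating agents, and hence the set of uniform-velocity phases, is not unique; a single agent's ray can ``cut across'' several velocity classes of the envelope and replace all of them. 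Your backward walk extracts \emph{one} fastest schedule and optimizes its phases independently, which can return a time-optimal but energy-suboptimal solution, since the cheaper alternative may use a different combinatorial structure altogether.

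The paper closes exactly this hole by enumerating all $\bigO(k)$ time-optimal candidates: either the complete upper envelope, or a ray starting on the $x$-axis together with the suffix of the envelope to which it is a right tangent (each ray contributing at most two such tangent vertices). Each envelope segment, and each tangent ray, is treated as its own uniform-velocity sub-instance; the per-segment energies are combined into suffix sums in $\bigO(k)$, and the lexicographic optimum is the minimum over all candidates. To repair your proof you would need to replace the single backward walk by this enumeration (or prove some uniqueness statement that, as the example shows, does not hold). Your claim that, for a \emph{fixed} candidate, the energy splits additively over disjoint agent sets per phase is fine, since agents of distinct velocities belong to distinct phases; only the choice among candidates is missing.
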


%% Uniform instances
\paragraph*{A fast algorithm for uniform velocity instances on the line}
%% What remains
We are left to solve the case where all agents have the same uniform velocity $\overline{\pv}$.
As before, we denote by $\delta$ the offset of the closest agent's starting position to $s$,
and let $a$ denote the corresponding agent. 
No agent $i$ with $p_i<s$ other than maybe agent $a$ is involved in an optimum schedule (all others would only slow down delivery). 
Also note that if $p_a < s$, the setting is equivalent to one where $a$ starts at $s + (s-p_a)$, 
so we can assume (after relabelling the agents) $a=1$, $\delta =p_1$.
This also implies $\T = (\delta + (t-s))/ \overline{\pv}$ and we can ignore agents $i$ that are dominated by earlier, 
cheaper agents $j$ with $p_j < p_i$ and $\pw_j < \pw_i$.

%% DP introduction
\paragraph{Towards a dynamic program.}
We define the point $q_i$ as the leftmost point on the line where agent $i$ can pick up the package without causing a delay, 
i.e.,~we have $q_i := \frac{p_i + s - \delta}{2}$ since $p_i - q_i = \delta + (q_i - s)$.
Note that $q_1 = s$ and $q_j < q_i$ for $j<i$.
Similarly as -- but more specific than -- in the characterization of uniform instances on general graphs 
(Lemma~\ref{lem:uniform-characterization}) we get a limited set of possible pick-up locations:

\begin{lemma}
	There is an optimum solution where each agent $i$ that is involved in advancing the package
	picks it up at $q_i^+ = q_i$ or at $q_i^+ = p_i$.	
	\label{lem:line-structure}
\end{lemma}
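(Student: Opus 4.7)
The plan is a local exchange argument on each involved agent's pickup location, directly exploiting the uniform velocity. I would first pin down the feasible window for an involved agent $i$: $q_i^+ \in [\max(q_i,\,q_{i'}^+),\, p_i]$, where $i'$ denotes the immediately preceding involved agent. The upper bound $q_i^+ \leq p_i$ is forced because at the common velocity $\overline{\pv}$, agent $i$ leaving $p_i$ cannot catch a package that is already moving right at speed $\overline{\pv}$, so any $q_i^+ > p_i$ would require the package to wait and hence violate $\T$-optimality. The lower bound $q_i^+ \geq q_i$ comes directly from the definition of $q_i$ as the rendezvous point of agent $i$ (walking left from $p_i$) with the package (moving right from $s$, available at time $\delta/\overline{\pv}$): any smaller pickup would make the agent arrive late and the package wait.

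I would then fix every other pickup and drop-off and view the total energy as a function of $q_i^+$ alone. Moving $q_i^+$ right by $\varepsilon$ shortens agent $i$'s round-trip walk from $p_i$ by $2\varepsilon$ (gain $-2\pw_i\,\varepsilon$) while lengthening agent $i'$'s carry by $\varepsilon$ (cost $+\pw_{i'}\,\varepsilon$), so the energy is affine in $q_i^+$ with slope $\pw_{i'} - 2\pw_i$. Its minimum on the feasible interval is therefore attained at a boundary: if the slope is non-positive I push $q_i^+$ up to $p_i$ (the second case of the lemma); if the slope is positive and the lower endpoint equals $q_i$ (i.e.\ $q_i \geq q_{i'}^+$) I push $q_i^+$ down to $q_i$ (the first case).

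The main obstacle is the remaining subcase, where the slope is positive but $q_{i'}^+ > q_i$, so the downward push lands on $q_{i'}^+$, which need not coincide with $q_i$ or $p_i$. My resolution is to observe that $q_i^+ = q_{i'}^+$ means agent $i'$ drops the package exactly where it picked it up and hence contributes no transport; removing $i'$ from the schedule preserves $\T$ and does not increase $\E$, and I can re-run the exchange with the new preceding agent $i''$ and new slope $\pw_{i''} - 2\pw_i$. Each iteration strictly removes one agent from the sequence, so the procedure terminates; in the extreme case the preceding agent collapses onto agent~$1$ with $q_1^+ = s = 0 \leq q_i$, making the lower endpoint equal to $q_i$ itself and closing the argument. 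Processing the involved agents in increasing order of pickup position then yields an optimum schedule with $q_i^+ \in \{q_i,\, p_i\}$ for every involved $i$.
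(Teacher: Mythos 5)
Your overall strategy---view the energy as an affine function of $q_i^+$ with slope $\pw_{i'}-2\pw_i$ and push the pickup point to a boundary of its feasible window---is essentially the paper's argument, and your handling of the two awkward subcases is sound: pushing to $p_i$ when the slope vanishes replaces the paper's extremal-schedule contradiction, and your iterative removal of a predecessor that ends up carrying nothing (terminating because each step deletes an agent, bottoming out at agent $1$ with $q_1^+=s\leq q_i$) correctly resolves the case where the push-down lands on $q_{i'}^+>q_i$, a degeneracy the paper sidesteps by only ever needing an $\varepsilon$-perturbation to reach a contradiction.

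There is, however, one genuinely wrong step: your justification of the upper bound $q_i^+\leq p_i$. You argue that an agent leaving $p_i$ ``cannot catch'' a package already moving right at the common speed, so $q_i^+>p_i$ would force the package to wait. But agent $i$ is stationed at $p_i$ from time $0$ and the package reaches $p_i$ only at time $(\delta+p_i-s)/\overline{\pv}>0$; the agent can therefore walk rightward in lockstep with (or ahead of) the package and take it over at any point beyond $p_i$ with zero delay. Time-optimality does \emph{not} rule out $q_i^+>p_i$. The bound is an \emph{energy} fact: after discarding dominated agents the predecessor satisfies $\pw_{i'}>\pw_i$, so letting $i'$ carry the package across $[p_i,q_i^+]$ while $i$ walks alongside costs $(\pw_{i'}+\pw_i)(q_i^+-p_i)$ more than handing over at $p_i$ --- equivalently, the energy is piecewise affine in $q_i^+$ with slope $\pw_{i'}+\pw_i>0$ to the right of $p_i$, so the minimum still sits at $q_i$ or at the kink $p_i$. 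This is exactly how the paper argues ($q_i^+<p_i$ because ``the transportation on $[p_i,p_i+\epsilon]$ would otherwise use more energy than necessary''). The fix is local and the rest of your proof survives it, but as written the step fails.
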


%% DP
\paragraph{Dynamic program.} Lemma~\ref{lem:line-structure} suggests that in an inductive approach from left to right 
it suffices to consider only finitely many handover options. We define the following subproblems:
\begin{align*}
	S\left[ i \right] =	& \text{ An energy-optimal schedule to deliver the package to $p_i$}	\\
				& \text{ in time $(\delta + (p_i - s))/ \overline{\pv}$, 
					using only the agents $\left\{ 1,2,\ldots,i \right\}$.}\\
	\E\left[ i \right] =	& \text{ Energy consumption of $S\left[ i \right]$.}\\
	A\left[ i \right] =	& \text{ Index of the last package-carrying agent in $S\left[ i \right]$.}\\
	A'\left[ i \right] =	& \text{ Index of the second to last carrying agent in $S\left[ i \right]$ (if any).}
\end{align*}
We will argue how to compute the optimum energy costs $\E[i]$ (and with it $A[i]$ and $A'[i]$) 
without explicitly maintaining the schedules $S[i]$ ($S[i]$ can later be retraced from $A[i]$ and $A'[i]$).
For computing $\E[i]$, $A[i]$ and $A'[i]$, we distinguish four cases (also shown in Figure~\ref{fig:dp-cases}):
\begin{enumerate}
	\item	Agent $i$ is not involved in $S[i]$.
	\item	Agent $i$ is involved in $S[i]$.
		Hence by Lemma~\ref{lem:line-structure}, agent $i$ has pick-up location $q_i^+=q_i$; 
		and we get the following three variations:
		\begin{enumerate}
			\item	$i=1$ and agent $1$ picks up the package at $s$ itself.
			\item	Agent $i$ picks up the package from some other agent $j$ with $p_j \leq q_i$.
			\item	Agent $i$ picks up the package from some other agent $j$ with $p_j > q_i$.
		\end{enumerate}
\end{enumerate}

\begin{figure}
	\centering\includegraphics[width=\linewidth]{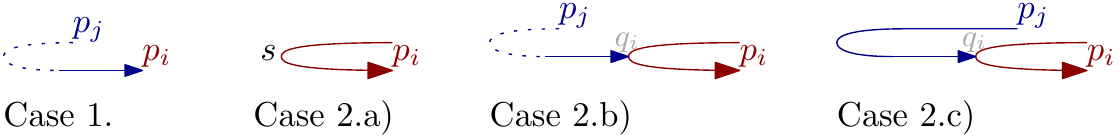}
	\caption{Case distinction in the dynamic program for \TWDelivery on the line. Either agent $i$ is not involved at all, does all on its own or is subsequent to some agent $j$, where we distinguish between $p_j \leq q_i$ and $p_j > q_i$.\label{fig:dp-cases}}
\end{figure}

\noindent
In cases 1, 2b) and 2c), we can determine $\E[i]$ in constant time using a single prior entry of the dynamic programming table:

\begin{description}
	\item[Case 1.]	If $i$ is not involved in $S[i]$, the best choice for the agent
		who transports the package to $p_i$ is agent $i-1$, 
		as it is the cheapest one on the last segment $[p_{i-1}, p_i]$ and we have 
		$\E[i-1] \leq \E[j] + (p_{i-1} - p_j) \cdot \pw_j$ for all $j<i-1$ by induction. 
		Hence we can optimize in constant time:
		\begin{align*}
			\E[i] &= \min_{j<i} \{ \E[j] + (p_i-p_j)\cdot \pw_j \} = \E[i-1] + (p_i-p_{i-1})\cdot \pw_{i-1},\\
			A[i] &= i-1 \text{ and } A'[i] = A[i-1].
		\end{align*}
	\item[Case 2.a)] This is the base case where the first agent is on its own:
		\begin{align*}
			\E[1] = 2\cdot \lvert p_1 - s \rvert \cdot \pw_1 = 2\cdot \delta \cdot \pw_1, 
			\quad A[1] = \text{none},\  A'[1] = \text{none}.
		\end{align*}
	\item[Case 2.b)] If agent $i$ is involved in $S[i]$ and takes over at $q_i$ from an agent $j$ with $p_j \leq q_i$, 
		we want $j$ to minimize $\E[j] + (q_i - p_j) \cdot \pw_j$, the cost of bringing the package to $q_i$. 
		Now let $i' = \max \{j \mid p_j \leq q_i\}$ be the agent starting closest to the left of $q_i$. 
		As in Case~1, we argue that $i'$ is the optimum choice for $j$, as it minimizes the cost on $[p_{i'}, q_i]$ 
		and does not constrain the schedule up to $p_{i'}$ further. 
		Hence, we again get in amortized constant time, i.e., we update $i'$ by incrementing it lazily when going from $i$ to $i+1$:
		\begin{align*}
			\E[i] &= \E[i'] + (q_{i}-p_{i'})\cdot \pw_{i'} + 2\cdot(p_i - q_i)\cdot \pw_i, \quad
			A[i] = i,\ A'[i] = i'.
		\end{align*}
\end{description}
The most interesting case is the remaining case $(2.c)$, where the agent $j$ handing over to $i$ starts in between $q_i$ and $p_i$.
Where can we look up the energy consumption $c$ of an optimum schedule that ends with $j$ bringing the package $q_i$ --
the dynamic program being only defined for points $p_j$?
For some $j$, we might have $A[j] = j$, so $S[j]$ ends by $j$ walking to $q_j$ and back. 
In that case, we can exploit $q_j < q_i$ and use $\E[j]-(p_j-q_i)\cdot \pw_j$ as a candidate for the energy consumption $c$.
But what if $A[j] \neq j$? As we saw, this implies $A[j] = j-1$,
but in that case we cannot just subtract $(p_j-q_i) \pw_j$: We do not know how $S[j]$ looks like between $q_i$ and $p_j$.
We argue that we do not need to consider these agents $j$ as candidates at all!

%% Structural Lemma for most interesting case 2.c)
\begin{lemma}
	If in some optimal schedule $S[i]$ the agent $j$ preceding $i$ is of type 2.c), 
	then in the schedule $S[j]$ we have $A[j] = j$.
	\label{lem:line-interesting-agents}	
\end{lemma}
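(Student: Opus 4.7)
The plan is an exchange argument: given any optimal $S[i]$ whose second-to-last carrier $j$ is of type~2.c (so $p_j>q_i$), I would show that if $A[j]=j-1$ then the DP already considers another schedule of cost at most that of $S[i]$, either as a case-2.b candidate or as a case-2.c candidate with a valid predecessor $k$ satisfying $A[k]=k$. This suffices because it means no optimum is lost by skipping type-2.c predecessors with $A[j]=j-1$. The natural substitute is discovered by tracing the chain of case-1 predecessors upward from $j$: define $k^{*}=\max\{k\le j:A[k]=k\}$, which exists because $A[1]=1$ via the case-2.a base, and is strictly less than $j$ by hypothesis. Since $A[l]=l-1$ for every $l\in\{k^{*}+1,\ldots,j\}$, case~1 \emph{attains} $\E[l]$ at each link, and iterating yields the identity
\[	\E[j]=\E[k^{*}]+\sum_{l=k^{*}+1}^{j}\pw_{l-1}(p_l-p_{l-1}).	\]
The filtering step preceding Lemma~\ref{lem:line-structure} additionally gives the weight monotonicity $\pw_{k^{*}}\ge\pw_{k^{*}+1}\ge\cdots\ge\pw_j$, which will drive the sign calculations below.

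The comparison splits on whether $p_{k^{*}}>q_i$ or $p_{k^{*}}\le q_i$. Either way, denoting by $C$ the optimum cost of bringing the package to $q_j$ using $\{1,\ldots,j-1\}$, the case-2.c-with-$j$ schedule costs $C+\pw_j(p_j+q_i-2q_j)+2\pw_i(p_i-q_i)$, and the feasibility of "use $C$, then let $j$ carry to $p_j$'' as a candidate for $S[j]$ gives $C\ge\E[j]-2\pw_j(p_j-q_j)$, so the case-2.c-with-$j$ cost is at least $\E[j]-\pw_j(p_j-q_i)+2\pw_i(p_i-q_i)$. If $p_{k^{*}}>q_i$, the DP evaluates the substitute case-2.c-with-$k^{*}$ of cost $\E[k^{*}]-\pw_{k^{*}}(p_{k^{*}}-q_i)+2\pw_i(p_i-q_i)$; plugging in the telescoped identity and decomposing $p_j-q_i=(p_{k^{*}}-q_i)+\sum_{l=k^{*}}^{j-1}(p_{l+1}-p_l)$ reduces the difference to $\sum_{l=k^{*}}^{j-1}(\pw_l-\pw_j)(p_{l+1}-p_l)+(\pw_{k^{*}}-\pw_j)(p_{k^{*}}-q_i)\ge 0$. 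If instead $p_{k^{*}}\le q_i$, set $i'=\max\{k<i:p_k\le q_i\}\ge k^{*}$ and compare to the DP's case-2.b candidate of cost $\E[i']+(q_i-p_{i'})\pw_{i'}+2\pw_i(p_i-q_i)$; the analogous rewriting $p_j-q_i=\sum_{l=i'+1}^{j}(p_l-p_{l-1})-(q_i-p_{i'})$ reduces the difference to $(\pw_{i'}-\pw_j)(p_{i'+1}-q_i)+\sum_{l=i'+2}^{j}(\pw_{l-1}-\pw_j)(p_l-p_{l-1})\ge 0$, since $p_{i'+1}>q_i$ by the maximality of $i'$.

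The main obstacle I anticipate is the first displayed identity: it requires noticing that $A[l]=l-1$ throughout the chain is precisely the assertion that case~1 attains $\E[l]$, so the telescoping is an equality rather than merely an upper bound. A secondary subtlety is the second sub-case, where the natural substitute $k^{*}$ is no longer a case-2.c candidate; closing the exchange there requires combining the filter monotonicity along $[p_{k^{*}},p_j]$ with the strict inequality $p_{i'+1}>q_i$ coming from the maximality of $i'$. Once both sub-cases produce a DP-considered alternative of cost at most that of $S[i]$, the lemma follows.
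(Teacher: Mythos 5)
Your argument is correct, but it takes a genuinely different route from the paper and in fact establishes a slightly different (weaker, though sufficient) statement. The paper proves the structural claim literally: it takes the $[s,q_i]$-prefix of the optimal $S[i]$, appends agent $j$ walking on to $p_j$, and argues that the result is an optimal candidate for $S[j]$ whose last leg is forced because $j$ is the \emph{unique} most efficient agent available on $[q_i,p_j]$, hence $A[j]=j$. You instead show that the restricted Case~2.c recurrence cannot miss the optimum even if $A[j]=j-1$: you trace the Case-1 chain up to $k^*=\max\{k\le j: A[k]=k\}$, telescope $\E[j]=\E[k^*]+\sum_l \pw_{l-1}(p_l-p_{l-1})$ (valid precisely because $A[l]=l-1$ means Case~1 attains $\E[l]$), and use the weight monotonicity of the filtered agents to check that either the Case-2.c candidate $k^*$ (when $p_{k^*}>q_i$) or the Case-2.b candidate $i'$ (when $p_{k^*}\le q_i$) is at least as cheap as the schedule routed through $j$; both sign computations are right, including the grouping $(\pw_{i'}-\pw_j)(p_{i'+1}-q_i)\ge 0$ that saves the second sub-case. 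What your version buys: it certifies correctness of the recurrence directly and does not lean on the strict uniqueness of $j$'s weight that the paper's ``unique most efficient agent'' step quietly uses (the domination filter only guarantees $\pw_l\ge\pw_j$ for $l<j$, not strict inequality), and it handles the awkward situation where the natural substitute is not even a 2.c candidate. What it costs: it is considerably longer, and it does not prove the literal assertion that \emph{every} type-2.c predecessor of an optimal $S[i]$ satisfies $A[j]=j$ --- only that restricting to such predecessors loses nothing, which is all the dynamic program needs. Two bookkeeping remarks: your base case reads $A[1]=1$ while the paper's Case~2.a sets $A[1]=\text{none}$ (to be read as $1$ for your $k^*$ to exist), and your cost formula for the 2.c-with-$j$ schedule implicitly uses that $j$ picks up at $q_j$ rather than $p_j$, which does follow from Lemma~\ref{lem:line-structure} because $j$ hands over at $q_i<p_j$.
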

%% Proof from Appendix
\begin{proof}
	Under the assumption of Case~2.c), the cost of agent $i$ is fixed to $2 \cdot (p_i - q_i) \cdot \pw_i$.
	Agents $1$ to $i-1$ will collaborate in the most efficient way to bring the package up to $q_i$.
	By definition of $j$, $j$ is the last agent bringing the package to $q_i$.
	From the decreasing weight property, we know that none of the agents $j+1$ to $i-1$ were involved in $S[i]$.
	So if we take the partial schedule of $S[i]$ up to $q_i$ and extend it by letting $j$ bring the package to $p_j$, 
	we obtain a feasible candidate schedule $S'$ for $S[j]$ as none of the agents $j+1$ to $i$ are involved.
	We now argue that $S'$ is an optimum schedule for $S[j]$.
	The segment $[q_i,p_j]$ is covered with the minimum possible energy, 
	as $j$ is the \emph{unique} most efficient agent available for $S[j]$.
	The segment $[s, q_i]$ is also covered cheapest possible as its part of $S[i]$ was optimized over all agents $1$ to $i$, 
	so a superset of the agents available for $S[j]$.
	Moreover, the uniqueness implies that \emph{all} optimum schedules for $S[j]$ need to end with agent $j$ on $[q_i,p_j]$, 
	hence $A[j] = j$.
\end{proof}

\begin{description}
	\item[Case 2.c)] Lemma~\ref{lem:line-interesting-agents} leaves us with only those agents $j$ 
		whose schedules $S[j]$ we understand sufficiently to modify them into candidates for $S[i]$ under Case~2.c):
		\begin{align*}
			\E[i] & = \min_{j} \{\E[j] - (p_j - q_i) \pw_j \mid q_i < p_j < p_i \wedge A[j]=j \} 
				 + 2(p_i - q_i)\pw_i,\\
			A[i]  & = i,\ \,
			A'[i] = \argmin_{j} \{\E[j] - (p_j - q_i)\pw_j \mid q_i < p_j < p_i \wedge A[j]=j \}.
		\end{align*}
\end{description}
We can now take $\E[i]$ as the minimum over the four cases 1--2.c) and compute all schedules $S[i]$ by proceeding over 
all subproblems in increasing order, giving us the energy-optimal schedules for delivering the package to the points $p_i$
in time $(\delta+(p_i-s))/\overline{\pv}$.
How can we use the solutions to the subproblems $\E[i]$ to find the energy $\E$ of an energy-optimal schedule 
delivering the package to the target $t$ in optimum time $(\delta+(t-s))/\overline{\pv}$? 
Let $k'$ be the closest agent on the left of $t$, i.e., $k' := \argmax_i {p_i\leq t}$.
Clearly, if in an optimum schedule the package is delivered to $t$ by an agent starting to the left of $t$, then by the decreasing 
weight property this agent must be agent $k'$, giving us $\E = \E[k'] + (t-p_{k'})\cdot \pw_{k}$.

\paragraph{Delivery to $t$ and agents with $p_i > t$.}
It remains to take care of agents with starting positions $p_i>t$:
%, which can be characterized similar to Lemma~\ref{lem:line-interesting-agents}.  
\begin{figure}
	\centering\includegraphics[width=\linewidth]{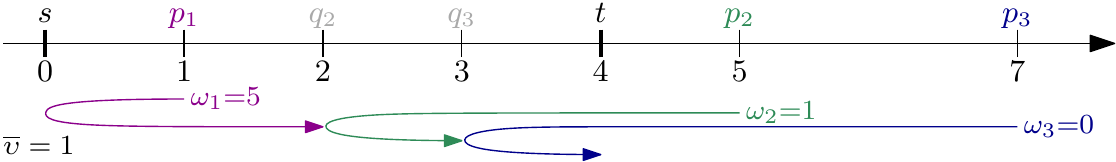}
	\caption{An example, where the only optimum schedule uses both agents on the right of $t$.
		The optimum schedule has delivery time $\T = (p_1 + t - s)/\overline{\pv} = 5/1=5$ and energy consumption 
		$\E = p_1 \cdot \pw_1 + (p_2-q_2 + q_3-q_2)\cdot \pw_2 + (p_3-q_3 + t-q_3)\cdot \pw_3
	= 3 \cdot 5 + 4 \cdot 1 + 5 \cdot 0 = 19$.\label{fig:time:right-example}}
\end{figure}
As illustrated in Figure~\ref{fig:time:right-example}, multiple agents with $p_i > t$ might be involved in the most efficient delivery.
Note that our dynamic programming problem $\E[i]$ is defined independent of $t$ and so we can also easily compute $\E[i]$ for $p_i > t$.
Agents $i$ with $q_i > t$ are not useful, however, for a delivery to $t$, as they arrive in $[s,t]$ only after the package has been delivered.
Similar to Lemma~\ref{lem:line-interesting-agents}, 
we claim that among the remaining agents $i$ only those with $A[i] = i$ need to be considered:

\begin{lemma}
	If an agent $i$ with $p_i > t$ is the last agent in any optimal schedule $S$ from $s$ to $t$, then $A[i]=i$.
	\label{lem:line-right-agents}	
\end{lemma}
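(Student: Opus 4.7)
The plan is to mirror the exchange argument of Lemma~\ref{lem:line-interesting-agents}. Let $S$ be an optimum $s$-to-$t$ schedule whose last carrier is $i$. By Lemma~\ref{lem:line-structure}, agent $i$'s pick-up position in $S$ is either $q_i$ or $p_i$; since the delivery stops at $t<p_i$, it must be $q_i$. Write $\sigma$ for the prefix of $S$ transporting the package from $s$ to $q_i$. A preliminary observation is that $\sigma$ involves only agents of index less than $i$: any agent $j$ with $p_j\geq p_i$ has $q_j\geq q_i$, so the no-waiting property of optimum schedules on paths would force a delay, ruling out its participation on $[s,q_i]$. In addition, $\sigma$ is itself an optimum sub-schedule for delivering to $q_i$ in the required time; otherwise a cheaper prefix could be spliced into $S$, contradicting its optimality.

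Construct the candidate schedule $\tilde S$ for $S[i]$ by extending $\sigma$ with agent $i$ walking straight from $q_i$ to $p_i$. Then $\tilde S$ delivers to $p_i$ in the required time using only agents from $\{1,\dots,i\}$, with last carrier $i$. A direct computation gives $\E(\tilde S)=\E(\sigma)+2(p_i-q_i)\pw_i$ and $\E(S)=\E(\sigma)+(p_i+t-2q_i)\pw_i$, hence $\E(\tilde S)=\E(S)+(p_i-t)\pw_i$.

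The core step is to show that $\tilde S$ is optimum for $S[i]$, which establishes $A[i]=i$. Suppose for contradiction that some $S^*$ is optimum for $S[i]$ with $\E(S^*)<\E(\tilde S)$ and last carrier $j\neq i$; then $j<i$, and since the preprocessing for the line case removes dominated agents, $\pw_j\geq\pw_i$. Because $t<p_i$, the package in $S^*$ necessarily crosses $t$ on its way to $p_i$; intercepting it there produces a feasible $s$-to-$t$ schedule in optimum time of energy $\E(S^*)-(p_i-t)\pw_j\leq \E(S^*)-(p_i-t)\pw_i<\E(\tilde S)-(p_i-t)\pw_i=\E(S)$, contradicting the optimality of $S$.

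The main obstacle is justifying the weight comparison $\pw_j\geq\pw_i$: without the dominated-agent normalization, we could have $\pw_j<\pw_i$, and truncating $S^*$ at $t$ would not save enough energy to undercut $S$. A secondary technicality is ruling out agents of index $\geq i$ from the prefix $\sigma$, which is handled by the no-waiting/monotonicity structure of optimum line schedules noted earlier in this section.
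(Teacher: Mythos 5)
Your proof is correct and takes essentially the same route as the paper's: both extend the optimal $s$-$t$ schedule by letting agent $i$ continue from $t$ to $p_i$ to obtain a candidate for $S[i]$, and both rely on the decreasing-weight property (no agent available for $S[i]$ is cheaper than $i$) to show this candidate is optimal — the paper states the decomposition into an optimal $[s,t]$ part plus the cheapest agent on $[t,p_i]$ directly, while you argue the contrapositive by truncating a hypothetically cheaper $S^*$ at $t$. The only soft spot, shared with the paper, is that concluding $A[i]=i$ for \emph{all} optimal schedules $S[i]$ requires agent $i$ to be \emph{strictly} cheapest on $[t,p_i]$, which the paper simply asserts and which you correctly flag as resting on the dominated-agent normalization.
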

\begin{proof}
	We have $q_i < t < p_i$. By the decreasing weight property, no agent $j>i$ will be used in $S$.
	We extend $S$ to a schedule $S'$ by letting agent $i$ walk from $t$ to $p_i$. Then $S'$ is a candidate for $S[i]$.
	Similar to Lemma~\ref{lem:line-interesting-agents}, $S'$ consists of an optimal solution for $[s,t]$ 
	and the strictly cheapest agent on $[t,p_i]$ and hence $S'$ is optimal for $S[i]$ and all optimum schedules for $S[i]$ have $A[i]=i$.
	The optimum $s$-$t$-delivery is thus given by:
	\begin{align*}
		\E = \min_{j} \left\{ \E[j] - (p_j-t) \pw_j \mid \left( q_j < t < p_j \text{ and } A[j]=j \right) \text{ or } j = k' \right\},
	\end{align*}
	which takes linear time once at the very end.
\end{proof}

%% Computational bottleneck
\paragraph{Details of the dynamic program.}
% Finally,
The computational bottleneck of our dynamic program is (for each subproblem $\E[i]$)
the minimization over the set of options in Case~2.c). Each option evaluates a linear function 
$f_j(q_i):= \pw_j \cdot q_i + (\E[j] - p_j \cdot \pw_j)$ at position $q_i$, which can be seen as a lower envelope intersection query.
Similarly to before, we use point-line duality and a dynamic convex hull data
structure to avoid considering all agents explicitly as predecessors and instead quickly search the best one.

\begin{lemma}
	An optimum schedule for \TWDelivery with uniform velocity $\overline{\pv}$ on the line can be computed in $\bigO(n + k \log k)$ time.
	\label{lem:dynamic-dp}	
\end{lemma}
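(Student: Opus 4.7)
The plan is to bound the per-iteration cost of the dynamic program described above so that the total work sums to $\bigO(n + k\log k)$. Preprocessing -- a depth-first search of the path to linearize it, sorting the agents by their starting positions, and discarding all but the leftmost agent among those with $p_i < s$ -- costs $\bigO(n+k\log k)$. After that, the dynamic program iterates over $i = 1, \ldots, k$ and, for each $i$, computes the triple $(\E[i], A[i], A'[i])$ by taking the best of the four cases, so it suffices to implement each iteration in amortized $\bigO(\log k)$ time.

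Cases~1 and 2.a) are trivially $\bigO(1)$, needing only $\E[i-1]$ and $\pw_{i-1}$ (respectively $\E[1]$ in the base case). In Case~2.b), the required index $i' := \max\{j \mid p_j \leq q_i\}$ advances monotonically with $i$, since both $p_j$ and $q_i$ are non-decreasing in $i$; a single pointer advanced lazily across iterations contributes total work $\bigO(k)$, i.e.\ amortized $\bigO(1)$ per step. The computational bottleneck is therefore Case~2.c), which asks for $\min_j f_j(q_i)$ over the dynamically evolving set $C_i := \{j \mid q_i < p_j < p_i \text{ and } A[j] = j\}$, where $f_j(x) := \pw_j \cdot x + (\E[j] - p_j \pw_j)$. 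This is a lower-envelope evaluation; by standard point-line duality we map each line $f_j$ to a point in the plane so that evaluating the lower envelope at $x = q_i$ becomes a tangent query from the dual of the query line -- precisely the operation supported in $\bigO(\log k)$ amortized time by the dynamic planar convex hull of Brodal and Jacob~\cite{RikoConvexHull02,jacob2002dynamic}.

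The set $C_i$ changes in a benign way across iterations: at the end of iteration $i-1$ we insert the point dual to $f_{i-1}$ iff $A[i-1]=i-1$, and at iteration $i$ we remove those points with $p_j \leq q_i$; because both $p_j$ and $q_i$ are monotone in $i$, the removals happen in FIFO order with respect to $p_j$. Inserts, deletes and tangent queries all cost $\bigO(\log k)$ amortized, so the main loop runs in $\bigO(k \log k)$. After the loop, we compute the overall optimum $\E$ for delivery to $t$ by a single linear scan over $\{j \mid q_j < t < p_j \text{ and } A[j]=j\} \cup \{k'\}$ as justified by Lemma~\ref{lem:line-right-agents}, and we reconstruct the schedule by backtracking through $A$ and $A'$; both cost $\bigO(k)$. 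The main obstacle is verifying that the candidate set indeed only undergoes right-end insertions and left-end (FIFO) deletions so that a single dynamic convex hull suffices -- but this follows directly from the monotonicity of $p_j$ and $q_i$ established above.
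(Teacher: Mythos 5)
Your proof is correct and follows essentially the same route as the paper: isolate Case~2.c) as the bottleneck, dualize the lower-envelope evaluation, and maintain the candidate set (insertions when $A[j]=j$, FIFO deletions as $q_i$ grows monotonically) in the Brodal--Jacob dynamic planar convex hull at $\bigO(\log k)$ amortized per operation, with the remaining cases handled in amortized constant time. The only nitpick is that the vertical query line $x=q_i$ has no finite dual point under standard point-line duality, so one must either swap coordinates first (as the paper does) or phrase the query as an extreme-point query in a direction determined by $q_i$ rather than literally as ``a tangent from the dual of the query line.''
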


Combining this with Lemma~\ref{lem:decomposition} gives the full solution on paths. 
Note that strictly speaking, in the uniform velocity instances, the package is not available at $s$ at time zero, 
but is brought there by agents of preceding instances at exactly the time when the first agent can reach it.

\begin{theorem}	
	An optimum solution for \TWDelivery on paths can be computed in $\bigO(n + k \log k)$ time.
	\label{thm:path}
\end{theorem}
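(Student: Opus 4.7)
The plan is to prove the theorem as a direct combination of the decomposition in Lemma~\ref{lem:decomposition} with the uniform-velocity dynamic program of Lemma~\ref{lem:dynamic-dp}, the two main technical contributions of this section. First I would invoke the decomposition lemma to split the given arbitrary velocity instance into a sequence of uniform velocity sub-instances in $\bigO(n + k\log k)$ time. Every agent whose ray lies on the upper envelope of the trajectory diagram is assigned to exactly one sub-instance (namely the maximal consecutive stretch of the envelope with matching velocity); every other agent can be discarded since, by the ray-envelope construction, no fastest schedule can profit from it.

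For each resulting sub-instance, with $k_j$ agents and $n_j \leq k_j + 2$ relevant positions, I would run the dynamic program of Lemma~\ref{lem:dynamic-dp}, which finds an energy-optimal schedule of minimum delivery time in $\bigO(n_j + k_j \log k_j)$ time. Since the sub-instances partition the envelope agents, $\sum_j k_j \leq k$, and using $\log k_j \leq \log k$ the total cost of solving all sub-instances aggregates to $\bigO\!\left(\sum_j n_j + (\sum_j k_j)\log k\right) \subseteq \bigO(n + k\log k)$. The final recombination of the individual subschedules into one schedule for the original instance is handled in $\bigO(k)$ time, again by Lemma~\ref{lem:decomposition}, so that the three components sum to the claimed bound.

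The main point requiring care, and the only place where a mistake could sneak in, is to verify that Lemma~\ref{lem:dynamic-dp} still applies in sub-instances past the first, where the package is not available at the sub-instance's source at time zero but is delivered there by the preceding phase. As the remark following Lemma~\ref{lem:decomposition} points out, this handover happens precisely at the earliest moment when the first agent of the sub-instance could reach that source; this matches exactly the $\delta = p_1$ setup assumed in the uniform-velocity DP, so after a purely cosmetic translation of the time axis the DP runs verbatim and its energy optimum is also the energy optimum within the sub-instance's fixed delivery time. With this translation argument in place, correctness follows from the correctness of the two lemmas, and the running-time bound follows from the summation above.
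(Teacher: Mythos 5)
Your proposal is correct and follows essentially the same route as the paper, which obtains Theorem~\ref{thm:path} by combining Lemma~\ref{lem:decomposition} with Lemma~\ref{lem:dynamic-dp} and notes, exactly as you do, that in later sub-instances the package arrives at the sub-instance's source precisely when the first agent there can reach it, so the uniform-velocity DP applies after a time shift. Your explicit aggregation of the per-sub-instance running times is a detail the paper leaves implicit, but it is the intended argument.
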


%%%% CONVEX COMBINATION
%%%% ------------------
\section{Optimizing convex combinations of objectives}
\label{sec:combination}

%% Hardness
In this section, we look at a convex combination of the two objectives: 
minimizing both the delivery time $\T$ and the energy consumption $\E$ by minimizing the term $\epsilon\cdot \T+(1-\epsilon)\cdot \E$, 
for a given value $\epsilon,\ 0 < \epsilon < 1$. We call the problem of minimizing this combined objective \CDelivery.
As an application of the \NP-hardness proof for \TWDelivery, we get \NP-hardness of \CDelivery as well:
The main idea is to counter small values of $\epsilon$ by scaling the weights of the agents by a small factor $\delta(\epsilon)$, 
thus decreasing the importance of $\E$ alongside $\T$ as well.
\begin{theorem}
	\CDelivery is \NP-hard for all $\epsilon \in (0,1)$, even on planar graphs.	
	\label{thm:combination-hardness}
\end{theorem}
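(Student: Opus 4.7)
The plan is to reduce from \TWDelivery on planar graphs (Theorem~\ref{thm:fast-efficient-hardness}). Given a \psat-based \TWDelivery instance $\mathcal{I}$ produced by that reduction, with minimum delivery time $T^\ast$ and threshold energy $E^\ast$ such that the underlying formula $F$ is satisfiable iff $\mathcal{I}$ admits a schedule with $\T = T^\ast$ and $\E \leq E^\ast$, I would construct a \CDelivery instance $\mathcal{I}_\delta$ by keeping the graph, agent positions and velocities unchanged, and replacing each weight $\pw_i$ by $\delta \cdot \pw_i$ for a sufficiently small positive rational $\delta$ that depends on $\epsilon$ and on the bit length of $\mathcal{I}$. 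Note that for every schedule $S$ the \CDelivery objective on $\mathcal{I}_\delta$ is $\epsilon \cdot \T(S) + (1-\epsilon)\cdot \delta \cdot \E(S)$, where $\E(S)$ is the original (unscaled) energy of $S$.

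First I would establish a polynomial upper bound $E_{\max}$ on the energy of any schedule in which each agent acts at most once (which by the preliminaries suffices to consider), and a polynomial lower bound $\mu > 0$ on $\T(S) - T^\ast$ over all schedules $S$ with $\T(S) > T^\ast$. Choosing $\delta$ so that $(1-\epsilon)\cdot \delta \cdot E_{\max} < \epsilon \cdot \mu / 2$ then forces every optimal schedule of \CDelivery on $\mathcal{I}_\delta$ to satisfy $\T = T^\ast$: any schedule with $\T > T^\ast$ pays at least $\epsilon \mu$ extra in the first term, which cannot be recouped by a saving of at most $(1-\epsilon)\delta E_{\max}$ in the second term.

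Among schedules with $\T = T^\ast$, the \CDelivery objective reduces to $\epsilon T^\ast + (1-\epsilon)\delta \E$, which is monotone in $\E$. Hence the optimum value of \CDelivery on $\mathcal{I}_\delta$ is $\epsilon T^\ast + (1-\epsilon)\delta E^{\mathrm{opt}}$, where $E^{\mathrm{opt}}$ is the minimum energy over all minimum-time schedules of $\mathcal{I}$. The formula $F$ is therefore satisfiable iff the \CDelivery optimum is at most $\epsilon T^\ast + (1-\epsilon)\delta E^\ast$, yielding a polynomial-time many-one reduction and, since the reduction and the threshold only grow polynomially in the input, \NP-hardness of \CDelivery on planar graphs.

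The main obstacle is the polynomial lower bound on the time gap $\mu$. I would argue that every schedule of $\mathcal{I}$ can be encoded by a finite combinatorial type (the ordered sequence of carrying agents, together with the edges each traverses and the incidence of pick-up/drop-off points to edges), and that within each type $\T$ is a piecewise-rational continuous function of the handover offsets whose minimum is attained at a point whose coordinates have polynomial bit length. Over the finitely many types whose minimum exceeds $T^\ast$, the gap to $T^\ast$ is therefore bounded below by an inverse polynomial in the bit length of $\mathcal{I}$, which is exactly what is required to pick $\delta$ polynomially small.
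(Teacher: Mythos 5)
Your overall strategy---reuse the planar gadget from the hardness proof of \TWDelivery and scale all agent weights by a small factor $\delta(\epsilon)$---is exactly the paper's. The fatal problem is the step where you force every optimal \CDelivery schedule to have $\T=\T^*$: it relies on a positive lower bound $\mu$ on $\T(S)-\T^*$ over all schedules $S$ with $\T(S)>\T^*$, and no such $\mu$ exists. Handover points vary continuously inside edges, so starting from a minimum-time schedule one can, for instance, let a fast (velocity-$2$) agent carry the package a distance $\eta$ less and a slow (velocity-$1$) agent a distance $\eta$ more; this produces, for every $\eta>0$, a schedule with $\T=\T^*+\eta/2$ whose energy decreases proportionally to $\eta$. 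Hence the infimum of $\T(S)-\T^*$ over strictly slower schedules is $0$. Your combinatorial-type argument only lower-bounds the minimum of $\T$ over types whose minimum \emph{exceeds} $\T^*$; it says nothing about perturbed schedules inside a type whose minimum \emph{equals} $\T^*$, and those are precisely the schedules that threaten to trade an arbitrarily small time increase for an energy saving.

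What is actually needed---and what the paper's proof supplies---is a bound on the \emph{exchange rate} between time and energy in the gadget: if the fast (weight-$1$, velocity-$2$) agents carry the package a total distance $z$ less than the $2xy$ units required in a minimum-time, minimum-energy solution, the energy saved is $z\delta$ after scaling, but the delivery time increases by at least $z/8$, since the slack must be absorbed by a velocity-$1$ agent or by extra travel of the velocity-$8$ agent. Choosing $\delta=\epsilon/8$ makes $\epsilon\cdot z/8-(1-\epsilon)\cdot z\delta>0$ for all $z>0$, so no such trade is ever profitable, and the \CDelivery optimum equals $\epsilon\T^*+(1-\epsilon)\delta\E^*$ if and only if the underlying formula is satisfiable. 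This rate analysis is specific to the construction and cannot be replaced by the generic ``polynomial time gap'' argument you propose; as written, your reduction does not go through.
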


%% 3-APPROXIMATION
\paragraph*{A $3$-approximation for \CDelivery using a single agent}
%% Main Idea
Recall that for \TDelivery, the agents involved in an optimum delivery were characterized by increasing velocities $\pv_i$, 
while for \TWDelivery on path graphs, the agents of an optimum solution were characterized by decreasing tuples $(\pv_i^{-1},\pw_i)$.

Although it is \emph{not} possible to characterize the \emph{order} of the agents in an optimum \CDelivery schedule 
by their velocities and weights alone, 
we can at least characterize the position of a \emph{minimal} agent, leading to a 3-approximation using a single agent:
%% Structural Lemma
\begin{lemma}
	Let without loss of generality $1, 2, \ldots, i$ denote the indices of all involved agents appearing in that order 
	in an optimum \CDelivery schedule. 
	Then the last agent $i$ is \emph{minimal} in the following sense:
	$ i \in \arg \min_{j} \smash{\left\{ \epsilon\cdot \pv_{j}^{-1} + (1-\epsilon)\cdot \pw_{j} \right\}}$.
	\label{lem:combined-minimal}
\end{lemma}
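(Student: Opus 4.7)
The plan is to prove the lemma by a local exchange argument iterated via reverse induction. Writing $c_j := \epsilon\,\pv_j^{-1} + (1-\epsilon)\,\pw_j$, the goal is to establish that $c_i = \min_{j \in \{1,\ldots,i\}} c_j$, i.e.\ the last carrying agent minimizes $c_j$ among all involved agents.

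First, I would set up a single exchange. Fix any $j < i$ and build a modified schedule $S'$ from the given optimum schedule $S$ by keeping agents $1,\ldots,j$ identical to $S$ up through agent $j$'s pick-up at $q_j^+$, but then letting agent $j$ carry the package along a shortest path directly to $t$, discarding agents $j+1,\ldots,i$ entirely. Using the triangle inequality $d(q_j^+, t) \leq d(q_j^+, q_j^-) + d(q_j^-, t)$, the extra contribution of $j$'s extension to the combined cost is at most $\epsilon\, d(q_j^-,t)/\pv_j + (1-\epsilon)\, \pw_j\, d(q_j^-, t) = c_j \cdot d(q_j^-, t)$. Meanwhile the suffix of $S$ (agents $j+1, \ldots, i$ collectively transporting the package from $q_j^-$ to $t$) contributes at least $\epsilon\sum_{k > j} d_k^*(S)/\pv_k + (1-\epsilon)\sum_{k > j} \pw_k\, d_k^*(S) = \sum_{k>j} c_k\, d_k^*(S)$ to $\mathrm{Cost}(S)$, where we have charitably ignored walks-to-pickup, package waiting, and any non-carrying travel. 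Because the package trajectory is a simple path from $s$ to $t$, we have $\sum_{k > j} d_k^*(S) \geq d(q_j^-, t)$, so this lower bound is at least $\min_{k > j} c_k \cdot d(q_j^-, t)$.

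Combining both estimates yields
\[
\mathrm{Cost}(S') - \mathrm{Cost}(S) \ \leq\ \bigl(c_j - \min_{k > j} c_k\bigr) \cdot d(q_j^-, t).
\]
Since agent $i$ is involved and thus carries the package over a positive distance, $d(q_j^-, t) > 0$ for every $j < i$. Hence a strict inequality $c_j < \min_{k > j} c_k$ would give $\mathrm{Cost}(S') < \mathrm{Cost}(S)$, contradicting optimality of $S$. Thus $c_j \geq \min_{k > j} c_k$ for every $j < i$.

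Finally, I would conclude by reverse induction on $j$: at $j = i-1$ the inequality reads $c_{i-1} \geq c_i$; plugging this in at $j = i-2$ gives $c_{i-2} \geq \min(c_{i-1}, c_i) = c_i$; continuing down to $j = 1$ yields $c_j \geq c_i$ for all $j < i$, so $c_i = \min_{j \in \{1,\ldots,i\}} c_j$. The main technical obstacle is the lower bound on the suffix cost: one has to verify that waiting, walking-to-pickup, and any non-carrying travel all contribute non-negatively and can safely be dropped, and then combine this with the simple-path assumption $\sum_{k>j} d_k^*(S) \geq d(q_j^-, t)$ to make the exchange inequality go through.
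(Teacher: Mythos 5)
Your proof is correct and follows essentially the same exchange argument as the paper: replace the suffix of agents following some earlier agent $j$ by agent $j$ itself, and compare the per-unit-carried-distance costs $\epsilon\pv_j^{-1}+(1-\epsilon)\pw_j$, using that every agent $k$ contributes at least $d_k^*\bigl(\epsilon\pv_k^{-1}+(1-\epsilon)\pw_k\bigr)$ to the objective. The only (organizational, not substantive) difference is that the paper performs the exchange once, directly substituting the minimizing agent $m<i$ along the original trajectory, whereas you perform it for every $j$ via a shortest path to $t$ and then close with a reverse induction.
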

\begin{proof}
	Recall that we denote by $d_{j}$ the total distance travelled by agent $j$ and by 
	$d_j^*$ the distance travelled by agent $j$ while carrying the package. 
	Thus agent $j$ contributes at least 
	$\epsilon\cdot d_j^*\cdot \pv_j^{-1} + (1-\epsilon)\cdot \left(d_j\cdot \pw_j\right) \geq 
	d_j^*\cdot\left( \epsilon \pv_{j}^{-1} + (1-\epsilon)\pw_{j}  \right)$ towards $\epsilon\T+(1-\epsilon) \E$.
	Assume for the sake of contradiction that the minimum value $\epsilon \pv_{j}^{-1} + (1-\epsilon) \pw_{j}$ is not obtained by agent $i$ 
	but by an agent $m<i$. Then we can replace the agents $m+1, \ldots, i$ by agent $m$, resulting in a decrease in the objective function
	of at least
 	\begin{align*}
 		& \sum\limits_{j=m+1}^i \epsilon d_j^* \pv_j^{-1} + (1-\epsilon)d_j \pw_j 
 			- \sum\limits_{j=m+1}^i d_j^*\left( \epsilon \pv_{m}^{-1} + (1-\epsilon)\pw_{m}  \right) \\
 		& \geq \sum\limits_{j=m+1}^i d_j^*\left( \epsilon \pv_{j}^{-1} + (1-\epsilon)\pw_{j}  \right) 
 			- \sum\limits_{j=m+1}^i d_j^*\left( \epsilon \pv_{m}^{-1} + (1-\epsilon)\pw_{m}  \right) \\
 		& \geq d_i^* \left( \epsilon \pv_i^{-1} + (1-\epsilon)\pw_i \right) - d_i^* \left( \epsilon \pv_m^{-1} + (1-\epsilon)\pw_m \right) > 0,
 	\end{align*}
	contradicting the minimality of the optimum \CDelivery schedule.
\end{proof}

%% 2-Approximation
\begin{theorem}
	There is a $3$-approximation for \CDelivery which uses only a single agent (and thus can be found in polynomial time).	
	\label{thm:3-approx}
\end{theorem}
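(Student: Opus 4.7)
The plan is to analyze the following single-agent heuristic. For each agent $a \in \{1,\ldots,k\}$, compute the cost
\[
c_a \;:=\; \bigl(d(p_a,s) + d(s,t)\bigr)\cdot \mu_a, \qquad \mu_a := \epsilon\,\pv_a^{-1} + (1-\epsilon)\,\pw_a,
\]
of the schedule in which $a$ walks alone along shortest paths from $p_a$ to $s$, picks up the package, and carries it to $t$; return an agent $a^*$ minimizing $c_a$ together with its single-agent schedule. A single-source shortest-path computation from $s$ supplies all the required distances $d(p_a,s)$ and $d(s,t)$, so the algorithm runs in polynomial time.

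For the analysis, I would fix any optimum schedule, label its involved agents $1,2,\ldots,i$ in the order in which they carry the package, and denote by $q_i^+$ the pick-up point of the last agent~$i$. Write $\OPT = \epsilon\,\T + (1-\epsilon)\,\E$ for its combined cost. By Lemma~\ref{lem:combined-minimal} the last agent satisfies $\mu_i = \min_{j\leq i} \mu_j$, so since the algorithm picks the best single agent it suffices to prove $c_i \leq 3\cdot \OPT$.

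The key step is the triangle inequality $d(p_i, s) \leq d(p_i, q_i^+) + d(q_i^+, s)$, which lets me split
\[
c_i \;\leq\; d(p_i, q_i^+)\cdot\mu_i \;+\; d(q_i^+, s)\cdot\mu_i \;+\; d(s, t)\cdot\mu_i,
\]
and bound each of the three summands separately by $\OPT$. For the third, the estimates $\T \geq \sum_j d_j^*/\pv_j$ and $\E \geq \sum_j d_j^* \pw_j$ (as already used in the proof of Lemma~\ref{lem:combined-minimal}) give $\OPT \geq \sum_j d_j^* \mu_j \geq \mu_i \sum_j d_j^* \geq \mu_i\, d(s,t)$, the last step using that the package's trajectory has total length at least $d(s,t)$. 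For the second, agents $1,\ldots,i-1$ move the package from $s$ to $q_i^+$ along a curve of length $\sum_{j<i} d_j^*$, so $d(q_i^+, s) \leq \sum_{j<i} d_j^*$ and hence $d(q_i^+, s)\,\mu_i \leq \sum_{j<i} d_j^*\,\mu_j \leq \OPT$, again using $\mu_i \leq \mu_j$ for all involved $j$. For the first, I split $d(p_i, q_i^+)\,\mu_i = \epsilon\,d(p_i, q_i^+)/\pv_i + (1-\epsilon)\,d(p_i, q_i^+)\,\pw_i$ and observe that $d(p_i, q_i^+)/\pv_i \leq \T$ (agent $i$ must reach $q_i^+$ no later than the pick-up moment, which is at most the overall delivery time), while $d(p_i, q_i^+)\,\pw_i \leq d_i\,\pw_i \leq \E$.

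The only mildly delicate point I expect is justifying $d(p_i, q_i^+)/\pv_i \leq \T$ cleanly: it holds because the shortest-path walking time from $p_i$ to $q_i^+$ is a lower bound on when agent $i$ can be at $q_i^+$, and the entire delivery (including the subsequent carry to $t$) finishes by time $\T$. Summing the three bounds then yields $c_i \leq 3\cdot \OPT$, which completes the plan.
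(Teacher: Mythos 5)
Your proposal is correct and follows essentially the same route as the paper: both rely on Lemma~\ref{lem:combined-minimal} to identify the last agent $i$ of an optimum schedule as the one minimizing $\epsilon\pv_j^{-1}+(1-\epsilon)\pw_j$, use the identical selection rule over all $k$ agents, and charge the single-agent detour against $\OPT$ via $\mu_i\le\mu_j$. The only (cosmetic) difference is that you bound the shortest-path single-agent cost directly as three terms each at most $\OPT$, whereas the paper bounds it as the optimum plus a $2\sum_{j<i}d_j^*\mu_i\le 2\,\OPT$ increase from retracing the predecessors' trajectories.
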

\begin{proof}
	Note that agent $i$ contributes at most $\epsilon d_i\pv_i^{-1} + (1-\epsilon) d_i\pw_i$ towards $\epsilon\T+(1-\epsilon) \E$. 
	Starting from an optimum \CDelivery schedule we can replace all agents $1, \ldots, i-1$ along their trajectories by the minimal agent $i$. 
	This prolongs the travel distance of agent $i$ by $2\cdot \sum_{j=1}^{i-1} d_j^*$. Overall, we increase the objective function by at most
	\begin{align*}
		2\sum\limits_{j=1}^{i-1} d_j^* \left(\epsilon\pv_i^{-1} + (1-\epsilon) \pw_i \right)
		\leq 2 \sum\limits_{j=1}^{i} d_j^*\left( \epsilon \pv_{j}^{-1} + (1-\epsilon)\pw_{j}  \right) \leq 2 \left( \epsilon\T+(1-\epsilon) \E\right). 
	\end{align*}
	Hence only using agent $i$ to deliver the package is a $3$-approximation for \CDelivery.
	We get a polynomial-time approximation algorithm with approximation ratio $3$ by choosing among all $k$ agents the one with minimum value 
	$\left( \epsilon \pv_j^{-1} + (1-\epsilon)\pw_{j} \right)\cdot \left( d(p_j,s) + d(s,t) \right)$.
\end{proof}

%%%% DISCUSSION
%%%% ----------
\section{Discussion}
\label{sec:discussion}

Our techniques and results extend to a variety of delivery problems and model generalizations.
A key ingredient here is that the order of our dynamic programming subproblems depends only on the 
parameters the agent has \emph{while carrying the package}. Hence it is possible to, e.g., incorporate
2-speed agent models (modeling different speeds~\cite{Czyzowicz16} with/without carrying the package) 
or topographical features (modeling edge traversals in uphill/downhill direction).

Furthermore, the 3-approximation given for \CDelivery is applicable to \TDelivery as well, 
and a relaxation of \TWDelivery, in which one allows the optimum delivery time to be achieved with a 
constant-factor approximation of the energy consumption, stays \NP-hard. % (see Appendix~\ref{app:hardness}).
It is unclear whether \TWDelivery can be \emph{solved efficiently on trees} or whether \CDelivery \emph{allows a PTAS}.
We consider these two problems the major open questions raised by this work.

\bibliographystyle{abbrv}
\bibliography{bib-fastefficient}

\begin{thebibliography}{10}

\bibitem{AnayaCCLPV16}
J.~Anaya, J.~Chalopin, J.~Czyzowicz, A.~Labourel, A.~Pelc, and Y.~Vaxès.
\newblock {Convergecast and Broadcast by Power-Aware Mobile Agents}.
\newblock {\em Algorithmica}, 74(1):117--155, 2016.
\newblock See also DISC'12.

\bibitem{ApplegateTSP}
D.~L. Applegate, R.~E. Bixby, V.~Chvatal, and W.~J. Cook.
\newblock {\em {The Traveling Salesman Problem: A Computational Study}}.
\newblock Princeton University Press, Princeton, NJ, USA, 2007.
\newblock ISBN: 978-0-691-12993-8.

\bibitem{BampasSirocco16}
E.~Bampas, J.~Czyzowicz, L.~Gasieniec, D.~Ilcinkas, R.~Klasing, T.~Kociumaka,
  and D.~Pająk.
\newblock {Linear Search by a Pair of Distinct-Speed Robots}.
\newblock In {\em 23rd International Colloquium on Structural Information and
  Communication Complexity SIROCCO'16}, pages 195--211, 2016.

\bibitem{Basch99}
J.~Basch, L.~J. Guibas, and J.~Hershberger.
\newblock {Data Structures for Mobile Data}.
\newblock {\em Journal of Algorithms}, 31(1):1--28, 1999.
\newblock See also SODA'97.

\bibitem{RikoConvexHull02}
G.~S. Brodal and R.~Jacob.
\newblock {Dynamic Planar Convex Hull}.
\newblock In {\em 43rd Symposium on Foundations of Computer Science FOCS'02},
  pages 617--626, 2002.

\bibitem{BaertschiPhD}
A.~Bärtschi.
\newblock {\em {Efficient Delivery with Mobile Agents}}.
\newblock PhD thesis, ETH Zürich, 2017.

\bibitem{sirocco16}
A.~Bärtschi, J.~Chalopin, S.~Das, Y.~Disser, B.~Geissmann, D.~Graf,
  A.~Labourel, and M.~Mihalák.
\newblock {Collaborative Delivery with Energy-Constrained Mobile Robots}.
\newblock In {\em 23rd International Colloquium on Structural Information and
  Communication Complexity SIROCCO'16}, pages 258--274, 2016.

\bibitem{TCS17}
A.~Bärtschi, J.~Chalopin, S.~Das, Y.~Disser, B.~Geissmann, D.~Graf,
  A.~Labourel, and M.~Mihalák.
\newblock {Collaborative delivery with energy-constrained mobile robots}.
\newblock {\em Theoretical Computer Science}, 2017.
\newblock To appear. See also SIROCCO'16.

\bibitem{STACS17}
A.~Bärtschi, J.~Chalopin, S.~Das, Y.~Disser, D.~Graf, J.~Hackfeld, and
  P.~Penna.
\newblock {Energy-efficient Delivery by Heterogeneous Mobile Agents}.
\newblock In {\em 34th International Symposium on Theoretical Aspects of
  Computer Science STACS'17}, pages 10:1--10:14, 2017.

\bibitem{mfcs18}
A.~Bärtschi, D.~Graf, and M.~Mihalák.
\newblock Collective fast delivery by energy-efficient agents.
\newblock In {\em 43rd International Symposium on Mathematical Foundations of
  Computer Science, MFCS'18}, pages 56:1--56:16, 2018.

\bibitem{ATMOS17}
A.~Bärtschi, D.~Graf, and P.~Penna.
\newblock {Truthful Mechanisms for Delivery with Agents}.
\newblock In {\em 17th Workshop on Algorithmic Approaches for Transportation
  Modelling, Optimization, and Systems ATMOS'17}, pages 2:1--2:17, 2017.

\bibitem{FCT17}
A.~B{ä}rtschi and T.~Tschager.
\newblock {Energy-Efficient Fast Delivery by Mobile Agents}.
\newblock In {\em {21st International Symposium on Fundamentals of Computation
  Theory FCT'2017}}, pages 82--95, 2017.

\bibitem{DDalgosensors13}
J.~Chalopin, S.~Das, M.~Mihalák, P.~Penna, and P.~Widmayer.
\newblock {Data Delivery by Energy-Constrained Mobile Agents}.
\newblock In {\em 9th International Symposium on Algorithms and Experiments for
  Sensor Systems, Wireless Networks and Distributed Robotics ALGOSENSORS'13},
  pages 111--122, 2013.

\bibitem{DDicalp14}
J.~Chalopin, R.~Jacob, M.~Mihalák, and P.~Widmayer.
\newblock {Data Delivery by Energy-Constrained Mobile Agents on a Line}.
\newblock In {\em 41st International Colloquium on Automata, Languages, and
  Programming ICALP'14}, pages 423--434, 2014.

\bibitem{EnergyExchange15}
J.~Czyzowicz, K.~Diks, J.~Moussi, and W.~Rytter.
\newblock {Communication Problems for Mobile Agents Exchanging Energy}.
\newblock In {\em 23rd International Colloquium on Structural Information and
  Communication Complexity SIROCCO'16}, 2016.

\bibitem{Czyzowicz15}
J.~Czyzowicz, L.~Gasieniec, K.~Georgiou, E.~Kranakis, and F.~MacQuarrie.
\newblock {The Beachcombers' Problem: Walking and searching with mobile
  robots}.
\newblock {\em Theoretical Computer Science}, 608:201--218, 2015.

\bibitem{Czyzowicz11}
J.~Czyzowicz, L.~Gasieniec, A.~Kosowski, and E.~Kranakis.
\newblock {Boundary Patrolling by Mobile Agents with Distinct Maximal Speeds}.
\newblock In {\em 19th European Symposium on Algorithms ESA'11}, pages
  701--712, 2011.

\bibitem{Czyzowicz16}
J.~Czyzowicz, K.~Georgiou, E.~Kranakis, F.~MacQuarrie, and D.~Pająk.
\newblock Fence patrolling with two-speed robots.
\newblock In {\em 5th International Conference on Operations Research and
  Enterprise Systems ICORES'16}, pages 229--241, 2016.

\bibitem{Demaine2009}
E.~D. Demaine, M.~Hajiaghayi, H.~Mahini, A.~S. Sayedi-Roshkhar, S.~Oveisgharan,
  and M.~Zadimoghaddam.
\newblock Minimizing movement.
\newblock {\em ACM Transactions on Algorithms}, 5(3):1--30, 2009.
\newblock See also SODA'07.

\bibitem{edelsbrunner1987algorithms}
H.~Edelsbrunner.
\newblock {\em {Algorithms in Combinatorial Geometry}}, volume~10 of {\em EATCS
  Monographs on Theoretical Computer Science}.
\newblock Springer, Heidelberg, Germany, 1987.

\bibitem{Edmonds73}
J.~Edmonds and E.~L. Johnson.
\newblock Matching, euler tours and the chinese postman.
\newblock {\em {Mathematical Programming}}, 5(1):88--124, 1973.

\bibitem{Frederickson76}
G.~N. Frederickson, M.~S. Hecht, and C.~E. Kim.
\newblock {Approximation Algorithms for Some Routing Problems}.
\newblock {\em {SIAM} Journal on Computing}, 7(2):178--193, 1978.
\newblock See also FOCS'76.

\bibitem{graham1972efficient}
R.~L. Graham.
\newblock {An efficient algorith for determining the convex hull of a finite
  planar set}.
\newblock {\em Information Processing Letters}, 1(4):132--133, 1972.

\bibitem{wcipeg}
W.~C. I. P.~E. Group.
\newblock {Convex hull trick}.
\newblock PEGWiki, September 2016.

\bibitem{Hershberger89}
J.~Hershberger.
\newblock {Finding the upper envelope of n line segments in O(n log n) time}.
\newblock {\em Information Processing Letters}, 33(4):169--174, 1989.

\bibitem{jacob2002dynamic}
R.~Jacob.
\newblock {\em {Dynamic planar convex hull}}.
\newblock PhD thesis, Department of Computer Science, University of Aarhus,
  Denmark, 2002.
\newblock BRICS Dissertation Series DS-02-3.

\bibitem{Kaplan01}
H.~Kaplan, R.~E. Tarjan, and K.~Tsioutsiouliklis.
\newblock {Faster Kinetic Heaps and Their Use in Broadcast Scheduling}.
\newblock In {\em 12th Symposium on Discrete Algorithms SODA'01}, pages
  836--844, 2001.

\bibitem{planar3sat82}
D.~Lichtenstein.
\newblock {Planar Formulae and Their Uses}.
\newblock {\em SIAM Journal on Computing}, 11(2):329--343, 1982.

\bibitem{matousek}
J.~Matoušek.
\newblock {\em {Lectures on Discrete Geometry}}, volume 212 of {\em Graduate
  Texts in Mathematics}.
\newblock Springer, New York, NY, USA, 2002.
\newblock Exercise 8.1.1.

\bibitem{PostRobots2}
{Swiss Post}.
\newblock {Swiss Post delivery robots in use by Jelmoli}.
\newblock Press release, August 2017.
\newblock URL:
  \href{https://www.post.ch/en/about-us/company/media/press-releases/2017/swiss-post-delivery-robots-in-use-by-jelmoli}{\UrlFont{https://www.post.ch/en/about-us/company/media/press-releases/2017/
  swiss-post-delivery-robots-in-use-by-jelmoli}}.

\bibitem{TothVRP}
P.~Toth and D.~Vigo, editors.
\newblock {\em {The Vehicle Routing Problem}}.
\newblock SIAM Society for Industrial and Applied Mathematics, Philadelphia,
  PA, USA, 2001.
\newblock ISBN: 0-89871-498-2.

\bibitem{USAToday}
E.~Weise.
\newblock {Amazon delivered its first customer package by drone}.
\newblock USA Today, December 2016.

\end{thebibliography}

%%%% APPENDIX
\appendix
%% Existence
\newpage
\section{Existence of optimum solutions}

\begin{theorem*}[\ref{thm:existence} \textnormal{(Existence of optimum solutions)}]
There exists an optimum solution minimizing the delivery time $\T$ 
(the energy consumption $\E$, or $\epsilon\cdot \T +(1-\epsilon)\cdot \E$, $(\T,\E)$, $(\E,\T)$, respectively).
\end{theorem*}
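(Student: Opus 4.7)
The plan is to follow the outline sketched in the paragraph preceding the theorem, filling in three technical ingredients: (a) a finite enumeration of ``topological types'' of schedules, (b) compactness of the parameter space of each type, and (c) continuity of the cost functionals on that parameter space. Once these are established, the Weierstrass extreme value theorem yields a minimum in each type, and taking the best value over the finitely many types gives the global optimum.

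First I would fix a canonical form. By the exchange argument stated in the excerpt (replacing any agent that carries the package twice by letting it continue carrying in between, which does not increase $\T$ or $\E$ by the triangle inequality), I may restrict attention to solutions in which every agent carries the package at most once. Such a solution is fully described by (i) the ordered tuple $(i_1, \dots, i_\ell)$ of involved agents, with $\ell \le k$, and (ii) the sequence of handover points $q_{i_1}^-=q_{i_2}^+, q_{i_2}^- = q_{i_3}^+, \dots, q_{i_{\ell-1}}^- = q_{i_\ell}^+$, together with the fixed endpoints $q_{i_1}^+=s$ and $q_{i_\ell}^-=t$. Each handover point lies either at a vertex of $G$ or in the relative interior of some edge; I refine the topological type by also specifying, for each handover, the combinatorial location (vertex name or edge name) and, for each carrying leg $q_{i_j}^+ \to q_{i_j}^-$, the combinatorial shortest (or chosen) path used. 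There are only finitely many choices in each coordinate (at most $|V|+|E|$ for each handover and at most finitely many simple paths needed, since an optimum carrying leg may be taken to be a shortest path), so altogether there are only finitely many topological types.

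Next I would argue compactness and continuity inside a single type $\mathcal{C}$. Within $\mathcal{C}$ each handover point's position is parameterized by a real number in a closed bounded interval (the edge-length interval $[0, l_e]$, or the singleton vertex), so the parameter space $K_\mathcal{C}$ is a finite product of such intervals and hence a compact subset of Euclidean space. For any fixed type, the travel distances $d_{i_j}$ and $d_{i_j}^*$ are piecewise-linear, hence continuous, functions of the handover coordinates; therefore $\E = \sum_j \pw_{i_j} d_{i_j}$ is continuous on $K_\mathcal{C}$. The delivery time $\T$ is the sum over legs of $d_{i_j}^*/\pv_{i_j}$ plus the waiting times; each leg's arrival time is a maximum of two continuous expressions (``package arrival time'' and ``agent arrival time'' at the handover point), and the composition of finite $\max$'s with continuous functions is continuous, so $\T$ is continuous on $K_\mathcal{C}$ as well. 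By Weierstrass, $\T$, $\E$, and $\epsilon\T+(1-\epsilon)\E$ each attain their minimum on $K_\mathcal{C}$.

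For the lexicographic variants $(\T,\E)$ and $(\E,\T)$ I would apply Weierstrass twice: the sublevel set $\{x \in K_\mathcal{C} : \T(x) = \min_{K_\mathcal{C}} \T\}$ is a closed subset of the compact set $K_\mathcal{C}$, hence compact, so the continuous function $\E$ attains its minimum on it; analogously for $(\E,\T)$. Having obtained a minimizer inside each of the finitely many $K_\mathcal{C}$, the overall optimum is simply the best among these finitely many candidates, which exists trivially.

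The main obstacle I anticipate is the bookkeeping required to justify that restricting to ``finitely many topological types'' is without loss of generality, in particular ruling out pathological schedules in which an agent might wish to backtrack within an edge infinitely often or hand over at limit points not captured by any type. The canonicalization in the first paragraph (carrying at most once, and using shortest paths along each carrying leg by the triangle inequality) together with a similar exchange argument for non-carrying travel (each non-carrying walk may be replaced by a shortest path to the next pick-up point) handles exactly this concern and ensures that an optimum is always achieved by some configuration belonging to one of the finitely many types defined above.
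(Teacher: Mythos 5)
Your proposal is correct and follows essentially the same route as the paper's proof: canonicalize so each agent carries at most once, partition solutions into finitely many combinatorial classes (agent order plus the edge containing each handover), identify each class with a compact product of intervals, verify continuity of $\T$ and $\E$, and apply the extreme value theorem class by class. The only cosmetic differences are that you additionally fix the combinatorial path of each leg (the paper instead works directly with the continuous shortest-path distance $d(\cdot,\cdot)$) and that you spell out the two-stage Weierstrass argument for the lexicographic variants where the paper cites a reference.
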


\begin{proof}
	A solution which operates agents $1,2,\ldots,\ell$ (after relabeling) in this order is uniquely described by 
	the order of those agents and their drop-off locations $q_i^-$:
	Agent $1$ picks up the package at $s$ and drops off the package at $q_{1}^-$, 
	agent $2$ picks it up at $q_{1}^-$ and drops it off at $q_{2}^-$, \ldots, 
	and agent ${\ell}$ drops off the package at $q_{{\ell}}^- = t$.

	To each drop-off location we assign an energy consumption- and a time-measure:
	Let $\E(q_i^-)$ and $\T(q_{i}^-)$ denote the total energy consumption (time, respectively) spent
	up to the point where agent $i$ drops the package at $q_{i}^-$.
	These values are inductively defined as follows:
	\begin{align}
		\E(q_{1}^-) &:= \pw_{1} \cdot (d(p_1,s) + d(s,q_1^-)), \qquad \E(S) = \E(q_{\ell}^-), \nonumber \\
		\E(q_{i}^-) &:= \E(q_{i-1}^-) + \pw_{i} \cdot (d(p_i,q_{i-1}^-) + d(q_{i-1}^-,q_i^-)), \label{eq:E}
	\end{align}
	and
	\begin{align}
		\T(q_{1}^-) &:= \pv_{1}^{-1} \cdot (d(p_1,s) + d(s,q_1^-)), \qquad \T(S) = \T(q_{\ell}^-), \nonumber \\
		\T(q_{i}^-) &:= \max\left\{ \T(q_{i-1}^-), \pv_i^{-1}\cdot d(p_i,q_{i-1}^-) \right\} + \pv_i^{-1}\cdot d(q_{i-1}^-,q_i^-)).
		\label{eq:T}
	\end{align}
	Because there are infinitely many possible (in-edge-)handover positions, there are also infinitely many solutions.
	We will now show that we can subdivide these solutions into a finite number of equivalence classes. We prove that 
	each equivalence class has
	\begin{itemize}
		\item	a solution that minimizes the delivery time $\T$ among all solutions in the class, 
		\item	a solution that minimizes the energy consumption $\E$ among all solutions in the class.
	\end{itemize}
	Since the number of equivalence classes itself is finite, we immediately get that the (infinite) set of all solutions
	itself contains minimum solutions subject to $\T$ and $\E$, respectively.
	We may think of the subdivision into equivalence classes as a partition, although strictly speaking some schedules 
	will be contained in multiple distinct equivalence classes.

	\paragraph{Representation via drop-off positions}
	We first group all solutions into schedules that operate the same agents in the same order. There is a finite number
	of such groups. Now recall that for a specific list of agents $1,2,\ldots,\ell$ a solution is uniquely described by the
	$\ell$ many drop-off positions $q_1^-, \ldots, q_{\ell-1}^-$ (since we must have $q_{\ell}^-=t$).  

	We represent a drop-off $q_i^-$, $i < \ell$ in an edge $\{u,v\}$ as a tuple $((u,v),x_i)$, where $x_i \in \left[0,1\right]$ 
	is a parameter denoting the position of $q_{i}^-$ in $\{u,v\}$, $x_i := \tfrac{d(u,q_i^-)}{d(u,v)}$. 
	If $q_i^-$ lies strictly inside the edge $\{u,v\}$, this representation is unique. 
	The same is true if $q_i^-$ is a node of degree 1. 
	If $q_i^-$ is a node of degree $\deg(q_i^-)>1$, there are $\deg(q_i^-)$ many representations to choose from. 

	Similarly to individual handover points, we can represent schedules in this parametrized notation, too. 
	For each schedule $S$ operating the agents $1,\ldots,\ell$ in this order, we consider all possible representations 
	of the whole schedule:
	If $S$ has $x$ many node drop-offs $v_1,\ldots,v_x$, then $S$ has a bounded number of exactly 
	$\prod_{i=1}^x \deg(v_i)$ many different representations.     

	\paragraph{Equivalence relation between representations}
	We can now define an equivalence relation between parametrized representations of schedules:
	Two parametrized representations $S_P$ and $S'_P$ of a solution $S$ and a solution $S'$, respectively, are 
	equivalent, $S_P \prel S'_P$, if they operate the same set of agents in the same order and agree 
	in the edge of each drop-off location $q_{i}^-$ (in other words, $S_P$ and $S'_P$ differ only in 
	the exact positions inside the edges, given by the respective parameters $x_i$). 

	Note that two different parametrized representations of the same schedule are not equivalent. 
	Since a parametrized representation $S_P$ of a schedule $S$ preserves all information contained in $S$, 
	we can also measure $S_P$ subject to $\T$ and $\E$. 

	\paragraph{Minima of equivalence classes}
	We are now ready to prove that, given an arbitrary parametrized schedule representation $S_P$, 
	the equivalence class of $S_P$ under $\prel$, denoted by $[S_P]$, contains a minimum element subject to $\T$
	and a minimum element subject to $\E$.

	A schedule $S$ operating $\ell$ agents has $\ell-1$ parametrized drop-off locations. 
	If $\ell-1 = 0$, then $[S_P]$ contains the single element $S_P$ and thus has a minimum element. 
	Otherwise, any different choice of the values $x_i\in \left[0,1\right]$ of these parameters 
	still represents a solution, and all these solutions are in the equivalence class $[S_P]$. 
	Hence we can identify $\left[ S_P\right]$ with the 
	$\ell-1$-dimensional hypercube $\left[0,1\right]^{\ell-1}$ which is a bounded and closed metric space
	and thus a topologically \emph{compact} space.

	Since we have $\E(S) = \E(q_{\ell}^-)$ and $\E(T) = \T(q_{\ell}^-)$, using Equations~\eqref{eq:E} and~\eqref{eq:T} 
	together with the triangle inequality for distances in a graph, it is easy to see that $\E$ and $\T$ are continuous 
	functions mapping parametrized schedule representations in the compact hypercube $\left[0,1\right]^{\ell-1}$ to the 
	corresponding energy-consumption and delivery-time values in $\mathbb{R}$:

	Consider the minimum velocity $\pv_{\min} = \min_i \pv_i$, the maximum weight $\pw_{\max} = \max_i \pw_i$ and the maximum 
	edge length $l_{\max} = \max_e l_e$. We use the $L1$-norm and show that for all $\epsilon > 0$ there exists a 
	$\delta = \delta(\epsilon) := \smash{\epsilon \cdot \min\{ \pv_{\min}, \tfrac{1}{\pw_{\max}l_{\max}}\} \cdot (2\ell)^{-1}}$, 
	such that we have for any two parametrized schedule representations $S'_P, S''_P \in \left[0,1\right]^{\ell-1}$ that
	\[	\left\lVert S'_P - S''_P \right\rVert_1 < \delta 
		\ \Rightarrow\ \left\lvert \E(S'_P)-\E(S''_P) \right\rvert < \tfrac{\epsilon}{2} \quad \wedge \quad	
		\left\lvert \T(S'_P)-\T(S''_P) \right\rvert < \tfrac{\epsilon}{2}. \]
	Note that from $\left\lVert S'_P - S''_P \right\rVert_1 < \delta$ we get for any two parametrized drop-off $q'^-_i, q''^-_i$ that
	$|x'_i - x''_i| < \delta$ and thus 
	$|q'^-_i - q''^-_i| < l_{\max}\cdot |x'_i - x''_i| = \epsilon \cdot \min\{ \pv_{\min}, \tfrac{1}{\pw_{\max}}\} \cdot (2\ell)^{-1}$.
	Using Equations~\eqref{eq:E} and~\eqref{eq:T} inductively together with the triangle inequality for graph distances we get
	\begin{align*}
		|\E(q'^-_i) - \E(q''^-_i)|	& < |\E(q'^-_{i-1}) - \E(q''^-_{i-1})| + 2\cdot \pw_{\max}\cdot |q'^-_i - q''^-_i| 	
						\leq \frac{\epsilon \cdot i}{\ell} && \text{ and }\\
		|\T(q'^-_i) - \T(q''^-_i)|	& < |\T(q'^-_{i-1}) - \T(q''^-_{i-1})| + 2\cdot \tfrac{1}{\pv_{\min}}\cdot |q'^-_i - q''^-_i| 	
						\leq \frac{\epsilon \cdot i}{\ell}.
	\end{align*}
	We have $\left\lVert S'_P - S''_P \right\rVert_1 < \delta  \Rightarrow |\E(S'_P) - \E(S''_P)| < \epsilon$ and  
	$\left\lVert S'_P - S''_P \right\rVert_1 < \delta \Rightarrow |\T(S'_P) - \T(S''_P)| < \epsilon$, and thus $\E$ and $\T$ are 
	continuous functions. Hence by the extreme value theorem, there must be a solution of minimum value in $\left[0,1\right]^{\ell-1}$ with 
	respect to $\E$ and to $\T$, respectively. 
	The same holds for the linear combination $\epsilon \T +(1-\epsilon)\E$, and, furthermore, there are also 
	lexicographically minimum solutions with respect to the tuples $(\T,\E)$ and $(\E,\T)$, 
	respectively~\cite{matousek}.

	\vspace{2ex}
	\noindent
	Finally, since there is only a finite number of equivalence classes, there must be minimum solutions for all five
	variants of efficient \Delivery overall.			
\end{proof}

%% NP-hardness
\newpage
\section{\NP-hardness of \TWDelivery}

\paragraph{Planar 3SAT.} 
We start with a three-conjunctive normal form $F$ on $x$ variables $V(F) = \left\{ u_1, \ldots, u_x \right\}$ 
and $y$ clauses $C(F) =  \left\{ c_1, \ldots, c_y \right\}$. Each clause is a disjunction of at most three literals 
of the form $l(u_j) \in \left\{ u_j, \overline{u_j} \right\}$.
$F$ can be represented by a graph $H(F) = (C(F) \cup V(F),A_1 \cup A_2)$ defined as follows: 
$(C(F) \cup V(F),A_1)$ is a bipartite graph with nodes corresponding to all clauses and all variables and an edge set 
$A_1$ which contains an edge between each clause $c$ and variable $u$ if and only if $u$ or $\overline{u}$ is contained in $c$,
$A_1 = \left\{ \left\{ c_i, u_j \right\} \ | \ u_j \in c_i \text{ or } \overline{u_j} \in c_i \right\}$.
To this graph we add a cycle $A_2$ consisting of edges between all pairs of consecutive variables, 
$A_2 = \left\{ \left\{ u_j, u_{(j\mod x)+1} \right\} \ | \ 1\leq j \leq x \right\}.$
The 3CNF $F$ is called \emph{planar} if there is a plane embedding of $H(F)$ which \emph{at each variable node} has 
all edges representing positive literals on one side of the cycle $A_2$ and 
all edges representing negative literals on the other side of $A_2$.
The decision problem \textsc{Planar3SAT} of finding whether a given planar 3CNF $F$ is satisfiable or not is \NP-complete, 
a result due to Lichtenstein~\cite{planar3sat82}.
We assume without loss of generality that every clause contains at most one literal per variable, see Figure~\ref{fig:planar3sat} (left).

\begin{figure}[bp]
	\centering
	\includegraphics[width=\linewidth]{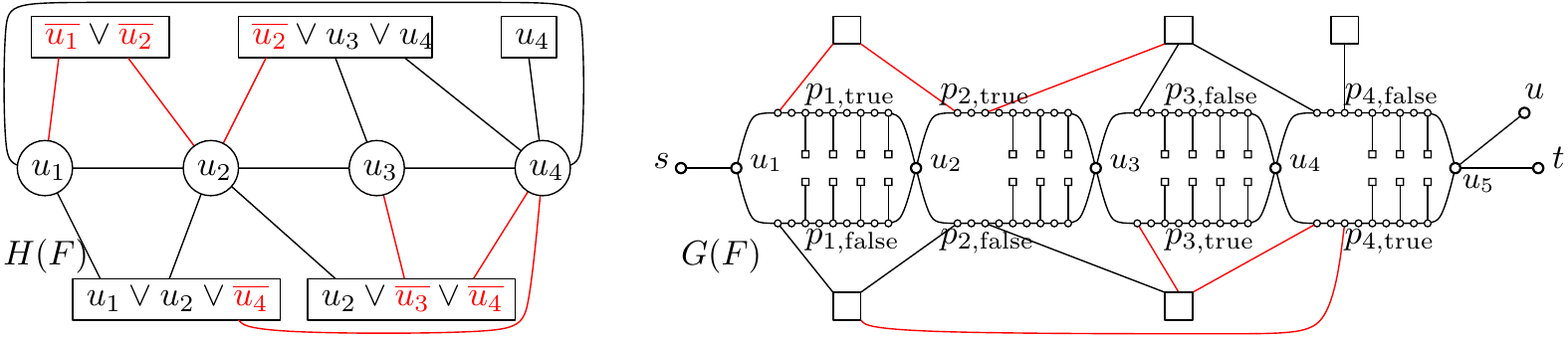}
	\caption{(left) An example of a planar 3CNF $F$ with a satisfiable assignment $(u_1,u_2,u_3,u_4) = (\text{true, false, false, true}).$ 
	\hfill (right) Its transformation into the corresponding delivery graph $G(F)$.}
	\label{fig:planar3sat}
\end{figure}

\paragraph{Delivery graph.} We now transform $H(F)$ into a planar graph $G(F)$ on which we will build an instance of \TWDelivery.
First, we place an additional node $u_{x+1}$ on the edge $\left\{ u_x, u_1 \right\}$ and modify $A_2$ accordingly 
(by adding $\left\{ u_x, u_{x+1} \right\}, \left\{ u_{x+1}, u_{1} \right\}$ to and removing $\left\{ u_{x}, u_1 \right\}$ from $A_2$).
In a second step, we add on both sides of each edge $\left\{ u_j, u_{j+1} \right\}, 1\leq j \leq x,$ 
a path from $u_j$ to $u_{j+1}$ with $2y-1$ internal nodes.
We can now reconnect the literal edges $\left\{ c_i, u_j \right\}$ while preserving planarity: 
If $\left\{ c_i, u_j \right\}$ is the $l$th literal edge of $u_j$ inside the cycle $A_2$, 
we reconnect $c_i$ to the $(2l-1)$th internal node of the path from $u_j$ to $u_{j+1}$ which lies inside $A_2$. 
We do the same for literal edges outside of $A_2$. 
For every \emph{odd} node on these paths which is not yet connected to a clause node, 
we create a single new node and connect it to the internal path node. 
Thus, for all $j$, both paths from $u_j$ to $u_{j+1}$ have $y$ internal nodes of degree $3$ 
and $y-1$ internal nodes of degree $2$, and these nodes alternate. 
If the path contains nodes connected to clauses which contain $u_j$, we call the path $p_{j,\text{false}}$, 
otherwise we call it $p_{j,\text{true}}$.
Finally, we delete all edges of $A_2$, add the package's source $s$ and destination $t$ and another node $u$ 
and connect these to $u_1$, $u_{x+1}$ and $u_{x+1}$, respectively. 
We redraw the graph such that $s, u_1, \ldots, u_{x+1}, t$ are placed in this order from left to right, % on an imaginary line,
see Figure~\ref{fig:planar3sat} (right).

\paragraph{Reduction idea.} 
We will place agents and set edge lengths such that any delivery of the package which goes via any of the clause nodes takes a long time. 
Thus the package has to be routed in each path pair $(p_{j,\text{true}}, p_{j,\text{false}})$ through exactly one of the two paths. 
If the package is routed via the path $p_{j,\text{true}}$, we interpret this as setting $u_j=\text{true}$ and hence we can read from the
package trajectory a satisfiable assignment for $F$. 

\paragraph{Agent placement.} We use three kinds of agents: 
slow but energy-efficient agents $(\pv=1,\pw=0)$, fast but inefficient agents $(\pv=2,\pw=1)$ and one very fast agent $(\pv=8, \pw=0)$:
\begin{itemize}
	\item	The fastest agent is placed on $u$ and shall transport the package over the edge $\left\{ u_{x+1}, t \right\}$.
	\item	Fast agents $(\pv=2,\pw=1)$ on one hand are placed on each variable node $u_j$. 
		These agents will decide whether to deliver the package over the path $p_{j,\text{true}}$ or the path $p_{j,\text{false}}$, 
		thus effectively setting the value of the corresponding boolean variable $u_j$ to true or to false. 
		On the other hand, we will also use one fast agent on each internal path node of degree $2$.
	\item	Finally, slow agents are placed as follows: 
		On each clause node $c_j$ of degree $\deg(c_j)$ we place $\deg(c_j)-1$ slow agents. 
		We think of these agents as follows: If a clause $c_j$ consists of three literals, 
		e.g. $(u_2 \vee \overline{u_3} \vee \overline{u_4})$,
		then in a satisfiable assignment at most two of these literals (e.g. $u_2$ and $\overline{u_4}$) are evaluated to false. 
		Thus the paths $p_{2,\text{false}}$ and $p_{4,\text{true}}$ are used in the delivery
		and the $\deg(c_j)-1=3-1=2$ agents are sent towards these paths to help carry the package. 
		Along the same lines, on each newly created node that is connected to an internal node of a path we place a single slow agent. 
		Additionally, a slow agent is placed on the package source $s$.
\end{itemize}
Overall we have (i) two agents on nodes $s$ and $u$, (ii) $x$ fast agents on the variable nodes, 
(iii) $2x(y-1)$ fast agents on the paths $p_{j,\text{true}}, p_{j,\text{false}}$ (one for each internal node of degree two),
and (iv) for each of the $2xy$ internal path nodes of degree 3 (except for $y$ many) a unique slow agent on an adjacent node. 
In total we get $k=4xy-x-y+2$ mobile agents.

\paragraph{Edge lengths.} Recall that we started from $x$ variables and $y$ clauses.
We set the length of the first edge $\left\{ s, u_1 \right\}$ to $12x^2y^2$, 
and the lengths of the edges $\left\{ u_{x+1}, u \right\}, \left\{ u_{x+1},t \right\}$ to $128x^2y^2$.
The edges along paths $p_{j,\text{true}}$ and $p_{j,\text{false}}$ 
have length $2$ if they lie directly to the right of a fast agent's starting position
and length $4xy-1$ if they are adjacent on its left.
It remains to set the length of all remaining edges adjacent to internal path nodes (connecting to either a clause node or a newly created node).
Each such edge, adjacent to the $(2l-1)$th internal node (counted from the left) of path $p_{j,\text{true}}$ or $p_{j,\text{false}}$, gets length
$12x^2y^2 + (j-1)\cdot 4xy^2 + (l-1) \cdot 4xy + 1$, see Figure~\ref{fig:lengths}.

From the defined edge lengths we get that each path $p_{j,\text{true}}$ and each path $p_{j,\text{false}}$ has length exactly $y\cdot(2+(4xy-1)) = 4xy^2+y$
and thus $d_G(u_1,u_{x+1}) = x\cdot d_G(u_j, u_{j+1}) = 4x^2y^2+xy$. 

\begin{lemma}[minimum delivery time]
	\label{lem:minimum-time}
	Any fastest delivery of the package from $s$ to $t$ takes time $\T = 32 x^2y^2$ and does not go via any clause node of $G(F)$.
\end{lemma}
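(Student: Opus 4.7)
I would split the proof into three parts: a lower bound $\T \ge 32x^2y^2$ on any delivery time, a matching clause-free schedule attaining it, and an argument ruling out clause-node detours in any fastest schedule.

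For the \emph{lower bound}, the key structural fact is that the only velocity-$8$ agent sits on the dead-end node $u$, whose only incident edge $\{u, u_{x+1}\}$ has length $128x^2y^2$. Thus this agent needs at least $16x^2y^2$ just to reach $u_{x+1}$, after which it still faces the length-$128x^2y^2$ edge $\{u_{x+1}, t\}$. If the very fast agent makes no contribution on $\{u_{x+1}, t\}$, that edge alone takes at least $64x^2y^2$ (the next-fastest velocity being $2$), blowing the budget. When the very fast agent does carry the package across part of $\{u_{x+1}, t\}$, a direct calculation --- covering the cases where it picks up at $u_{x+1}$, meets the package inside the edge, or chases a velocity-$2$ agent that has already advanced the package past $u_{x+1}$ --- shows that the total delivery time equals $32x^2y^2$ in the tight case and is strictly larger otherwise. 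As a by-product, the tight case forces the package to have reached $u_{x+1}$ (or to be met by the very fast agent while still being advanced through $\{u_{x+1},t\}$ at velocity $2$) by time $16x^2y^2$.

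For the \emph{matching upper bound}, I would exhibit a clause-free schedule: the slow agent at $s$ and the fast agent at $u_1$ meet at the midpoint of their incident edge at time $4x^2y^2$ and deliver the package to $u_1$ by time $8x^2y^2$; then for each $j = 1, \dots, x$ the fast agent at $u_j$ and the $y-1$ fast agents on the degree-$2$ internal nodes of the ``true'' path between $u_j$ and $u_{j+1}$ chain handovers at uniform velocity $2$, covering the $4xy^2+y$ length in half that time. Summing over $j$, the package reaches $u_{x+1}$ well before time $16x^2y^2$, whereupon the very fast agent --- arriving exactly at $16x^2y^2$ --- picks it up and delivers to $t$ in a further $16x^2y^2$, giving $\T = 32x^2y^2$.

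For \emph{clause-avoidance}, I would assume for contradiction that the package's (simple) trajectory passes through some clause node $c_i$, entering and leaving via two clause-edges of length at least $12x^2y^2+1$ each. The other endpoints of these two edges are internal path nodes, whose graph distance avoiding $c_i$ is bounded by $d_G(u_1, u_{x+1}) = 4x^2y^2+xy$. Hence the detour lengthens the $s$-to-$u_{x+1}$ portion of the trajectory by at least $2(12x^2y^2+1) - (4x^2y^2+xy) = 20x^2y^2 - xy + 2$, producing a total length $\ge 36x^2y^2 + 2$ for that portion. Since only agents of velocity $\le 2$ can contribute before $u_{x+1}$ in a tight schedule, this portion would take strictly more than $16x^2y^2$, contradicting the by-product of the lower bound. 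The main technical hurdle is the ``chasing'' case in the lower bound: one must verify that any combination of the very fast agent and velocity-$2$ agents on $\{u_{x+1},t\}$ matches but cannot beat $32x^2y^2$, and that equality forces the package's arrival at $u_{x+1}$ no later than $16x^2y^2$ --- which is precisely what makes the clause-avoidance argument self-contained.
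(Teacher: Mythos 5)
Your proposal is correct and follows essentially the same route as the paper: a lower bound of $2\cdot 128x^2y^2/8$ forced by the unique velocity-$8$ agent's round trip $u \to u_{x+1} \to t$, a matching clause-free schedule reaching $u_{x+1}$ before time $16x^2y^2$, and clause-avoidance from the fact that any trajectory through a clause node has length at least $3\cdot 12x^2y^2$ and hence needs more than $16x^2y^2$ time at velocity at most $2$. The only quibble is a harmless slip in the upper-bound schedule (the slow and fast agents meet at the one-third point of $\{s,u_1\}$, not the midpoint, consistent with your stated times $4x^2y^2$ and $8x^2y^2$); the paper's own schedule more simply lets the slow agent walk all the way to $u_1$, which still arrives at $u_{x+1}$ by $14x^2y^2+xy/2 < 16x^2y^2$.
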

\begin{proof}
	Clearly, any delivery which does not use the fastest agent (with $\pv=8$) takes time at least $128x^2y^2 / 2 = 64x^2y^2$. 
	Hence the fastest agent has to deliver the package to $t$ and thus travel from its starting position $u$ to $u_{x+1}$ 
	and later on from $u_{x+1}$ to $t$. We get $\T \geq 2\cdot 128x^2y^2 / 8 = 32x^2y^2$.

	Equality holds if and only if the other agents can collectively deliver the package to $u_{x+1}$ in time $16x^2y^2$. 
	This is the case if the slow agent at $s$ brings the package to $u_1$ in time $12x^2y^2/1 = 12x^2y^2$, 
	where the fast agent takes over and brings the package to $u_{x+1}$ in time $d_G(u_1,u_{x+1})/2 = (4x^2y^2+xy)/2 < 4x^2y^2$.

	If, however, the trajectory of the package contains a clause node $c_j$, then the package travels at least a distance of 
	$d_G(s,u_1) + d_G(u_1,c_j) + d_G(c_j, u_{x+1}) \geq 3\cdot 12x^2y^2$ which takes time at least $36x^2y^2/2 = 18x^2y^2$. 
	Note that in the last step we assumed that we do not use the very fast agent of velocity $8$ -- 
	if we used that agent, it would travel a total distance of strictly more than $2\cdot 128x^2y^2$ and hence $\T > 32x^2y^2$ as well.
\end{proof}

\noindent We now show that among all fastest delivery schedules (of time $\T = 32x^2y^2$) there exists a schedule 
with energy consumption $\E = 2xy$ if and only if $F$ is satisfiable (otherwise $\E > 2xy$).

\begin{figure}[tp]
	\centering
	\includegraphics[width=\linewidth]{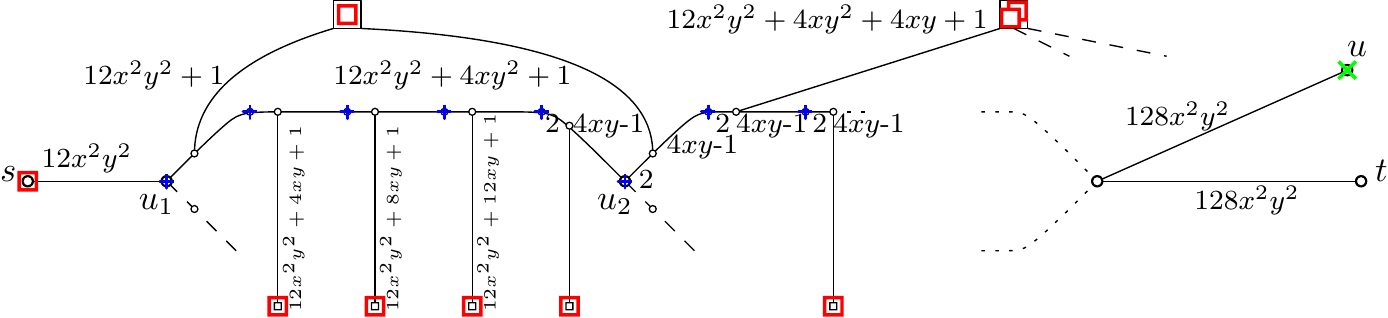}
	\caption{Edge lengths; placement of slow agents (\textcolor{red}{$\square$}), fast agents (\textcolor{blue}{$+$})
	and a very fast agent (\textcolor{green}{$\times$}).}
	\label{fig:lengths}
\end{figure}

\begin{lemma} [$(\T,\E)$ of a \psat-solution]
	Given a satisfiable assignment (a solution) for the variables of a 3CNF $F$
	there is a delivery schedule with delivery time $\T=32x^2y^2$ and energy consumption $\E=2xy$.	
	\label{lem:sat-delivery}
\end{lemma}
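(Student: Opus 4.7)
The plan is to exhibit a concrete schedule built from the satisfying assignment $\alpha$ and then verify its delivery time and energy consumption in turn. First I would describe the package trajectory: the package travels $s \to u_1 \to u_2 \to \cdots \to u_{x+1} \to t$, using the path $p_{j,\alpha(u_j)}$ between $u_j$ and $u_{j+1}$. The slow agent at $s$ covers the $s$--$u_1$ edge, and the very fast agent at $u$ departs at time $0$, reaches $u_{x+1}$ at time $16x^2y^2$, and then carries the package on the last edge to $t$, arriving at time $32x^2y^2$. Both of these agents have $\pw = 0$, so the delivery-time question reduces to showing that the package reaches $u_{x+1}$ exactly at time $16x^2y^2$.

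Along each chosen path from $u_j$ to $u_{j+1}$ I would alternate: a fast agent (sitting at $u_j$ or at an even-indexed internal node) carries the package across a short segment of length $2$, while a slow agent (coming from a clause or newly-created node adjacent to an odd-indexed internal node) arrives just in time to carry the subsequent long segment of length $4xy-1$. The decisive check is that the clause/new-node edge length $12x^2y^2 + (j-1)\cdot 4xy^2 + (l-1)\cdot 4xy + 1$ equals exactly the time at which the package reaches the $(2l-1)$-th position of $p_{j,\alpha(u_j)}$; since slow agents have velocity $1$ and start at time $0$, they arrive on cue. Summing the $y$ short/long pairs over the $x$ paths yields arrival at $u_{x+1}$ at time $12x^2y^2 + x\cdot 4xy^2 = 16x^2y^2$, matching the very fast agent perfectly.

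The main obstacle, and the only place where satisfiability really enters, is the feasibility check that each clause can actually supply the slow agents it is asked to send. The chosen path $p_{j,\alpha(u_j)}$ passes by a clause $c$'s attached position exactly when the corresponding literal in $c$ is false under $\alpha$. Since $\alpha$ satisfies $c$, at least one of its $\deg(c)$ literals is true, so at most $\deg(c)-1$ are false, matching the number of slow agents parked at $c$. Positions attached to newly-created nodes are unconditionally covered by their dedicated slow agent.

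Finally, the energy bookkeeping is a clean count. Only fast agents (weight $1$) contribute, and by construction each involved fast agent moves exactly $2$ units. The involved fast agents are those at $u_1,\ldots,u_x$ (one per chosen path) plus those at the $y-1$ even-indexed internal positions on each of the $x$ chosen paths, giving $x + x(y-1) = xy$ involved fast agents and $\E = 2xy$. The fast agent at $u_{x+1}$, along with every agent sitting on an unchosen path or on an unused clause supply, stays put and contributes nothing.
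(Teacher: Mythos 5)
Your proposal is correct and follows essentially the same route as the paper's proof: same package trajectory through the chosen paths $p_{j,\alpha(u_j)}$, same division of labor between the weight-$0$ agents at $s$ and $u$ and the alternating fast/slow carriers, the same just-in-time check against the clause-edge lengths, the same use of satisfiability to guarantee each clause can supply $\deg(c)-1$ slow agents, and the same count of $xy$ moving fast agents giving $\E=2xy$. (Only a cosmetic slip: there is no fast agent stationed at $u_{x+1}$ in the construction, but since you only invoke it as an agent that stays put, nothing is affected.)
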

\begin{proof}
	We first remark that the slow agent on $s$ can bring the package to $u_1$ in time $12x^2y^2$ with energy consumption $0$
	and that the fastest agent on $u$ can get to $u_{x+1}$ in time $16x^2y^2$ with energy cost $0$ 
	(and from there reach $t$ in time $16x^2y^2$ and energy $0$). 
	It remains to show that all other agents can collectively deliver the package from $u_1$ to $u_{x+1}$ in time $4x^2y^2$ 
	such that the package never has to wait and without spending more energy than $2xy$.

	Given a satisfiable assignment for the variables in $F$, the actions of the fast agents are straightforward:
	Each agent placed on a variable node $u_j$ moves according to the variable assignment of $u_j$ along the first edge
	of either the \emph{true}-path $p_{j,\text{true}}$ or the \emph{false}-path $p_{j,\text{false}}$
	as soon as the package arrives at $u_j$. 
	The other fast agents on internal path nodes of degree $2$ also wait for the package and upon arrival carry it over the 
	adjacent edge of length $2$. Collectively, each of the $xy$ fast agents (with velocity $2$ and weight $1$) that move 
	take time $2/2=1$ and energy $2\cdot 1=2$ to cross the adjacent edge. 

	Finally, we show that the package can be carried over the remaining gaps (edges of length $4xy-1$) by one slow agent each, 
	stationed on an adjacent clause or newly created node, \emph{without} the package having to wait for the agent to arrive.
	If this is the case, then $y$ slow and $y$ fast agents carry the package over a path from $u_j$ to $u_{j+1}$ in time 
	$y\cdot (4xy-1)/1 + y\cdot 2/2 = 4xy^2$ with an energy consumption of $y\cdot (4xy-1)\cdot 0 + y \cdot 2\cdot 1 = 2y$.
	We certainly have enough clause agents for this task: Since $F$ is satisfiable, each clause $c_i$ has at most 
	$\deg(c_i)-1$ unsatisfied literals and thus has enough agents on its clause node to send one agent each to the corresponding paths 
	in which they are needed. 
	As for the timing:  We can easily verify that each needed slow agent can reach its internal path node exactly by the time 
	at which the package arrives at the same node, thus the package does not have to wait for its next agent.
	
	Hence transporting the package from $u_1$ to $u_{x+1}$ takes time $x\cdot 4xy^2 = 4x^2y^2$ and needs an energy of $x\cdot 2y = 2xy$,
	yielding an overall delivery time of $\T= 12x^2y^2+4x^2y^2+16x^2y^2 = 32x^2y^2$ and energy consumption of $\E = 2xy$.
\end{proof}

\begin{lemma}[Delivery schedule yields \psat assignment]
	Any delivery schedule with delivery time $\T=32x^2y^2$ needs an energy consumption of at least $\E\geq 2xy$. 
	Furthermore, if there is a delivery schedule with $\T=32x^2y^2$ and $\E=2xy$, we can retrieve a satisfiable assignment
	to the variables of the underlying 3CNF $F$ from the delivery schedule.
	\label{lem:delivery-sat}
\end{lemma}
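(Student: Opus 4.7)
My first task is to establish the structural form of any schedule achieving $\T = 32 x^2 y^2$. By Lemma~\ref{lem:minimum-time}, the trajectory avoids all clause nodes; since it is a simple path from $s$ to $t$, and the only way from $u_j$ to $u_{j+1}$ without touching a clause is through one of the two paths $p_{j,\text{true}}$ or $p_{j,\text{false}}$, the trajectory visits $\{s,u_1\}$, then picks exactly one path $p_{j,\text{value}_j}$ per variable, and finally uses $\{u_{x+1},t\}$; this induces a candidate assignment $(\text{value}_1,\ldots,\text{value}_x)$ of $F$. I also observe that the very fast agent (the unique $\pv=8$ agent) is the only agent able to cover the length-$128x^2y^2$ edge $\{u_{x+1},t\}$ within the remaining time and, starting from $u$, is barely fast enough to reach $u_{x+1}$ at time $16x^2y^2$ and then $t$ at time $32x^2y^2$. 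In particular it cannot help anywhere else, leaving the $s$-to-$u_{x+1}$ segment of length $16x^2y^2 + xy$ to be covered within exactly $16x^2y^2$ time units by agents of velocity~$1$ and~$2$.

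For the lower bound, let $x_1$ and $x_2$ denote the total package-carrying distance on this $s$-to-$u_{x+1}$ segment performed at velocity~$1$ and~$2$ respectively. Since the package trajectory is simple, these add up to the full length $16x^2y^2 + xy$, and the no-idling upper bound on time yields $x_1 + x_2/2 \leq 16x^2y^2$; subtracting the two relations gives $x_2 \geq 2xy$. All slow agents and the very fast agent have weight~$0$, so the total energy satisfies $\E \geq x_2 \geq 2xy$, proving the first part of the lemma.

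For the equality case, both inequalities above become tight, so $x_2 = 2xy$ exactly, no fast agent ever moves without the package, and the package does not idle anywhere on $s$-to-$u_{x+1}$. Since the total length of short edges on the $x$ chosen paths is exactly $2xy$, the velocity-$2$ carrying must coincide precisely with these short edges, while all long edges (and the $s$-$u_1$ edge) are carried at velocity~$1$ and thus by slow agents. The key step is then a timing-uniqueness argument based on the construction's edge lengths: the distance from the clause (or new-node) directly adjacent to a deg-$3$ internal node $D_3^{(l)}$ on path $p_{j,\text{value}_j}$ equals precisely the package's arrival time at $D_3^{(l)}$, namely $12x^2y^2 + (j-1)\cdot 4xy^2 + (l-1)\cdot 4xy + 1$; any alternative slow-agent source is at strictly larger graph distance, and in fact the $+1$ offset in the edge lengths is tuned so that a slow agent trying to cover an additional deg-$3$-to-deg-$3$ hop along the path would arrive exactly one time unit late. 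Hence every used long edge is covered by the slow agent from its \emph{own} adjacent node.

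A counting argument now finishes the proof. Each clause $c_i$ owns only $\deg(c_i)-1$ slow agents, so at most $\deg(c_i)-1$ of the $\deg(c_i)$ paths adjacent to $c_i$ can be used in the schedule. By the adjacency convention of the reduction---$c_i$ is connected to $p_{j,\text{false}}$ precisely when $u_j$ appears positively in $c_i$, and analogously for $\overline{u_j}$---the number of $c_i$-adjacent used paths equals the number of literals of $c_i$ that are falsified by $(\text{value}_1,\ldots,\text{value}_x)$. Hence at least one literal of every clause is satisfied and the extracted assignment satisfies $F$. I expect the main obstacle to be the timing-uniqueness step: carefully ruling out exotic schedules that employ in-edge handovers, slightly shift pick-up points, or let a single slow agent cover multiple long edges by re-using seemingly wasted time; this requires turning the ``arrives at least one time unit late'' intuition into an airtight inequality for every possible alternative source of a slow agent.
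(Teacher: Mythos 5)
Your proof follows essentially the same route as the paper's: the same lower bound via the total carrying distance of the fast agents (your explicit $x_1,x_2$ accounting is a cleaner phrasing of the paper's computation on the $u_1$-to-$u_{x+1}$ segment), the same tightness analysis forcing fast agents onto the length-$2$ edges and slow agents onto the length-$(4xy-1)$ edges with no idling, and the same just-in-time plus $\deg(c_i)-1$ counting argument per clause. The step you flag as delicate --- ruling out non-adjacent slow agents and in-edge handovers via the $+1$ offset --- is precisely the step the paper also treats tersely, and your ``arrives exactly one time unit late'' computation is consistent with the paper's edge-length choices.
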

\begin{proof}
	We already know by Lemma~\ref{lem:minimum-time} that any delivery schedule with delivery time $\T=32x^2y^2$ can't deliver the
	package via any clause node and can't use the fastest agent on the left side of $u_{x+1}$.
	Thus the length of the package's trajectory from $u_1$ to $u_{x+1}$ must have length $d_G(u_1,u_{x+1}) = 4x^2y^2+xy$.
	We only have slow agents ($\pv=1,\pw=0$) and fast agents ($\pv=2,\pw=1$) available to travel this distance. 
	Hence to cover $4x^2y^2+xy$ in time $\leq 4x^2y^2$ we need to use fast agents for a total length of at least $2xy$. 
	This, however, results in an energy consumption of at least $2xy\cdot 1 = 2xy$ and hence $\E\geq 2xy$.

	To achieve equality, all fast agents must travel a total length of \emph{exactly} $2xy$, moving only while carrying the package, and
	moving \emph{only from left to right}. In particular, fast agents can't help to transport the package over edges of length $4xy-1$, from neither side.

	Therefore, in any delivery schedule with delivery time $\T=32x^2y^2$ and $\E=2xy$, 
	the package must be transported over edges of length 2 by fast agents and over edges of length $4xy-1$ by slow agents. 
	Furthermore, following the same line of reasoning as in the proof of Lemma~\ref{lem:sat-delivery}, to actually achieve $\T=32x^2y^2$ 
	the package can never wait at an internal path node for the agent which will carry it next.
	Suppose the package just arrived at the $2l-1$th node of path $p_{j,\text{true}}$. 
	This means it has covered a distance of $12x^2y^2 + (j-1)\cdot y \cdot(2+4xy-1) + (l-1)\cdot (2+4xy-1)+2$ 
	with the help of $(j-1)\cdot y + l$ fast agents so far. Hence the current time is
	\[	\tfrac{12x^2y^2 + (j-1)\cdot y\cdot(4xy-1) + (l-1)\cdot (4xy-1)}{1} + \tfrac{(j-1)\cdot y\cdot 2 + l\cdot2}{2}
		= 12x^2y^2 + (j-1)4x^2y^2 + (l-1)4xy + 1.  \]
	Among all slow agents, only a slow agent on an adjacent node can reach the internal path node at this time, since all other 
	slow agents are further away. 

	In conclusion, in a delivery schedule with delivery time $\T=32x^2y^2$ and $\E=2xy$, agents stationed on a clause node $c_i$ can only help
	carrying the package on paths corresponding to the literals of this clause, 
	and since there are only $\deg(c_i)-1$ agents stationed on that clause, we know that for at least one literal $l(u_j)$ of $c_i$, 
	the path $p_{j,l(u_j)}$ has been taken (for which no help from a clause agent of $c_i$ is necessary).
	Hence we can read a satisfiable variable assignment for $F$ directly from the choice of the variable agents 
	(which each pick the adjacent \emph{true}- or the adjacent \emph{false}-path): with this assignment, each clause has at least one satisfied literal.
\end{proof}

We remark that the graph $G(F)$ created from $H(F)$ is planar and that all edge lengths and agent velocities and weights have polynomial size. 
Hence we conclude:
\vspace{2ex}

\begin{theorem*}[\ref{thm:fast-efficient-hardness}]
\TWDelivery is \NP-hard, even on planar graphs.	
\end{theorem*}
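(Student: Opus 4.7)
The plan is to reduce from \psat: given a planar 3CNF formula $F$ on $x$ variables and $y$ clauses, I would build in polynomial time a planar \TWDelivery instance $G(F)$ with fastest delivery time $\T^{\star} = 32x^{2}y^{2}$ such that $F$ is satisfiable if and only if some schedule achieves both $\T = \T^{\star}$ and energy $\E = 2xy$. The graph $G(F)$ starts from the variable--clause incidence graph of $F$ together with the variable cycle (which is planar by the planar-3SAT condition). For each consecutive pair of variables $u_j,u_{j+1}$ I replace the cycle edge on each of the two sides by a path made of alternating short edges of length $2$ and long edges of length $4xy-1$; one side represents $u_j = $ true, the other $u_j = $ false. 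Every literal edge of $c_i$ is rerouted to an internal degree-$3$ node of the appropriate side path, and the remaining degree-$3$ internal nodes receive a unique dummy neighbor. Finally I attach $\{s,u_1\}$ of length $12x^{2}y^{2}$ on the left and edges $\{u_{x+1},u\}, \{u_{x+1},t\}$ of length $128x^{2}y^{2}$ on the right. On this graph I install three kinds of agents: one very fast agent $(\pv=8,\pw=0)$ at $u$, one fast-inefficient agent $(\pv=2,\pw=1)$ on every variable node and every degree-$2$ internal path node, and slow-efficient agents $(\pv=1,\pw=0)$ with multiplicity $\deg(c_i)-1$ at each clause $c_i$, one at each dummy node, and one at $s$.

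For the forward direction, assume $F$ has a satisfying assignment. The slow agent at $s$ carries the package to $u_1$ in time $12x^{2}y^{2}$ at zero energy; the very fast agent shuttles between $u$ and $u_{x+1}$, covering the last edge in time $16x^{2}y^{2}$ at zero energy. It remains to route the package from $u_1$ to $u_{x+1}$ in time $4x^{2}y^{2}$ using energy exactly $2xy$, by picking the path $p_{j,\text{true}}$ or $p_{j,\text{false}}$ according to the assignment. The $y$ fast agents on each chosen path handle its short edges (contributing $2$ units of energy each), and the $y$ long edges are staffed by slow agents. Because each clause $c_i$ is satisfied, at most $\deg(c_i)-1$ of its literals are unsatisfied, so its $\deg(c_i)-1$ agents suffice to cover exactly the long edges in selected paths that come from its own literals; all remaining long edges are covered by their attached dummy-node slow agents. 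By the careful choice of distances from clauses/dummies to internal path nodes, each required slow agent arrives just as the package does, giving $\T=32x^{2}y^{2}$ and $\E=xy\cdot 2=2xy$.

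For the backward direction, I would first show that any schedule meeting $\T^{\star}$ must use the very fast agent on both rightmost edges: routing through any clause node inflates the package trajectory beyond $\T^{\star}$. Hence the package travels exactly $d(u_1,u_{x+1})=4x^{2}y^{2}+xy$ between $u_1$ and $u_{x+1}$ using only slow and fast agents, and fitting this into time $4x^{2}y^{2}$ forces fast agents to carry the package for a total length of at least $2xy$, so $\E\geq 2xy$. If equality holds, then fast agents move only left-to-right while carrying the package and contribute exactly $2xy$, so they only cover short edges; the long edges must be covered by slow agents. A timing bookkeeping argument shows that at each internal path node the only slow agent that can arrive on schedule is the unique one attached to that node, so each clause $c_i$ spends its $\deg(c_i)-1$ slow agents only on long edges inside paths of its own literals. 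Consequently, for every clause at least one literal's path is \emph{not} drawing on a clause-$c_i$ agent, and the choices made by the variable-node fast agents directly yield a satisfying assignment of $F$.

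The main obstacle will be calibrating the many edge lengths so that the schedule is simultaneously \emph{realizable} in the forward direction and \emph{forced to be tight} in the backward direction: the formula $12x^{2}y^{2}+(j-1)\cdot 4xy^{2}+(l-1)\cdot 4xy+1$ for the length of the edge from a clause/dummy to the $(2l-1)$-th internal node of the $j$-th path segment is what makes each required slow agent's arrival exactly synchronized with the package, and exhibits that no other slow agent could serve in its place. Once this synchronization is pinned down, the combinatorial equivalence between ``each clause has a spare agent'' and ``each clause is satisfied'' falls out, and planarity of $G(F)$ is inherited from the planar embedding of the variable--clause--cycle graph.
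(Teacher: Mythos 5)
Your proposal follows essentially the same reduction as the paper: the same transformation of a planar 3CNF into a delivery graph with true/false path pairs, the same three agent types with identical velocities and weights, the same target values $\T^{\star}=32x^{2}y^{2}$ and $\E=2xy$, and the same synchronizing edge-length formula $12x^{2}y^{2}+(j-1)\cdot 4xy^{2}+(l-1)\cdot 4xy+1$. The argument is correct and matches the paper's proof in both directions.
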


\subsubsection*{Remark}
The proof of Theorem~\ref{thm:fast-efficient-hardness} also extends to the following relaxation of \TWDelivery:
\textbf{Variant (ii$^*$):} Given a time constraint $\T^*$, minimize the energy $\E$ such that $\T \leq T^*$.

Note that Variant (ii$^*$) includes the setting where $\T^* = \min \T$. 
Hence our hardness result for \TWDelivery extends to Variant (ii$^*$).
Furthermore, in this case $\E$ is NP-hard to approximate to within any constant factor $C$: 
This follows by appropriately scaling the edges traveled by slow agents in the hardness proof for
Theorem~\ref{thm:fast-efficient-hardness}.
Specifically, we scale the edges of length $(4xy-1)$ to $(C\cdot 4xy-1)$ and adapt all edges incident to the 
starting positions of slow agents or incident to $s$, $u$, $t$ accordingly.
Then a \psat-solution will have a delivery time of $\T = C\cdot 32x^2y^2 =: \T^*$ and an energy consumption of $\E = 2xy$,
while any other solution with $\T = C\cdot 32x^2y^2$ will have energy consumption $\E \geq C\cdot 4xy -1$.

\newpage
\section{An efficient algorithm for \TWDelivery on paths}

\subsection{Fast computation and recombination}

\begin{figure}[h!]
	\includegraphics[width=\linewidth]{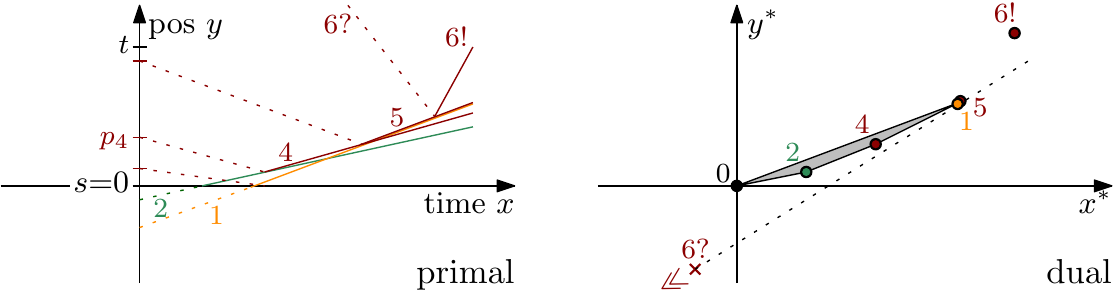}
	\caption[Time first: Line -- dual]{(left) Considering the potential contribution of agent $6$ amounts to
	finding the intersection of a query line (6?) with the upper envelope of previous rays 
	and then (since $\pv_6 > \pv_5 = \pv_1$) adding its corresponding ray (6!).
	(right) The same operations in the geometric point-line dual, 
which allows for faster computations via convex hull queries.}
	\label{fig:appendix:decomposition-dual}
\end{figure}

\paragraph{Envelope intersections via point-line duality.}
In classic geometric point-line duality~\cite{edelsbrunner1987algorithms}, each line $g\colon y=a \cdot x+b$ 
is mapped to a point $g^* = (a, -b)$ and vice versa each point $q = (c,d)$ is mapped to a line $q^*\colon y^* = c \cdot x^* - d$.

In this dual setting, the task of finding the leftmost intersection point of a query line with a set of lines turns into 
finding the line of minimum slope of a query point with a set of points.
This corresponds to asking for the right tangent $q_i^*\colon y^* = x_i \cdot x^* - y_i$ from a query point 
(the dual of the query line) onto the convex hull of a point set 
(the dual of the rays $f_i$\footnote{For the purpose of line-point duality, we treat the rays as lines. 
This is fine, since each ray $f_i \not\equiv 0$ is dominated for $x$-values smaller than $x_i$ anyways.}), 
see Figure~\ref{fig:appendix:decomposition-dual}. 
In order to use such \emph{tangent queries}, we use the data structure for planar convex hulls of dynamic point sets by
Brodal and Jacob~\cite{RikoConvexHull02,jacob2002dynamic}.
To find the corresponding ray of maximum slope $\pv_j$, we also work in the geometric dual: 
Now that we have found the tangent $t^*$ (of direction $(x_i,1)$), we make an \emph{extreme point query} 
in the orthogonal direction $(-1,x_i)$. The query returns an extremal vertex of the convex hull in this direction.
In the (original) geometric primal, this vertex corresponds to a ray containing the intersection point $(x_i,y_i)$. 
In case there is only one such ray, we have found $f_j$ (respectively its dual). 
However, if there are multiple such rays, the resulting extremal point -- being a vertex of the convex hull -- 
corresponds to a ray of \emph{minimal} or \emph{maximal} slope among all such rays. Hence we make an additional \emph{neighbor query}, 
giving us the two neighboring vertices on the convex hull. We check whether the neighbor with higher $x^*$-coordinate lies
on the tangent $t^*$, too. If so, the ray $f_j$ of maximal slope $\pv_j$ is given as the dual of this neighbor, 
otherwise as the dual of the extremal point query's result itself. 

Finally, we check whether $\pv_j \leq \pv_i$, and if so, we insert $f_i^* = (\pv_i, \pv_i\cdot x_i - y_i)$ into the point set 
of the geometric dual. The mentioned convex hull data structure allows point insertions and deletions in $\bigO(\log k)$ amortized time, 
and tangent, extremal point and neighbor queries in time $\bigO(\log k)$. 
As for each $f_i$ we need one of each type of queries and one insertion, we get an overall running time of $\bigO(k\log k)$.

\begin{figure}[t!]
	\includegraphics[width=\linewidth]{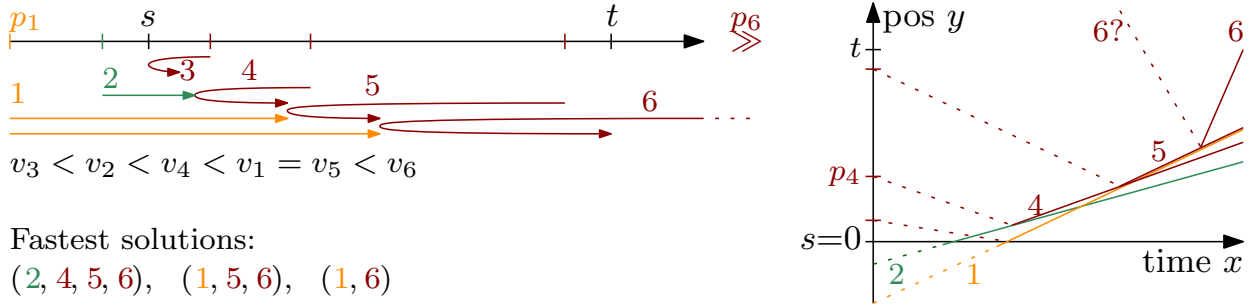}
	\caption[Time first: Line -- decomposition]{(left) Example of possible delivery schedules: 
	If agent $4$ is involved, he must take over the package from agent $2$, since agent $3$ is too slow. 
	Using agents $2$ and $4$ to bring the package to agent 5's pick-up position takes the same time as using agent $1$ on its own. 
	Agents $1$ and $5$ have the same velocity, so in terms of delivery time we could use either or even both of them, 
	but agent $1$ only if agents $2$ and $4$ are both not used (otherwise they all consume energy).
	(right) Fastest solutions correspond to at most $1$ agent with $p_i < s$ and a number of agents
	corresponding to a suffix of the upper envelope.
	}
	\label{fig:appendix:decomposition}
\end{figure}

\paragraph{Uniform velocity instances and recombination.} As any optimum schedule transports the package to $t$ as early as 
possible, the delivery time $\T$ thus corresponds to the $x$-coordinate at which the upper envelope reaches $y=t$. 
Every line segment of the upper envelope corresponds to intervals on the line of piecewise uniform velocity.
Hence we would like to represent the upper envelope as an ordered list of its segments. 
Since we have already computed all the rays $f_i$, computing this list of segments can be done in time $\bigO(k\log k)$: 
This can be done either by a divide-and-conquer approach by Hershberger~\cite{Hershberger89} which finds the upper envelope 
of line segments (where we treat rays as segments extending up to $x=\T$) or with a Graham scan-like sweep~\cite{graham1972efficient}, 
first sorting the rays in order of increasing slopes, then adding the rays one by one in amortized constant time, 
maintaining a stack of lines appearing on the upper envelope~\cite{wcipeg}.

Assuming that we know the optimum schedule for each of the uniform velocity intervals, we can recombine these subschedules as follows:
Recall that either one or no agent with $p_i < s$ is involved in an optimum schedule. In the first case, the optimum schedule simply 
corresponds to the complete upper envelope. In the second case, the optimum corresponds to a ray with starting position on the 
$x$-axis together with a suffix of the upper envelope to which it is a right tangent.\footnote{For example, 
	in Figure~\ref{fig:appendix:decomposition-dual} (right), $f_2$ is a right tangent to $q_4=(x_4,y_4)$, 
	hence we consider the suffix $4,5,6$. On the other hand, $f_1$ is a right tangent to both $q_5=(x_5,y_5)$ and $q_6=(x_6,y_6)$, 
	hence we consider $f_1$ up to $q_5$ with the suffix $5,6$ \emph{as well as} $f_1$ up to $q_6$ with the line segment $6$ only.}
Each ray is a tangent to at most two distinct vertices of the upper envelope, and we treat the line segment defined by 
each such tangent as a uniform velocity instance itself. 
Overall, we get at most $\bigO(k)$ many time-optimal schedules (see Figure~\ref{fig:appendix:decomposition}). 

Out of these time-optimal candidates, it remains to find the schedule with minimum energy consumption $\E$. 
Hence we precompute for each suffix of line segments on the upper envelope the sum of energy consumptions of the corresponding subschedules. 
Given the energy consumption of each subschedule, all suffix-sums can be computed in overall time $\bigO(k)$. 
After this precomputation, the schedule lexicographically minimizing $(\T,\E)$ is given as the minimum over all candidates. 
We conclude:

\begin{lemma*}[\ref{lem:decomposition}]
Arbitrary velocity instances of \TWDelivery on paths can be decomposed into and recombined from 
uniform velocity instances in time $\bigO(n+k\log k)$.
\end{lemma*}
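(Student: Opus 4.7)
The plan is to implement the decomposition sketched in the main text efficiently, and then combine the uniform-velocity subschedules. I first set up the ray diagram: for each agent $i$ in the sorted order, I compute the ray $f_i$ that describes the trajectory of the package while carried by $i$. By the main-text discussion, either $p_i<s$ (and $i$ acts alone, so $f_i$ has slope $\pv_i$ from $(0,p_i)$) or $p_i>s$, in which case I need the time-wise first intersection $q_i=(x_i,y_i)$ of the query line $y=p_i-\pv_i x$ with the upper envelope of $f_0,\ldots,f_{i-1}$, together with a steepest-slope ray $f_j$ through $q_i$. If $\pv_j \le \pv_i$ I set $f_i: y=\pv_i x + (y_i-\pv_i x_i)$, else $f_i\equiv 0$ (agent $i$ is useless).

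The main obstacle is doing each intersection query faster than the trivial $O(k)$ scan, since the set of rays is built online and standard convex-hull-trick precomputation does not apply. I plan to use the classic point–line duality $y=ax+b\mapsto (a,-b)$: the leftmost intersection of a query line with a set of lines dualizes to the right tangent from a query point to the convex hull of the dual point set. I then maintain the dual point set in the Brodal–Jacob dynamic planar convex hull, which supports point insertion, tangent queries, extreme-point queries, and neighbor queries all in $O(\log k)$ amortized. For each $i$, one tangent query on the envelope's dual yields $q_i$; an extreme-point query in the direction orthogonal to the tangent, followed if necessary by a neighbor query to resolve ties among collinear points, identifies the dual of $f_j$ (the ray of maximum slope through $q_i$). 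If $\pv_j\le \pv_i$ I insert the dual of $f_i$. This costs $O(k\log k)$ in total.

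Once all $f_i$ are available, I plan to extract the upper envelope as a sorted list of its segments in $O(k\log k)$, e.g., by sorting the rays by slope and doing a Graham-scan-like stack pass; each maximal segment corresponds to a uniform-velocity subinstance. For recombination, I observe that an overall optimum either equals the complete envelope or uses one extra agent with $p_i<s$ whose ray is a right tangent to a suffix of the envelope; each ray is tangent to at most two envelope vertices, which yields $O(k)$ candidate fast-delivery schedules. To choose the cheapest among them, I precompute suffix sums of the energy consumption of the uniform-velocity subschedules in $O(k)$, assuming the per-subschedule energies are delivered by the uniform-velocity algorithm (Theorem~\ref{thm:uniform-velocities}, later improved in Lemma~\ref{lem:dynamic-dp}); the final minimum over candidates is another linear pass.

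Adding the preprocessing (a DFS from $s$ and sorting the starting positions, $O(n+k\log k)$) to the $O(k\log k)$ ray computation, envelope extraction, and $O(k)$ recombination gives the claimed $O(n+k\log k)$ total. The delicate point I expect to have to justify carefully is correctness of the tangent/extreme-point/neighbor combination when several previous rays pass through the same envelope point with equal slope, since there the choice of $f_j$ affects whether $f_i$ gets inserted at all; this is where the ``steepest-slope ray through $q_i$'' convention must be implemented by the extra neighbor query on the convex hull.
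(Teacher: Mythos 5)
Your proposal is correct and follows essentially the same route as the paper's proof: the same point--line duality reduction of the leftmost-intersection query to a right-tangent query, the same use of the Brodal--Jacob dynamic convex hull with extreme-point and neighbor queries to resolve the steepest-slope tie, the same $\bigO(k\log k)$ envelope extraction, and the same $\bigO(k)$ suffix-sum recombination over the $\bigO(k)$ tangent candidates. No gaps to report.
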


\subsection{Characterization of handover points}

\begin{lemma*}[\ref{lem:line-structure}]
There is an optimum solution where each agent $i$ that is involved in advancing the package
picks it up at $q_i^+ = q_i$ or at $q_i^+ = p_i$.	
\end{lemma*}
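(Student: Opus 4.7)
The plan is to start from any optimum schedule (which exists by Theorem~\ref{thm:existence}) and apply a local exchange argument to each involved agent in turn. Fix an involved agent $i$ (with predecessor $i-1$ and successor $i+1$ in the list of involved agents) and consider varying only its pickup $q_i^+$ while keeping every other handover point fixed. First I would verify that $q_i^+$ must lie in the interval $I_i := [\max(q_i,\,q_{i-1}^+),\,\min(p_i,\,q_{i+1}^+)]$: the lower bound $q_i$ is forced by its defining no-delay property, $q_i^+ \geq q_{i-1}^+$ and $q_i^+ \leq q_{i+1}^+$ by the rightward monotonicity of the package trajectory, and $q_i^+ \leq p_i$ because at uniform speed $\overline{\pv}$ agent $i$ starting at $p_i$ cannot chase a rightward-moving package (and the package never waits in-between handovers of any optimum, as argued before the lemma).

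Next I would show that the total energy is affine in $q_i^+$ (with all other handovers fixed) and compute its slope: agent $i$'s total travel is $(p_i - q_i^+) + (q_{i+1}^+ - q_i^+)$ and agent $i-1$'s carry distance is $q_i^+ - q_{i-1}^+$; all other agents' costs are independent of $q_i^+$. Hence $\E$ changes by $\pw_{i-1} - 2\pw_i$ per unit increase of $q_i^+$, and by optimality $q_i^+$ must sit at an endpoint of $I_i$. If this endpoint is $q_i$ or $p_i$ we are done for agent $i$. If instead it coincides with $q_{i-1}^+$ (resp.\ $q_{i+1}^+$), then agent $i-1$ (resp.\ $i$) carries the package over distance zero and can be removed from the involved list without affecting $\T$ or $\E$.

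I would iterate this exchange over all involved agents. Since each iteration either fixes a pickup into $\{q_i, p_i\}$ or strictly reduces the number of involved agents, the process terminates after finitely many steps and leaves an optimum schedule that satisfies the lemma. The first involved agent is handled as a base case: the package originates at $s$ and so $q_1^+ = s = q_1$ is forced; for the last involved agent one sets $q_{i+1}^+ := t$ so that the same formulas still apply.

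The main delicate point is verifying that shifting $q_i^+$ inside $I_i$ preserves feasibility, and in particular keeps the delivery time at the optimum value $(\delta + (t-s))/\overline{\pv}$. This follows from the observation that for every choice $q_i^+ \in I_i$ the package trajectory is the same: an initial pause at $s$ of length $\delta/\overline{\pv}$ while agent~$1$ walks in, followed by continuous rightward motion at speed $\overline{\pv}$ from $s$ to $t$. Consequently the exchange is genuinely local, reducing to the one-dimensional energy minimization analysed above.
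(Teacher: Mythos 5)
Your proposal is essentially the paper's own argument: both hinge on the observation that, with all other handovers fixed, the delivery time is unaffected by the choice of $q_i^+$ in its feasible range (because $q_i^+\geq q_i$ guarantees no waiting) while the energy is affine in $q_i^+$ with slope $\pw_{i-1}-2\pw_i$, so one pushes $q_i^+$ to an extreme. The paper packages this as a contradiction (take the optimum maximizing the coordinate of the first offending pickup and do a three-way case split on the sign of $2\pw_i-\pw_j$), whereas you phrase it as a direct local optimization over a closed interval plus removal of zero-carry agents and a termination count; these are interchangeable.

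One sub-step of yours is misjustified: $q_i^+\leq p_i$ is \emph{not} forced by feasibility. An agent starting at $p_i$ can walk right to a point $q_i^+>p_i$ and arrive there \emph{before} the package (it reaches $q_i^+$ at time $(q_i^+-p_i)/\overline{\pv}$ versus the package's $(\delta+q_i^+-s)/\overline{\pv}$), so ``cannot chase a rightward-moving package'' is false as stated. The correct reason, which the paper uses, is energetic: for $q_i^+>p_i$ the heavier predecessor carries the package over $[p_i,q_i^+]$ where the lighter agent $i$ could have, so energy is non-increasing as one moves $q_i^+$ left to $p_i$ (equivalently, your affine cost acquires a kink at $p_i$, with slope $\pw_{i-1}\geq 0$ to the right of it, and the minimum over the enlarged interval is still attained at an endpoint or at the kink $p_i$). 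This is easily repaired and does not affect the conclusion.
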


\begin{proof}
	Assume otherwise, i.e.,~ all optimum schedules contain some point $q_i^+ \notin \left\{ q_i,p_i \right\}$. 
	For the sake of a contradiction let $S$ be the optimal schedule where the coordinate of the first pick-up point $q_i^+$ 
	different from $q_i$ and $p_i$ is maximized, and let $\epsilon>0$ be an arbitrary small constant.
	Denote by $j$ the agent transporting the package right before agent $i$.
	Note that $j$ always exists as we must have $i > 1$ by $q_1^+ = s = q_1$. 
	We know that $q_i < q_i^+$ as $i$ would delay the delivery otherwise.
	We also know that $q_i^+ < p_i$ as the transportation on $[p_i, p_i+\epsilon]$ would otherwise use more energy than necessary 
	($\pw_j > \pw_i$). We now distinguish three cases:
	\begin{itemize}
		\item	$2\pw_i < \pw_j$: We modify $S$ by moving $q_i^+$ by $\epsilon$ to the left towards $q_i$.
			This does not change the delivery time but reduces the energy cost and thus contradicts $S$'s optimality.
		\item	$2\pw_i > \pw_j$: If we modify $S$ by moving $q_i^+$ by $\epsilon$ to the right, the energy cost goes down, 
			again contradicting $S$ being optimal.
		\item	$2\pw_i = \pw_j$: If we again modify $S$ by moving $q_i^+$ by $\epsilon$ to the right, 
			the delivery time and energy cost do not change, but we contradict the extremality of $q_i^+$ in $S$.\qedhere
	\end{itemize}
\end{proof}

\subsection{Details of the dynamic program.}

The computational bottleneck of our dynamic program is (for each subproblem $\E[i]$)
the minimization over the set of options in Case~2.c). Each option evaluates a linear function 
\begin{align*}
	f_j(q_i):= \underbrace{\pw_j}_{\mathclap{\text{slope } a}} \cdot \ q_i\ + \ \underbrace{(\E[j] - p_j \cdot \pw_j)}_{\text{offset } b} 
		= a \cdot q_i + b
\end{align*}
at position $q_i$. So when we see $\min_j \{f_j(q_i) \mid q_i < p_j < p_i \text{ and } A[j]=j \}$ as a function of $q_i$, 
we get the lower envelope of all these linear functions (see Figure~\ref{fig:time:primal-dual} (left)).
The set of these functions is fully dynamic over time: Whenever we compute $\E[i]$ with $A[i]=i$, a new function $f_i$ is added to the set.
Whenever we go from $q_i$ to $q_{i+1}$, all functions $f_j$ with $q_i < p_j < q_{i+1}$ drop out of the window of functions 
we are minimizing over, which we can take care of by lazy incrementation. 
One particular data structure that supports insertion and deletion of up to $k$ linear functions $ax+b$ as well as 
find-min-queries for up to $k$ query values $x=q$ is a parametric heap. If successive query values $q$ are non-decreasing, 
a parametrized data structure is also called kinetic~\cite{Basch99}. 
The running time of a kinetic heap, however, is dominated by delete-operations, 
which take $\bigO(\log k \log\log k)$ time each~\cite{Kaplan01}.
Hence we turn once more to the more recent dynamic planar convex hull data structure, 
which supports deletions in amortized time $\bigO(\log k)$~\cite{RikoConvexHull02}. 
Overall we get:

\begin{figure}
	\centering\includegraphics[width=\linewidth]{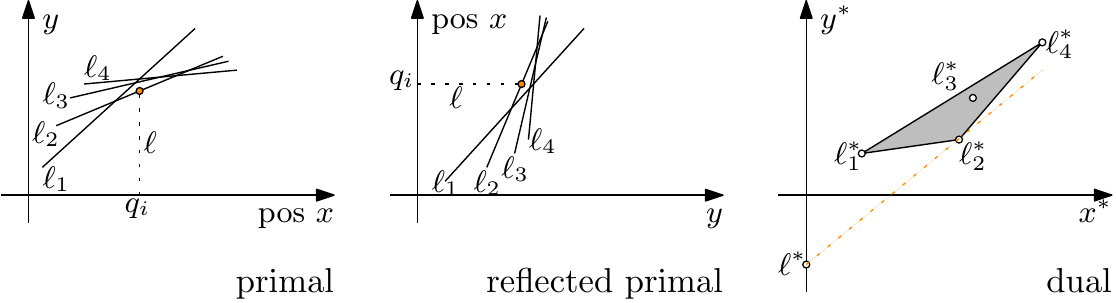}
	\caption[Time first: Line -- duality]{Use point-line duality to transform the minimization problem of
		(left) finding the intersection point of a query line with a lower envelope of lines into
		(right) a tangent query problem on the convex hull of a dynamic point set.
	}
	\label{fig:time:primal-dual}
\end{figure}

\begin{lemma*}[\ref{lem:dynamic-dp}]
An optimum schedule for \TWDelivery with uniform velocity $\overline{\pv}$ on the line can be computed in $\bigO(n + k \log k)$ time.
\end{lemma*}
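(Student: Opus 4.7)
The plan is to implement the four-case dynamic program developed above so that each of the $O(k)$ subproblems $\E[i]$ is handled in amortized $O(\log k)$ time, after an $O(n+k\log k)$ preprocessing and followed by a linear-time final combination via Lemma~\ref{lem:line-right-agents}. The preprocessing sorts the agent positions so that $p_1 \leq \ldots \leq p_k$, removes dominated agents (those $i$ for which there is a $j<i$ with $\pw_j < \pw_i$), and precomputes all $q_i = (p_i + s - \delta)/2$; this costs $O(n + k\log k)$.

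I would then process $i = 1, 2, \ldots, k$ in order. Cases 1, 2.a), and 2.b) were already argued to be evaluable in amortized $O(1)$, the only bookkeeping being a lazy pointer $i'$ over the agents with $p_{i'}\leq q_i$. The bottleneck is Case 2.c), where one must compute
\[
\min_{j}\{\E[j] - (p_j - q_i)\pw_j \mid q_i < p_j < p_i \wedge A[j]=j\}.
\]
Writing this as $\min_j f_j(q_i)$ with $f_j(x) := \pw_j\cdot x + (\E[j] - p_j \pw_j)$, the inner minimum is a lower-envelope evaluation over a dynamically changing set of lines.

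To answer such queries quickly, I apply standard geometric point-line duality: each line $f_j$ becomes a dual point, and the lower envelope evaluation $\min_j f_j(q_i)$ becomes an extreme-point (equivalently, tangent) query on the upper hull of the dual points in a direction determined by $q_i$ (see Figure~\ref{fig:time:primal-dual}). I use the dynamic planar convex hull data structure of Brodal and Jacob~\cite{RikoConvexHull02}, which supports insertions, deletions, and extreme-point/tangent queries in amortized $O(\log k)$ time. The plan is to maintain the invariant that, at the moment $\E[i]$ is computed, the hull contains precisely the duals of $\{f_j : A[j]=j \text{ and } q_i < p_j < p_i\}$. Insertions happen right after computing $\E[j]$ whenever $A[j]=j$; deletions are driven by a lazy monotone pointer, removing each $f_j$ once $q_i$ exceeds $p_j$. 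Since each dual point is inserted and deleted at most once, the $O(k)$ hull operations contribute $O(k\log k)$ amortized total. After all $\E[i]$ are computed, Lemma~\ref{lem:line-right-agents} gives the optimum $\E$ by scanning the $O(k)$ candidates $\E[j]-(p_j-t)\pw_j$ in linear time, and backtracking through $A[\cdot]$ and $A'[\cdot]$ reconstructs the concrete schedule.

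The main obstacle is the careful management of the dynamic data structure so that the feasibility constraints $A[j]=j$ and $q_i < p_j < p_i$ are respected exactly at the moment each query is issued — in particular, ensuring that the lazy deletion schedule based on the monotone increase of $q_i$ is consistent with the insertion schedule based on the (a priori unknown) values of $A[j]$ computed on the fly. Once this invariant is maintained, correctness follows from Lemmas~\ref{lem:line-structure} and~\ref{lem:line-interesting-agents}, and the total running time is $O(n + k\log k)$.
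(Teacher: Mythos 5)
Your proposal is correct and follows essentially the same route as the paper: both reduce the Case~2.c) minimization to lower-envelope evaluations over a dynamically changing set of lines $f_j$, resolve these via point-line duality and the Brodal--Jacob dynamic planar convex hull with insertions triggered by $A[j]=j$ and lazy deletions driven by the monotone increase of $q_i$, and finish with the linear-time scan of Lemma~\ref{lem:line-right-agents}. The only detail the paper spells out that you elide is how to dualize the vertical query line $x=q_i$ (the paper reflects coordinates first), but your formulation as an extreme-point/tangent query in a direction determined by $q_i$ handles this equivalently.
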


\begin{proof}
	We formulate our optimization problem for agent $i$ in Case $(2.c)$ as the following equivalent geometric problem:
	Given up to $k$ linear equations $\ell_j$: $y=a_j \cdot x+b_j$ with $a_j > 0$\footnote{Note that it is without loss of generality 
	that we assume $a_j = \pw_j \neq 0$ here, as we have $0 \leq \pw_i < \pw_j$.} 
	and a vertical query line $\ell$: $x=q_i$, find the lowest intersection point between $\ell$ and any of the $\ell_j$ 
	(see Figure~\ref{fig:time:primal-dual} (left)). As before, we want to apply point-line duality, 
	where each line $y=a \cdot x+b$ is mapped to a point $(a, -b)$ and vice versa.
	However, as the vertical line $\ell$ does not have a (finite) dual representation, 
	we first swap the coordinates, reflecting the whole configuration at axis $y=x$ (see Figure~\ref{fig:time:primal-dual} (center)).
	We get $\ell_i$ as $x = \frac{y}{a_i}-\frac{b_i}{a_i}$ and its dual 
	$\ell^*_i$ as $\smash{\left(\frac{1}{a_i}, \frac{b_i}{a_i}\right)}$ and can now also map 
	$\ell$: $x=q_i$ to $\ell^*$: $(0,-q_i)$ (see Figure~\ref{fig:time:primal-dual} (right)).
	
	In this dual setting, the task is to find the line of smallest slope (= leftmost point in the mirrored primal) 
	that goes through $\ell^*$ and one of the $\ell^*_j$. This corresponds to asking for the right tangent through $\ell^*$ 
	onto the convex hull of the point set $\{\ell_j' \mid q_i < p_j < p_i \text{ and } A[j]=j \}$. 
	We can find $\ell_j^*$ itself with an extreme point query to get $A'[i] = j$.
 	Each operation takes (amortized) time $\bigO(\log k)$, and as we perform at most $4k$ operations overall, 
	we get a total running time of $\bigO(k \log k)$.
\end{proof}

%% NP-hardness
\section{NP-hardness of \CDelivery}

\NP-hardness of \TWDelivery followed from instances where $\T$ and $\E$ were lower bounded, 
i.e. $\T \geq \T^*$ for some $\T^*$ and for every schedule with $\T=\T^*$, $\E \geq \E^*(\T^*)$ for some $\E^* = \E^*(\T^*)$ depending on $\T^*$.
Finding a delivery schedule with time and energy $(\T^*,\E^*)$ turned out to be equivalent to solving a corresponding instance of \psat.
By scaling the weights of all agents by a factor $\delta$, this turns into finding a delivery schedule with time and energy $(\T^*,\delta\E^*)$.

Set $\delta := \epsilon/8$. Consider any delivery schedule with time/energy $(\T,\E) \neq (\T^*, \delta\E^*)$. 
Either we have $\E > \delta\E^*$ and thus $\epsilon\T+(1-\epsilon)\E > \epsilon\T^*+(1-\epsilon)\delta\E^*$ immediately.
Or we have $\E < \delta\E^*$. 
In this case, we let the fast agents of velocity $2$ and weight $\delta$ cover less than the original distance $2xy$ in Lemma~\ref{lem:delivery-sat}.
This saves an energy amount of $z\cdot \delta$ for some $z, 0<z\leq 2xy$. 
However, we have to replace this distance by either a slow agent, or by letting the fastest agent of velocity $8$ do more work. 
Either way, we increase the delivery time by at least $z/8$, and we again get
\[
	\epsilon\T+(1-\epsilon)\E	 \geq \epsilon(\T^*+z/8) + (1-\epsilon)(\delta\E^*-z\delta) \\
 = \epsilon\T^*+(1-\epsilon)\delta\E^* + \underbrace{z(\epsilon/8 - (1-\epsilon)\delta)}_{>0}. \]
Thus, finding a delivery schedule of minimum $\epsilon\T+(1-\epsilon)\E$ yields a delivery schedule 
with delivery time and energy consumption $(\T^*,\delta\E^*)$ from which we can infer a satisfiable assignment as in Lemma~\ref{lem:delivery-sat}.
We conclude:
\vspace{2ex}

\begin{theorem*}[\ref{thm:combination-hardness}]
\CDelivery is \NP-hard for all $\epsilon\in(0,1)$, even on planar graphs.
\end{theorem*}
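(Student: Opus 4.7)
The plan is to obtain NP-hardness of \CDelivery by reducing from \psat using essentially the same construction as the \NP-hardness proof of \TWDelivery (Theorem~\ref{thm:fast-efficient-hardness}), but with all agent weights scaled by a carefully chosen factor $\delta = \delta(\epsilon) > 0$. Recall that in that construction, an instance of \psat on $x$ variables and $y$ clauses is transformed into a delivery graph $G(F)$ such that every schedule with delivery time $\T = \T^* := 32x^2y^2$ has energy $\E \geq \E^* := 2xy$, with equality if and only if $F$ is satisfiable. After scaling weights by $\delta$, the target pair becomes $(\T^*, \delta \E^*)$, and satisfiability corresponds exactly to realizing this pair.

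First I would verify that the minimum-time structure of the reduction is unaffected by scaling the weights: the proofs of Lemmas~\ref{lem:minimum-time}--\ref{lem:delivery-sat} bound $\T$ by purely kinematic arguments on edge lengths and velocities, so scaling weights neither changes the minimum delivery time $\T^*$ nor which agents are needed to meet it. In particular, the minimum energy consumed by any schedule achieving $\T = \T^*$ scales by $\delta$, and the ``satisfying'' schedule exhibited in Lemma~\ref{lem:sat-delivery} has energy exactly $\delta \cdot 2xy$.

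The heart of the proposal is choosing $\delta$ so that the convex combination $\epsilon\T + (1-\epsilon)\E$ is strictly minimized by a schedule of type $(\T^*, \delta \E^*)$ whenever $F$ is satisfiable, and by no schedule otherwise. I will set $\delta := \epsilon / 8$ (or more generally $\delta < \epsilon / (8(1-\epsilon))$). The analysis then splits into two cases for any feasible schedule $S$: either $\E(S) \geq \delta \E^*$, in which case pair-wise comparison with $(\T^*, \delta \E^*)$ is immediate; or $\E(S) < \delta \E^*$, which forces $S$ to replace some distance $z \in (0, 2xy]$ of ``fast'' transport by either slow agents or extra work of the very fast agent ($\pv=8$). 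In either replacement, a careful look at the construction shows that the time increases by at least $z/8$, so that the combined objective grows by at least $\epsilon(z/8) - (1-\epsilon)(z\delta) > 0$ by the choice of $\delta$. Thus the optimum of $\epsilon\T + (1-\epsilon)\E$ is attained only at $(\T^*, \delta\E^*)$, and hence only if $F$ is satisfiable.

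The main obstacle is the second case: showing that any deviation that saves an amount $z\delta$ of energy necessarily costs at least $z/8$ in delivery time. This requires reusing the structural observations from the proof of Lemma~\ref{lem:delivery-sat}, notably that the cheap long edges of length $4xy-1$ can only be traversed by slow agents (weight $0$) without blowing up the total time, and that any ``fast'' distance that is skipped must either be covered by a slow agent (which costs time on the left of $u_{x+1}$) or by extending the travel of the agent with $\pv=8$ (whose round-trip cost is $1/8$ per unit). Once this trade-off is pinned down, the rest is a direct algebraic comparison, and polynomiality and planarity of the reduction follow immediately from the corresponding properties of the \TWDelivery reduction since $\delta$ is a constant that depends only on the fixed input $\epsilon$.
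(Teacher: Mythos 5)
Your proposal is correct and follows essentially the same route as the paper's own proof: scaling all weights by $\delta=\epsilon/8$, splitting on whether a schedule's energy exceeds or falls below $\delta\E^*$, and in the latter case charging each unit of saved fast-agent distance a time increase of at least $1/8$ so that the combined objective strictly increases. No substantive differences to report.
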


\end{document}